\newcommand{\indep}{\rotatebox[origin=c]{90}{$\models$}}
\newtheorem{Proposition}{Proposition}
\newtheorem{Lemma}{Lemma}
\newtheorem{Theorem}{Theorem}
\newtheorem{Assumption}{Assumption}
\newcommand*{\rom}[1]{\expandafter\@slowromancap\romannumeral #1@}
\title{\LARGE\bf Increasing Power for Observational Studies of Aberrant Response: An Adaptive Approach}
\date{}
\author[1]{Siyu Heng}
\author[2]{Hyunseung Kang}
\author[1]{Dylan S. Small}
\author[3,$\dagger$]{Colin B. Fogarty}
\affil[1]{\it University of Pennsylvania}
\affil[2]{\it University of Wisconsin-Madison}
\affil[3]{\it Massachusetts Institute of Technology}
\begin{document}

\sectionfont{\bfseries\large\sffamily}%
%

\subsectionfont{\bfseries\sffamily\normalsize}%
%


\def\spacingset#1{\renewcommand{\baselinestretch}%
{#1}\small\normalsize} \spacingset{1}


\maketitle

\let\thefootnote\relax\footnotetext{$\dagger$ The first two authors contributed equally to this work.}
\let\thefootnote\relax\footnotetext{\textit{Address for correspondence:} Siyu Heng, Graduate Group in Applied Mathematics and Computational Science, University of Pennsylvania, David Rittenhouse Laboratory, 209 South 33rd Street, Philadelphia, PA 19104, USA. E-mail: \textsf{siyuheng@sas.upenn.edu}}

\begin{abstract}
In many observational studies, the interest is in the effect of treatment on bad, aberrant outcomes rather than the average outcome. For such settings, the traditional approach is to define a dichotomous outcome indicating aberration from a continuous score and use the Mantel-Haenszel test with matched data. For example, studies of determinants of poor child growth use the World Health Organization’s definition of child stunting being height-for-age z-score $\leq -2$. The traditional approach may lose power because it discards potentially useful information about the severity of aberration. We develop an adaptive approach that makes use of this information and asymptotically dominates the traditional approach. We develop our approach in two parts. First, we develop an aberrant rank approach in matched observational studies and prove a novel design sensitivity formula enabling its asymptotic comparison with the Mantel-Haenszel test under various settings. Second, we develop a new, general adaptive approach, \textit{the two-stage programming method}, and use it to adaptively combine the aberrant rank test and the Mantel-Haenszel test. We apply our approach to a study of the effect of teenage pregnancy on stunting. 
\end{abstract}

\noindent%
{\it Keywords:}  Aberrant rank; Causal inference; Design sensitivity; Optimization; Sensitivity analysis; Super-adaptivity.
\vfill

\spacingset{1.45} 

\section{Introduction}\label{sec:intro}

\pagenumbering{arabic}

\subsection{Examples of settings where interest is in the effect of treatment on aberrant response not average response}\label{sec:examples}

When evaluating the relative merits of competing treatment regimens, it can sometimes be more appropriate to focus on the effect of the treatment on poor outcomes (aberrant responses) rather than average outcomes. For example, malnutrition in children can cause both short- and long-term negative health outcomes and has been a long-standing global concern. According to the 2018 \textit{Global Nutrition Report}, undernutrition contributes to around $45\%$ of deaths among children under five. In studies on the effect of an exposure on child malnutrition, the most commonly used measurements of malnutrition are (1) stunting, (2) wasting, and (3) underweight. Stunting is defined as a child having a height less than or equal to 2 standard deviations below the mean height for the child's age (i.e. height-for-age z-score $\leq -2$), where the mean and standard deviation come from a reference population such as the World Health Organization (WHO) Multicenter Growth Reference Study (WHO, 2006). Similarly, wasting and underweight are defined as weight-for-age z-score $\leq -2$ and weight-for-height z-score $\leq -2$ respectively; see WHO (1986), Harris et al. (2001) and Bloss et al. (2004). When studying causal determinants of malnutrition, say stunting, researchers typically focus on the pattern of stunting instead of the average treatment effect on the height of children. This is because being slightly below the average height will not cause any serious problems, but stunted growth can lead to adverse consequences for the child including poor cognition and educational performance, low adult wages and lost productivity (WHO, 2017). The standard approach in studies of causal determinants of malnutrition is to consider stunting, wasting or underweight as binary outcomes, and to test the null that the treatment (potential causal determinant) does not affect that binary outcome for each individual through either Fisher's exact test for unstratified data or the Mantel-Haenszel test with stratified data (e.g., Brown et al., 1982; Walker et al., 1991; Bloss et al., 2004; Garrett and Ruel, 2005; Phuka et al., 2008; Null et al., 2018).

Numerous causal problems share a similar structure with that of the causal determinants of malnutrition, where we care about whether a certain treatment would change the pattern of some aberrant response (e.g. stunted growth) rather than the average treatment effect over the whole population; see Appendix G in the online supplementary materials for more examples. Rosenbaum and Silber (2008) referred to this as the aberrant effects of treatment problem. When studying aberrant effects, researchers typically choose a widely-used cut-off to define a dichotomous outcome (e.g. stunted or not; wasted or not) from a continuous response (e.g. height-for-age z-score; weight-for-age z-score), and then perform Fisher's exact test or the Mantel-Haenszel test. These traditional methods are both simple and convenient, but discard potentially useful information on the severity of aberrant response (e.g. exact height-for-age z-scores of children with stunted growth) and thus may fail to detect existing aberrant effects of the treatment.

\subsubsection{A matched observational study on the effect of teenage pregnancy on stunting}\label{example}

It has been realized that mothers who give birth at a very early age are more likely to have stunted children, and some studies have tried to investigate the potential causal relationship between teenage pregnancy and child stunting (e.g., Van de Poel et al., 2007; Darteh et al., 2014). We examine this causal problem with children's level data from the Kenya 2003 Demographic and Health Surveys (DHS). According to the Constitution of Kenya, ``adult” means an individual who has attained the age of eighteen years. Therefore, we define children with mother's age $\leq$ 18 years as treated individuals, and children with mother's age $\geq 19$ years as controls. We then take their height-for-age z-scores as the outcomes, where the z-score is with respect to the WHO Multicenter Reference Growth Study (WHO, 2006). Height-for-age z-scores are expressed in units equal to one standard deviation of the reference population's distribution. Recall that according to the WHO, low child height-for-age, or ``stunting", is defined as height-for-age z-score $\leq -2$. We conduct a matched observational study where we match each treated individual with controls for seven covariates: mother's highest education level; geographic district; household wealth index in quantiles; household's main source of drinking water; household's toilet facilities; sex; and children's age in years. Matching is a transparent and easily understandable way of adjusting for observed covariates and has been widely applied in observational studies (Rosenbaum 2002b; Hansen, 2004; Rubin, 2006; Stuart, 2010; Zubizarreta, 2012; Pimentel et al., 2015). We discarded 1466 records with missing or unspecified height-for-age z-scores, source of drinking water or toilet facilities from Kenya 2003 DHS data, leaving 4483 records. Among these 4483 children, there are 150 treated individuals and we matched each to three controls, 450 controls in total. Mother's education is categorized as no education, primary, secondary and higher. Geographic district is coded as eight dummy variables with respect to eight districts in Kenya: Central, Coast, Eastern, Nairobi, Northeastern, Nyanza, Riftvalley and Western. We use a similar method to Fink et al. (2011) to code quality of source of drinking water and toilet facilities. We used optimal matching using rank-based Mahalanobis distance with a propensity score caliper (Hansen and Klopfer, 2006). To evaluate the balance on baseline covariates before and after matching, we use standardized differences which are defined as a weighted difference in means divided by the pooled standard deviation between the treated and control groups before matching. The absolute standardized differences are all close to zero after matching, indicating good balance; see Figure~\ref{loveplot}.

\begin{figure}
\centering
    \caption{Covariate imbalances before and after matching with three controls. The plot reports the absolute standardized difference before and after matching of each covariate. The two dotted vertical lines are 0.1 and 0.2 cut-offs.}
      \label{loveplot}
  \includegraphics[width=0.6\linewidth]{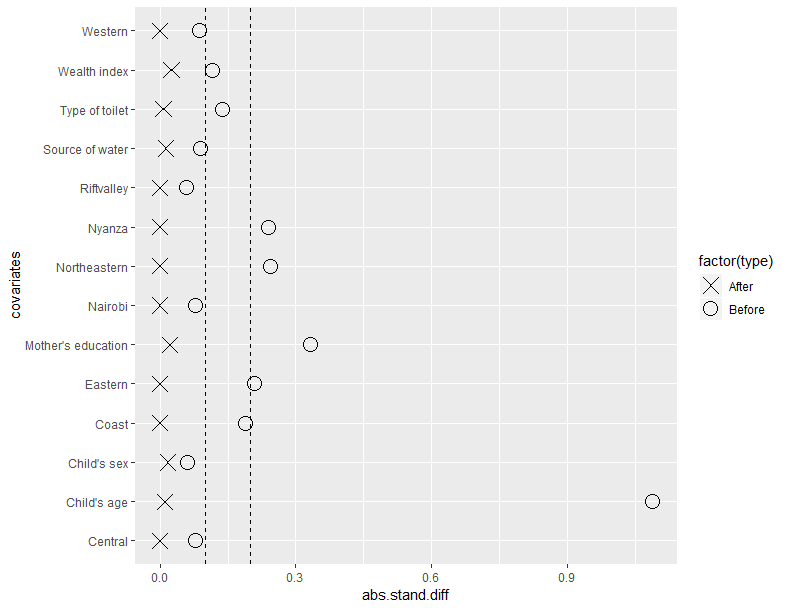}
\end{figure}

\subsection{Our contributions}\label{sec: our contribution}

Previous work on inference for aberrant effects of treatment has considered randomized trials where there is no unmeasured confounding by design (Rosenbaum and Silber, 2008). In an observational study, we typically worry about unmeasured confounding and would like to have an approach that has good power to detect an effect that is insensitive to a moderate amount of unmeasured confounding (Rosenbaum, 2004, 2010a). In this paper, we develop an adaptive approach for inference about aberrant treatment effects from matched observational studies that asymptotically uniformly dominates the traditional approach of performing the Mantel-Haenszel test based on a dichotomous outcome of aberrant/not aberrant. 

Our new approach is developed in two parts. In the first part, we introduce the aberrant null hypothesis of no treatment effect for matched studies which is especially suitable for studying aberrant treatment effects, and then introduce the aberrant rank test for matched studies along with its sensitivity analysis and study its asymptotic power. The aberrant rank test takes the form of the sum of aberrant ranks among all the treated units, with the Wilcoxon rank sum test as a special case. It is more powerful than the Mantel-Haenszel test for testing the aberrant null of no treatment effect in many settings because it considers not only the incidence of aberrant response, but also the severity of aberration. We formally demonstrate this through proving a novel design sensitivity formula. Design sensitivity measures the limiting robustness of a test to hidden bias in an observational study as the sample size increases: larger design sensitivity corresponds to greater asymptotic robustness to hidden bias (Rosenbaum, 2004, 2010a). Our new design sensitivity formula allows us to asymptotically compare the performances of the aberrant rank test and the Mantel-Haenszel test for testing the aberrant null under various settings. We also validate that our asymptotic findings provide good guidance for realistic sample sizes in simulation studies. We illustrate that whether we should use the aberrant rank test or the Mantel-Haenszel test depends on the unknown data generating process, and making the wrong choice can substantially harm the performance of a sensitivity analysis. The proof of the design sensitivity formula involves a new technique that uses empirical process to analyze the asymptotics of matched observational studies as the number of matched strata grows and can be of independent interest.

In the second part, we develop a novel, general adaptive approach called ``the two-stage programming method" to combine two tests in observational studies such that the design sensitivity of the resulting adaptive test is always greater than or equal to maximum of the design sensitivities of the two component tests performed in isolation, regardless of the underlying data generating distribution. We refer to this newly discovered phenomenon as ``super-adaptivity". Thus, applying our new adaptive approach to combine the aberrant rank test and the Mantel-Haenszel test uniformly dominates the traditional approach based solely upon the Mantel-Haenszel test in terms of the design sensitivity. We evaluate our adaptive test via simulations and show that under various settings it achieves close to the better one of its components (i.e., the aberrant rank test and the Mantel-Haenszel test) for realistic sample sizes, and therefore avoids potentially drastic reductions in power stemming from making the wrong choice between the two component tests.

The first adaptive approach for sensitivity analysis was introduced in Rosenbaum (2012). It has been applied to or modified for various settings (e.g., Zubizarreta et al., 2014; Rosenbaum, 2015; Rosenbaum and Small, 2017; Ertefaie et al., 2018; Zhao et al., 2018; Shauly‐Aharonov, 2020). Our new adaptive approach goes beyond this traditional adaptive approach in two aspects. First, the traditional adaptive approach only works if all of the component test statistics are stochastically dominated by a known distribution in a matched observational study. While commonly used tests statistics for binary outcomes or general outcomes in pair-matched studies have this property, most commonly used test statistics for general outcomes in matched studies that allow for multiple controls or full matching do not have this stochastic dominance property, including the aberrant rank test, the Wilcoxon rank sum test, the Hodges-Lehmann aligned rank test and the Huber-Maritz m-tests (Gastwirth et al., 2000; Rosenbaum, 2002b, 2007b). In contrast, our new adaptive approach covers most of the existing testing scenarios in matched studies as it works for any sum statistics and various matching techniques, including pair matching, matching with multiple controls and full matching. Second, our new adaptive approach uniformly dominates the traditional approach in terms of the design sensitivity, which is an immediate consequence of the super-adaptivity property enjoyed by the new adaptive approach.

\section{Notation and reviews}

\subsection{Matching-based randomization inference and sensitivity analysis}

Suppose there are $I$ strata $i=1,\dots,I$. Each stratum contains $m$ ($m\geq 2$) individuals (e.g. children) where one individual received treatment and the other $m-1$ individuals received control. Let $Z_{ij}=1$ if individual $j$ in stratum $i$ received treatment (e.g. mother's age $\leq$ 18 years), otherwise let $Z_{ij}=0$ (e.g. mother's age $\geq$ 19 years). Denote the collection of treatment assignments as $\mathbf{Z} = (Z_{11}, \dots, Z_{Im})^{T}$. Let $\mathcal{Z}$ be the set of all possible values of $\mathbf{Z}$ where $\mathbf{Z} \in \mathcal{Z}$ if and only if $\sum_{j=1}^{m}Z_{ij}=1$ for all $i$. Let $|S|$ denote the number of elements of a finite set $S$, then we have $|\mathcal{Z}|=m^{I}$. Let $\mathbf{x}_{ij}$ and $u_{ij}$ denote the observed covariates and an unobserved covariate respectively for each individual $j$ in stratum $i$. Typically, each stratum $i$ is formed by matching on the observed covariates such that $\mathbf{x}_{ij}= \mathbf{x}_{ij^{\prime}}$ or $\mathbf{x}_{ij}\approx \mathbf{x}_{ij^{\prime}}$. However, matching cannot directly adjust for the unobserved covariate, so $u_{ij} \neq u_{i j^{\prime}}$ is possible. Under the potential outcomes framework, if individual $j$ in stratum $i$ received treatment ($Z_{ij}=1$), we observe the potential response $r_{Tij}$; otherwise ($Z_{ij}=0$), we observe $r_{Cij}$. That is, the observed response (e.g., the observed height-for-age z-score) for individual $ij$ is $R_{ij}=Z_{ij}r_{Tij}+(1-Z_{ij})r_{Cij}$ (Neyman, 1923; Rubin, 1974). Write $\mathcal{F}=\{(r_{Tij}, r_{Cij}, \mathbf{x}_{ij}, u_{ij}), \ i=1,\dots,I, \ j=1,\dots,m\}$. Denote the collection of responses as $\mathbf{R}=(R_{11}, \dots, R_{Im})^{T}$. Fisher's sharp null hypothesis of no treatment effect asserts that $H_{0}: r_{Tij}=r_{Cij},  \ \forall \ i, j$. 

In a randomized experiment, where we can assume random treatment assignment in each stratum, i.e., $\mathbb{P}(\mathbf{Z}=\mathbf{z}\mid \mathcal{F}, \mathcal{Z})=1/|\mathcal{Z}|=m^{-I}$ for all $\mathbf{z} \in \mathcal{Z}$, the significance level of a test statistic $T$ being greater than or equal to the observed value $t$ under the null hypothesis can be calculated via randomization inference:
\begin{equation}\label{test:permutation}
	\mathbb{P}(T \geq t\mid \mathcal{F}, \mathcal{Z})=\sum_{\mathbf{z}\in \mathcal{Z}}\mathbbm{1}\{T(\mathbf{z}, \mathbf{R}) \geq t\} \cdot \mathbb{P}(\mathbf{Z}=\mathbf{z}\mid \mathcal{F}, \mathcal{Z})=\frac{|\{\mathbf{z}\in \mathcal{Z}: T(\mathbf{z}, \mathbf{R}) \geq t\}|}{|\mathcal{Z}|},
\end{equation}
where $\mathbbm{1}(A)=1$ if $A$ is true, and $\mathbbm{1}(A)=0$ otherwise. For large $I$, (\ref{test:permutation}) can be approximated via asymptotic normality of the null distribution of $T$ (Rosenbaum, 2002b).

In an observational study, however, it is unrealistic to assume that the treatment is randomly assigned in each stratum even if we have matched on all the observed covariates, due to the possible presence of unobserved covariates (unmeasured confounders). A sensitivity analysis tries to determine how departures from random assignment of treatment would affect inferences on treatment effects. Let $\pi_{ij}=\mathbb{P}(Z_{ij}=1\mid \mathcal{F})$ denote the probability that, in the population before matching, individual $ij$ will receive treatment. We follow the widely-used Rosenbaum sensitivity analysis framework (Rosenbaum, 2002b) which considers a logit model linking $\pi_{ij}$ to $\mathbf{x}_{ij}$ and normalized $u_{ij}$: 
\begin{equation} \label{eqn: rosenbaum sens model}
    \log \left(\frac{\pi_{ij}}{1-\pi_{ij}} \right)=\theta(\mathbf{x}_{ij})+\gamma u_{ij}, \quad \text{where $u_{ij}\in [0,1]$,}
\end{equation}
where $\theta(\mathbf{x}_{ij})$ is an arbitrary unknown function of $\mathbf{x}_{ij}$ and $\gamma\geq 0$ is a sensitivity parameter. The assumption $u_{ij}\in [0,1]$ is no more restrictive than assuming a bounded support of $u_{ij}$ and is only imposed to make $\gamma$ more interpretable (Rosenbaum, 2002b). Under (\ref{eqn: rosenbaum sens model}), if we imagine that each $(r_{Tij}, r_{Cij})$ is drawn from some super-population model (only for interpretation, not necessary for randomization inference) and is confounded by the set of covariates $(\mathbf{x}_{ij}, u_{ij})$, then the strong ignorability assumption (Rosenbaum and Rubin, 1983) would hold were $(\mathbf{x}_{ij},u_{ij})$ to all be measured (but in fact $u_{ij}$ is not measured), i.e., we have $(r_{Tij}, r_{Cij}) \indep Z_{ij} \mid \mathbf{x}_{ij}, u_{ij}$ and $0<\mathbb{P}(Z_{ij}=1\mid \mathbf{x}_{ij}, u_{ij})<1$ for all $i, j$. Here $u_{ij}$ can also represent an aggregate measurement of all potential, perhaps more than one, unmeasured confounders $u_{ij1}, u_{ij2}, \dots$. For example, if $\log \{ \pi_{ij}/(1-\pi_{ij})\}=\theta(\mathbf{x}_{ij})+g(u_{ij1}, u_{ij2}, \dots)$ for some function $g$ with bounded support $[0, \xi]$, then (\ref{eqn: rosenbaum sens model}) holds with setting $u_{ij}=\xi^{-1}g(u_{ij1}, u_{ij2},\dots)$ and $\gamma=\xi$.

Under model (\ref{eqn: rosenbaum sens model}), it is straightforward to show that for any two individuals $ij$ and $ij^{\prime}$ within the same stratum, the ratio of their odds of receiving the treatment is bounded by the sensitivity parameter $\Gamma=\exp(\gamma) \geq 1$:
\begin{equation*}
	\frac{1}{\Gamma}\leq \frac{\pi_{ij}(1-\pi_{ij^{\prime}})}{\pi_{ij^{\prime}}(1-\pi_{ij})}\leq \Gamma, \quad \text{for all $i, j, j^{\prime}$ with $\mathbf{x}_{ij}=\mathbf{x}_{ij^{\prime}}$}.
\end{equation*}
Note that $\Gamma=1$ is equivalent to random assignment. The more $\Gamma$ departs from 1, the more the treatment assignment potentially departs from random assignment. It is then easy to show that model (\ref{eqn: rosenbaum sens model}) implies the following biased treatment assignment probability after matching, assuming $\mathbf{x}_{ij}=\mathbf{x}_{ij^{\prime}}$: 
\begin{equation*}
	\mathbb{P}(\mathbf{Z}=\mathbf{z}\mid \mathcal{F}, \mathcal{Z})=\prod_{i=1}^{I}\frac{\exp(\gamma \sum_{j=1}^{m}z_{ij}u_{ij})}{\sum_{j=1}^{m}\exp(\gamma u_{ij})}, \quad \mathbf{z} \in \mathcal{Z}, \ 0 \leq u_{ij} \leq 1.
\end{equation*}
Then  researchers usually look at \textit{the worst-case p-value}, which is defined as the largest p-value given the sensitivity parameter $\Gamma$ over all possible arrangements of unmeasured confounders $u_{ij}$. For example, for a one-sided test, the worst-case p-value reported by a test statistic $T$ given its observed value $t$ with the sensitivity parameter $\Gamma=\exp(\gamma)$ is $\max_{0\leq u_{ij} \leq 1}\mathbb{P}(T \geq t \mid \mathcal{F}, \mathcal{Z})=\max_{0 \leq u_{ij}\leq 1}\sum_{\mathbf{z}\in \mathcal{Z}}\mathbbm{1}\{T(\mathbf{z}, \mathbf{R}) \geq t\} \cdot \mathbb{P}(\mathbf{Z}=\mathbf{z} \mid \mathcal{F}, \mathcal{Z})$. In practice, researchers gradually increase $\Gamma$, compute the worst-case p-value for each $\Gamma$, and report the \textit{sensitivity value}, which is defined as the largest $\Gamma$ such that the corresponding worst-case p-value exceeds some prespecified level $\alpha$ and informs the magnitude of hidden bias required to alter the causal conclusion drawn from the primary analysis assuming no unmeasured confounding (Zhao, 2018). For other models of sensitivity analysis, see Shepherd et al. (2006), McCandless et al. (2007), Mitra and Heitjan (2007), Hosman et al. (2010), Keele and Quinn (2017), VanderWeele and Ding (2017) and Zhao (2018).


\subsection{Power of a sensitivity analysis and design sensitivity}

The power of a test is the probability that the test will successfully reject the null hypothesis and is calculated under some alternative. In parallel, the power of a sensitivity analysis is the probability that the test will correctly reject the null, for any possible arrangements of unmeasured confounders given some $\Gamma \geq 1$, under some alternative. To be more specific, for a fixed $\Gamma$, the power of an $\alpha$ level sensitivity analysis using a test statistic $T$ is calculated as the probability that the worst-case p-value corresponding to $T$ falls below $\alpha$ when conducting a sensitivity analysis at $\Gamma$. When calculating the power, we need to specify a data generating process for the alternative. Following previous work, we consider power under the alternative of a ``favorable situation" that there is a treatment effect of a specified magnitude and no hidden bias (Rosenbaum, 2004, 2010a). Even though there is no hidden bias in this favorable situation, we would typically not know that for sure in an observational study and would prefer a test statistic with a higher power of sensitivity analysis for plausible values of $\Gamma \geq 1$ (i.e., with high degree of insensitivity to hidden bias). This strategy of calculating the power is more appropriate than those assuming alternatives of both a treatment effect and a bias in treatment assignment. For example, suppose that we instead use the alternative of a small treatment effect and a large bias in treatment assignment, then a test statistic that has a high chance of rejecting the null hypothesis of no treatment effect (i.e., high statistical power) with small or moderate $\Gamma$ may not be favorable because we cannot tell if its high statistical power results from its detection of the actual treatment effect or the underestimation of the magnitude of hidden bias. This ambiguity will not occur when calculating the power under the favorable situation in which the researchers always seek to reject the null under plausible $\Gamma$. Therefore, calculating the power of a sensitivity analysis under the favorable situation provides a logically consistent way to compare competing test statistics in observational studies. See Hansen et al. (2014) and Rosenbaum (2017; Chapter 10) for detailed discussion on this.


Typically, under some regularity assumptions (varying with different matching structures and tests) on the data generating process of responses $\mathbf{R}$, there is a number $\widetilde{\Gamma}$ called the design sensitivity, such that as the sample size $I \rightarrow \infty$, the power of a sensitivity analysis goes to 1 if the analysis is performed with $\Gamma < \widetilde{\Gamma}$, and the power goes to 0 if performed with $\Gamma > \widetilde{\Gamma}$. That is, $\widetilde{\Gamma}$ refers to the sharp transition of consistency of a test in a sensitivity analysis (Rosenbaum, 2004). The design sensitivity gives us a powerful and elegant tool to asymptotically compare two test statistics or two study designs under each data distribution model - the test or the study design with a larger $\widetilde{\Gamma}$ is asymptotically more robust to unmeasured confounders. Besides its mathematical elegance, the design sensitivity has been shown to be a powerful tool in practical studies (Stuart and Hanna, 2013; Zubizarreta et al., 2013).

\section{The traditional approach: the Mantel-Haenszel test}\label{sec:M-H test}

In settings such as those described in Section~\ref{sec:examples}, there is a subset $\mathcal{A} \subset \mathbb{R}$ such that any $R_{ij} \in \mathcal{A}$ is considered as an aberrant response, and researchers care about whether the treatment would change the pattern of aberrant response instead of the average treatment effect. In these settings, a typical approach is to define a dichotomous outcome $\widetilde{R}_{ij}=\mathbbm{1}(R_{ij}\in \mathcal{A})$ indicating whether individual $ij$ had an aberrant outcome or not. For example, in Section~\ref{example} where we focus on whether child $ij$ showed stunted growth (i.e. $R_{ij}\leq -2$), we can let $\mathcal{A}=(-\infty, -2]$ and the dichotomous observed outcome indicating stunted growth $\widetilde{R}_{ij}=\mathbbm{1}(R_{ij}\leq -2)$ is binary. Let $\widetilde{r}_{Tij}=\mathbbm{1}(r_{Tij}\in \mathcal{A})$ and $\widetilde{r}_{Cij}=\mathbbm{1}(r_{Cij}\in \mathcal{A})$, we have $\widetilde{R}_{ij}=\mathbbm{1}(R_{ij}\in \mathcal{A})=Z_{ij} \mathbbm{1}(r_{Tij}\in \mathcal{A})+(1-Z_{ij}) \mathbbm{1}(r_{Cij}\in \mathcal{A})=Z_{ij}\widetilde{r}_{Tij}+(1-Z_{ij}) \widetilde{r}_{Cij}$. Then researchers focus on a categorized Fisher's sharp null of no treatment effect $\widetilde{H}_{0}: \widetilde{r}_{Tij}=\widetilde{r}_{Cij}, \ \forall \ i,j$, that is, whether individual $ij$ would show aberrant response or not will not be affected by whether he or she received the treatment or not. Note that $\widetilde{H}_{0}$ does not imply any information about the severity of aberration. It is clear that if $H_{0}$ holds true, so does $\widetilde{H}_{0}$, and if $\widetilde{H}_{0}$ is false, so is $H_{0}$. 

The traditional approach then performs the Mantel-Haenszel test (Mantel and Haenszel, 1959), which can be regarded as an analogue of Fisher's exact test when there are two or more stratum, $I \geq 2$. Formally, the Mantel-Haenszel test utilizes the statistic $T_{\text{M-H}}=\sum_{i=1}^{I}\sum_{j=1}^{m}Z_{ij}\mathbbm{1}(R_{ij}\in \mathcal{A})=\sum_{i=1}^{I}\sum_{j=1}^{m}Z_{ij}\widetilde{R}_{ij}$, which is the number of aberrant responses among treated individuals. For matched pairs, the Mantel-Haenszel test reduces to McNemar's test (Cox, 2018). In a randomized experiment, we can use (\ref{test:permutation}) to conduct randomization inference. In an observational study, we can use the related result in Rosenbaum (2002b; Chapter 4) to conduct sensitivity analyses: under matching with $m-1$ controls, for any $t$, the one-sided worst-case p-value $\max_{0 \leq u_{ij}\leq 1} \mathbb{P}(T \geq t\mid \mathcal{F}, \mathcal{Z}) = \mathbb{P}(T^{+} \geq t \mid \mathcal{F}, \mathcal{Z})$ where $T^{+}$ is the sum of $I$ independent Bernoulli random variables $B_{1}, \dots, B_{I}$, with $B_{i}$ taking value one with probability $p_{i}^{+}=\{\Gamma \sum_{j=1}^{m}\mathbbm{1}(R_{ij} \in \mathcal{A})\}/\{(\Gamma-1) \sum_{j=1}^{m}\mathbbm{1}(R_{ij} \in \mathcal{A})+m\}$ and value zero with probability $1-p_{i}^{+}$. Therefore, we have as $I\rightarrow \infty$, for each fixed $t$,
\begin{equation}\label{sensianalysisforM-H}
   \max_{0 \leq u_{ij}\leq 1} \mathbb{P}(T \geq t\mid \mathcal{F}, \mathcal{Z}) = \mathbb{P}(T^{+} \geq t\mid \mathcal{F}, \mathcal{Z}) \simeq 1-\Phi \Bigg(\frac{t-\sum_{i=1}^{I}p_{i}^{+}}{\sqrt{\sum_{i=1}^{I}{p_{i}^{+}(1-p_{i}^{+})}}}\Bigg),
\end{equation}
where $\Phi$ is the distribution function of standard normal distribution and `$\simeq$' denotes that two sequences are asymptotically equal. We can then use (\ref{sensianalysisforM-H}) to report worst-case p-values for various sensitivity parameters $\Gamma$. The design sensitivity of the Mantel-Haenszel test has also been derived in Rosenbaum and Small (2017).

The Mantel-Haenszel test is simple and convenient but can lose power from ignoring information about the magnitude of aberration. 

\section{An aberrant rank approach and its comparison with the traditional approach}

\subsection{The aberrant null and the aberrant rank test}\label{sec: aberrant rank test}
Although $H_{0}$ and $\widetilde{H}_{0}$ are widely used null hypotheses in randomized experiments and observational studies, they do not best capture the hypotheses of interest in studying the causal determinants of aberrant response when severity of aberration matters. To better capture the hypothesis of interest, Rosenbaum and Silber (2008) introduced the aberrant null hypothesis of no effect of treatment on individuals who would have an aberrant response under either the treatment or control. Formally, as in Section~\ref{sec:M-H test}, let $\mathcal{A}$ be a subset of $\mathbb{R}$ that defines an aberrant response. Then the null hypothesis of no aberrant effect states that
\begin{equation*}
	H_{0}^{\mathcal{A}}:r_{Tij}=r_{Cij}, \ \forall \ i,j, \ \text{if either $r_{Tij} \in \mathcal{A}$ or $r_{Cij}\in \mathcal{A}$}.
\end{equation*}
It is easy to see that $H_{0}^{\mathcal{A}}$ is a weaker hypothesis than Fisher's sharp null $H_{0}$, in the sense that $H_{0}$ implies $H_{0}^{\mathcal{A}}$, but the converse is not true. And we can also see that $H_{0}^{\mathcal{A}}$ is a stronger hypothesis than the categorized Fisher's sharp null $\widetilde{H}_{0}$, in the sense that $H_{0}^{\mathcal{A}}$ implies $\widetilde{H}_{0}$, but the converse is not true. That is, $H_{0}^{\mathcal{A}}$ is a null hypothesis that lies between $H_{0}$ and $\widetilde{H}_{0}$. 

Let us consider studying a potential causal determinant of stunting to illustrate why $H_{0}^{\mathcal{A}}$ is a more appropriate null hypothesis to test when the pattern of aberration is our main focus. Note that all the alternatives can be classified into the following four cases: (i) \textbf{Case 1}: $r_{Tij} \in \mathcal{A}$, $r_{Cij} \notin \mathcal{A}$, i.e., treatment will cause stunting for child $ij$; (ii) \textbf{Case 2}: $r_{Tij} \notin \mathcal{A}$, $r_{Cij} \in \mathcal{A}$, i.e., treatment will prevent stunting for child $ij$; (iii) \textbf{Case 3}: $r_{Tij} \in \mathcal{A}$, $r_{Cij} \in \mathcal{A}$, and $r_{Tij} \neq r_{Cij}$, i.e., treatment will not prevent stunting for child $ij$, but it will affect the severity of stunting; (iv) \textbf{Case 4}: $r_{Tij} \notin \mathcal{A}$, $r_{Cij} \notin \mathcal{A}$, and $r_{Tij} \neq r_{Cij}$, i.e., child $ij$ will not show stunted growth no matter whether he or she received treatment or not. Thus, $H_{0}^{\mathcal{A}}$ is against Cases 1-3, while $H_{0}$ is against all the four cases and $\widetilde{H}_{0}$ is against only Cases 1 and 2. Our goal is to decide whether the treatment affects stunted growth. It is clear that in Cases 1 and 2, the treatment affects stunted growth (causing stunting in Case 1, preventing stunting in Case 2). Case 3 also indicates the treatment affects stunted growth, since although the treatment will not prevent a child from being stunted, it will affect the severity of stunting, i.e., it will aggravate or alleviate the child's stunting growth which could have a huge impact on the child. In Case 4, the treatment does not affect stunted growth since the child will be healthy and non-stunted no matter whether he or she is exposed to the treatment or control. Consideration of these four cases shows that $H^{\mathcal{A}}_{0}$ is a more appropriate null hypothesis than $H_{0}$ and $\widetilde{H}_{0}$ because it contains in the alternative the three cases where treatment affects stunted growth but keeps in the null the fourth case where treatment does not affect stunted growth.  

In this paper, our argument focuses on $\mathcal{A}$ with the form of $\mathcal{A}=[c,+\infty)$ (or equivalently, $(c, +\infty$)) for some $c \in \mathbb{R}$. In these settings, there is a threshold value $c$ indicating aberration, which is common in practical research. The argument works in parallel with $\mathcal{A}=(-\infty, c]$ and $\mathcal{A}=(-\infty, c)$. For example, according to the WHO, stunting is defined as height-for-age z-score $\leq -2$, and in this case $\mathcal{A}=(-\infty, c]$ with $c=-2$.

Rosenbaum and Silber (2008) introduced the aberrant rank test for randomized experiments with unmatched data, and Small et al. (2013) considered the aberrant rank in case-referent studies. In this paper, we derive a new aberrant rank test for matched observational cohort studies. We define $q(v \mid \mathbf{R})=\sum_{i^{\prime}=1}^{I}\sum_{j^{\prime}=1}^{m}\mathbbm{1}(v \geq R_{i^{\prime}j^{\prime}}\geq c)$ and refer to $q(R_{ij}\mid \mathbf{R})$ as the aberrant rank of individual $ij$. There are some features worth mentioning. First, the aberrant rank $q(R_{ij}\mid \mathbf{R})$ depends on all the responses, including those that are not in the same stratum as $R_{ij}$. Second, if individual $ij$ did not show aberrant response, $q(R_{ij}\mid \mathbf{R})$ is zero, and if he or she did show aberrant response, $q(R_{ij}\mid \mathbf{R})$ takes the rank of $R_{ij}$ among all the responses of individuals with aberrant response. Third, $q(v\mid \mathbf{R})$ is monotonic in $v$.

Next, we define the aberrant rank test for a stratified (e.g., matched) study as 
	\begin{equation}\label{def:abe}
		T_{\text{abe}}=\sum_{i=1}^{I}\sum_{j=1}^{m}Z_{ij}\ q(R_{ij}\mid \mathbf{R}),
	\end{equation}
which is the sum of all the aberrant ranks over all treated individuals. When $c=-\infty$, $T_{\text{abe}}$ reduces to the Wilcoxon rank sum test. Under the null hypothesis of no aberrant effects $H_{0}^{\mathcal{A}}$, $q(R_{ij}\mid \mathbf{R})$ is fixed. In a randomized experiment, we can use (\ref{test:permutation}) along with its asymptotic approximation to report p-values. In a sensitivity analysis, unlike the Mantel-Haenszel test, in general we cannot find a known distribution to bound the distribution of $T_{\text{abe}}$. However, under pair matching or matching with multiple controls, utilizing the asymptotic separability algorithm in Gastwirth et al. (2000), for any given $t$, we can approximate the one-sided worst-case p-value $\max_{0 \leq u_{ij} \leq 1}\mathbb{P}(T_{\text{abe}} \geq t\mid \mathcal{F}, \mathcal{Z})$ under $H_{0}^{\mathcal{A}}$. Let $b\in \{1,\dots,m-1\}=:[m-1]$, and $\overline{\overline{\mu}}_{ib}$ and $\overline{\overline{\nu}}_{ib}$ be the null expected value and variance of $\sum_{j=1}^{m}Z_{ij}\ q(R_{ij}\mid \mathbf{R})$ under $u_{i1}=\dots =u_{ib}=0$ and $u_{i,b+1}=\dots=u_{im}=1$ with different values of $b$ respectively:
\begin{equation*}
	\overline{\overline{\mu}}_{ib}=\frac{\sum_{j=1}^{b} q(R_{i(j)}\mid \textbf{R})+\Gamma \sum_{j=b+1}^{m} q(R_{i(j)}\mid \textbf{R})}{b+\Gamma(m-b)}, \quad i=1,\dots,I, \ b=1, \dots, m-1,
\end{equation*}
\begin{equation*}
	\overline{\overline{\nu}}_{ib}=\frac{\sum_{j=1}^{b} q^{2}(R_{i(j)}\mid \textbf{R})+\Gamma \sum_{j=b+1}^{m} q^{2}(R_{i(j)}\mid \textbf{R})}{b+\Gamma(m-b)}-\overline{\overline{\mu}}_{ib}^{2}, \quad i=1,\dots,I, \ b=1, \dots, m-1,
\end{equation*}
where we rearrange $R_{i1}, \dots, R_{im}$ as $R_{i(1)} \leq \dots \leq R_{i(m)}$. Let $\overline{\overline{\mu}}_{i}=\max_{b\in [m-1]}\overline{\overline{\mu}}_{ib}$, $B_{i}=\{b: \overline{\overline{\mu}}_{ib}=\overline{\overline{\mu}}_{i}, b\in [m-1] \}$ and $\overline{\overline{\nu}}_{i}=\max_{b \in B_{i}}\overline{\overline{\nu}}_{ib}$. Then the one-sided worst-case p-value can be approximated via
\begin{equation*}
	\max_{0 \leq u_{ij} \leq 1}\mathbb{P}(T_{\text{abe}} \geq t\mid \mathcal{F}, \mathcal{Z})\simeq 1-\Phi \Bigg(\frac{t-\sum_{i=1}^{I} \overline{\overline{\mu}}_{i}}{\sqrt{\sum_{i=1}^{I}\overline{\overline{\nu}}_{i}}} \Bigg) \quad \text{as $I\rightarrow \infty$}.
\end{equation*}
Letting $\xi_{\alpha}=\sum_{i=1}^{I}\overline{\overline{\mu}}_{i}+\Phi^{-1}(1-\alpha)\sqrt{\sum_{i=1}^{I}\overline{\overline{\nu}}_{i}}$, $\Psi_{\Gamma, I}=\mathbb{P}(T_{\text{abe}} \geq \xi_{\alpha}\mid \mathcal{Z})$ is the power of a one-sided $\alpha$-level sensitivity analysis conducted with $\Gamma$ of the aberrant rank test $T_{\text{abe}}$. Typically, $\Psi_{\Gamma, I}$ is computed with respect to draws from a data generating process in the favorable situation in which there is no hidden bias and there is a treatment effect.  

Recall that the aberrant null $H_{0}^{\mathcal{A}}$ is a more appropriate null hypothesis to test than both $\widetilde{H}_{0}$ and $H_{0}$ when there exists a designated cut-off for what constitutes an aberrant outcome but more severely aberrant outcomes are worse than less severely aberrant outcomes. Note that both the aberrant rank test and the Mantel-Haenszel test are valid for testing the aberrant null in the sense that they both have pivotal distributions if the aberrant null holds. When testing the aberrant null, among these two candidate tests, intuitively the aberrant rank test should be more appealing and natural to choose since its null distribution also incorporates the fact that the severity of aberration is fixed under the aberrant null, while the null distribution of the Mantel-Haenszel test under the aberrant null is the same as that under the weaker null $\widetilde{H}_{0}$ which only looks at the dichotomized outcome indicating aberration/non-aberration. However, as we will show in Sections \ref{sec: the design sennsitivity formula} and \ref{sec:computeds}, when testing the aberrant null in observational studies, although the aberrant rank test is indeed more powerful than the Mantel-Haenszel test in many cases, there also exist settings under which the Mantel-Haenszel test is instead more powerful. This motivates us to develop a new adaptive testing procedure in Section~\ref{sec:adaptive} and use it to combine the aberrant rank test and the Mantel-Haenszel test to guarantee that the resulting adaptive approach for testing the aberrant null uniformly dominates the Mantel-Haenszel test in large samples and performs well in finite samples.
 
\subsection{Design sensitivity formula of the aberrant rank test}\label{sec: the design sennsitivity formula}

There is an extensive literature on deriving design sensitivity formulas for various test statistics in matched observational studies. These design sensitivity formulas provide powerful tools for asymptotically evaluating the performances of various tests in a sensitivity analysis. However, all previous approaches either require the use of pair matching (Rosenbaum, 2010a, 2011; Hansen et al., 2014; Rosenbaum and Small, 2017; Fogarty et al., 2019; Howard and Pimentel, 2019), or require a particular structure for the test statistic to which many rank statistics do not conform (Rosenbaum, 2013, 2014, 2016). There are no design sensitivity formulas for popular test statistics, such as the Wilcoxon rank sum test and the Hodges-Lehmann aligned rank test, and the aberrant rank test discussed above; more generally, currently methods cannot handle test statistics where there are matched strata with multiple controls and ranking is done across matched strata as this induces dependence between matched strata that are typically assumed to be independent in many sensitivity analyses. In this section, we derive a novel design sensitivity formula for the aberrant rank test, of which the Wilcoxon rank sum test is a special case. Our proof technique involves applying empirical process to matched data. To the best of our knowledge, this is the first application of such machinery to the sensitivity analysis of matched observational studies and can potentially be used to study many other rank tests in the matched setting. 

We need a few regularity and causal assumptions of responses $\mathbf{R}$ under the alternative for deriving the design sensitivity. Without loss of generality, suppose that in each stratum $i$, unit $j=1$ received treatment and the rest $j=2,\dots,m$ received control. If not, we just need to simply reassign index $j=1$ to the treated in each stratum. 

\begin{Assumption}[i.i.d. strata]\label{ass:iid}
The responses from each stratum $i$ - $(R_{i1}, \dots, R_{im})$ are i.i.d. realizations from a continuous multivariate distribution $F(x_{1}, \dots, x_{m})$.
\end{Assumption}

Assumption~\ref{ass:iid} implies existence of a super-population model for the potential outcomes, which is common in deriving design sensitivity values (Rosenbaum 2004, 2010a). We remark that this assumption as well as others below are only used to derive the design sensitivity formula; they are not required for the validity of the aberrant rank test defined in Section~\ref{sec: aberrant rank test} or the adaptive approach introduced in Section~\ref{sec:adaptive}.

Suppose that $F(x_{1}, \dots, x_{m})$ has marginal cumulative distributions $F_{1}(x_{1}), \dots, F_{m}(x_{m})$ and densities $f_{1}(x_{1}), \dots, f_{m}(x_{m})$. Let $F_{(1)}, \dots, F_{(m)}$ be the associated marginal distribution with densities $f_{(1)}, \dots, f_{(m)}$ of $R_{i(1)}\leq \dots \leq R_{i(m)}$, the ordered responses within stratum $i$. The next two assumptions place regularity conditions on the marginal distributions of $f_{j}$ and its ordered counterpart $f_{(j)}$.

\begin{Assumption}[Connectedness of the support]\label{ass:nonzeromeasure}
	For $j=1,\dots,m$, let $s_{j}=\sup \{t: \mathbb{P}(R_{ij} \geq t)>0\}$ ($s_{j}$ can be $\infty$). Then $s_{j} > c$ and $f_{j}(t) > 0$ for any $t \in [c,s_j)$.
\end{Assumption}

\begin{Assumption}[`Positive' and `non-extreme' treatment effect]\label{ass:treateffect}
Let $s=\max_{j}s_{j}$. Then for any $t \in  [c,s)$, we have $\mathbb{P}(R_{i1}\geq t)\geq \frac{1}{m}\sum_{j=1}^{m}\mathbb{P}(R_{ij}\geq t)$, and there exists an open interval $\mathcal{I} \subset [c,s)$ such that strict inequality holds for any $t \in \mathcal{I}$. Moreover, $\mathbb{P}(R_{i(m)}>R_{i1}\geq t)>0$ for some $t \geq c$.
\end{Assumption}

In words, Assumption~\ref{ass:nonzeromeasure} states that aberrant responses can be observed with non-zero probability, and the support of the distribution function of each individual's aberrant response is a connected set. Assumption~\ref{ass:treateffect} states that the aberrant response of the treated is stochastically larger than the average of the distribution function of all the responses within the same stratum, i.e., $ \mathbb{P}(R_{i1}\geq t)\geq \frac{1}{m} \sum_{j=1}^{m}\mathbb{P}(R_{ij}\geq t)$. The remaining part of Assumption~\ref{ass:treateffect} is only intended to prevent design sensitivities from equaling 1 or going to $\infty$; see the proof of Theorem~\ref{dsformula} in the Appendix A for details.

We give some examples to show that Assumptions 1-3 hold for many widely considered treatment effect models. In Examples 1-3 listed below, we assume that $(R_{i1},\dots,R_{im})$ are i.i.d. continuous random vectors, and we assume that for each $i$, $R_{i2}, \dots, R_{im}$ are identically distributed with the support equalling $\mathbb{R}$, and correlation of $R_{ij_{1}}$ and $R_{ij_{2}}$ is neither $1$ or $-1$ for any two distinct $j_{1}, j_{2} \in \{1,\dots,m\}$. `$\sim$' means two distributions are equal. We consider the following three examples: (i) \textbf{Example 1} (Additive treatment effects): $R_{i1} \sim R_{i2}+\beta$ for some $\beta > 0$ with $c \in \mathbb{R}$; (ii) \textbf{Example 2}(Multiplicative treatment effects): $R_{i1} \sim \delta \cdot R_{i2}$ for some $\delta > 1$ with $c > 0$; (iii) \textbf{Example 3} (Lehmann's alternative): $F_{1} = p \cdot F_{2}^{q}+(1-p)\cdot F_{2}$ for some $0< p < 1$ and $q > 1$ with $c \in \mathbb{R}$, where $R_{i1}\sim F_{1}$ and $R_{i2}\sim F_{2}$. Lehmann's alternative is often used to model some uncommon but dramatic responses to treatment; see Rosenbaum (2007a) and Rosenbaum (2010b, Chapter 16) for some real data examples.




\begin{Proposition}\label{prop:example}
Assumptions 1-3 hold for Examples 1-3.
\end{Proposition}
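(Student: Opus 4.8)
The plan is to check Assumptions 1--3 in turn. \emph{Assumption 1} is immediate: the standing hypothesis of Examples 1--3 that $(R_{i1},\dots,R_{im})$ are i.i.d.\ continuous random vectors is exactly the statement that they are i.i.d.\ draws from a continuous multivariate law $F$. For \emph{Assumption 2} it suffices to show that in each example the treated marginal density $f_1$ is strictly positive on all of $\mathbb{R}$, since the control densities are positive on $\mathbb{R}$ by hypothesis; then $s_j=+\infty>c$ and $f_j>0$ on $[c,s_j)$ for every $j$, and $s=\max_j s_j=+\infty$. I would read off $f_1$ from the stated treatment-effect relation: $f_1(x)=f_2(x-\beta)$ in Example~1, $f_1(x)=\delta^{-1}f_2(x/\delta)$ in Example~2, and, differentiating $F_1=pF_2^{q}+(1-p)F_2$, $f_1(x)=f_2(x)\bigl(pq\,F_2(x)^{q-1}+(1-p)\bigr)$ in Example~3; since $f_2>0$ on $\mathbb{R}$ (and $0<p<1$, $q>1$), each is strictly positive on $\mathbb{R}$.

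For the first inequality in \emph{Assumption 3}, using that $R_{i2},\dots,R_{im}$ are identically distributed one has $\tfrac1m\sum_{j=1}^{m}\mathbb{P}(R_{ij}\ge t)=\tfrac1m\mathbb{P}(R_{i1}\ge t)+\tfrac{m-1}{m}\mathbb{P}(R_{i2}\ge t)$, so the required bound is equivalent to the first-order stochastic dominance $\mathbb{P}(R_{i1}\ge t)\ge\mathbb{P}(R_{i2}\ge t)$, with the strict version transferring as well. Writing $\bar{F}_\ell=1-F_\ell$, I would then verify term by term: in Example~1, $\bar{F}_1(t)-\bar{F}_2(t)=\mathbb{P}(t-\beta\le R_{i2}<t)>0$ for every $t$; in Example~2 (this is where $c>0$ is used, forcing $t\ge c>0$ and hence $t/\delta<t$), $\bar{F}_1(t)-\bar{F}_2(t)=\mathbb{P}(t/\delta\le R_{i2}<t)>0$; and in Example~3, $\bar{F}_1(t)-\bar{F}_2(t)=p\,F_2(t)\bigl(1-F_2(t)^{q-1}\bigr)>0$ since $0<F_2(t)<1$ and $q>1$. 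In each case strict dominance holds on all of $[c,s)$, so one may take $\mathcal{I}=(c,c+1)$.

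The remaining clause of Assumption~3, namely $\mathbb{P}(R_{i(m)}>R_{i1}\ge t)>0$ for some $t\ge c$, is the only delicate point and I expect it to be the main obstacle. Since $\{R_{i(m)}>R_{i1}\ge t\}\supseteq\{R_{i2}>R_{i1}\ge t\}$ and $t\mapsto\mathbb{P}(R_{i2}>R_{i1}\ge t)$ is nonincreasing, it is equivalent to $\mathbb{P}(R_{i2}>R_{i1}\ge c)>0$: with positive probability the treated unit is aberrant yet not maximal. The plan is to invoke that the correlation of $R_{i1}$ and $R_{i2}$ is not $1$ to rule out $R_{i1}$ being an increasing affine function of $R_{i2}$, and then combine this with the marginal relations of Examples 1--3 and the continuity and full support of the joint law to conclude, first, that $R_{i1}-R_{i2}$ is nondegenerate with unbounded support so that $\mathbb{P}(R_{i2}>R_{i1})>0$, and, second, using positivity of $f_1$ on $[c,\infty)$ together with absolute continuity of the joint law of $(R_{i1},R_{i2})$, that the event $\{R_{i2}>R_{i1}\ge c\}$ carries positive mass.

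The reason this last step is the hard part is that ``correlation not $1$'' alone is not enough: for instance $R_{i1}=\psi(R_{i2})$ for a strictly increasing, nonaffine $\psi$ with $\psi(x)>x$ has correlation strictly below $1$ yet keeps $R_{i1}$ always largest, so the argument must genuinely use that $(R_{i1},\dots,R_{im})$ is an \emph{absolutely continuous} (non-singular) random vector whose control marginals have full support --- equivalently, that the treatment effect is ``non-extreme'' in that it does not push the joint law onto $\{R_{i1}\ge R_{ij}\text{ for all }j\}$. Concretely, I would carry this out by showing, in each of the three examples, that $R_{i1}-R_{i2}$ has a density that is positive on an interval extending below $0$ and that the joint density of $(R_{i1},R_{i2})$ is positive on an open set straddling the diagonal over the region $\{R_{i1}\ge c\}$, which yields the required strict positivity.
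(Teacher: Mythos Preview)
Your verification of Assumptions~1--2 and of the stochastic-dominance part of Assumption~3 is correct and is exactly what the paper does: the paper declares Assumptions~1--2 ``immediate from the general assumptions,'' then checks $\mathbb{P}(R_{i1}\ge t)>\mathbb{P}(R_{i2}\ge t)$ for each example by the same computations you give (shift, scale, and the algebraic identity $\bar F_1-\bar F_2=pF_2(1-F_2^{q-1})$), and deduces the averaged inequality.

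Where you diverge is the final clause $\mathbb{P}(R_{i(m)}>R_{i1}\ge t)>0$. The paper does not argue this at all; it simply writes ``From the general assumptions on $(R_{i1},\dots,R_{im})$, $\mathbb{P}(R_{i(m)}>R_{i1}\ge t)$ for some $t\ge c$ is trivially true.'' You are right to flag this as the delicate step, and your counterexample ($R_{i1}=\psi(R_{i2})$ with $\psi$ strictly increasing, nonaffine, $\psi(x)>x$) shows that ``correlation $\ne 1$'' alone cannot carry the argument---so the paper is leaning implicitly on ``continuous multivariate distribution'' meaning \emph{absolutely} continuous. Your reduction to $\mathbb{P}(R_{i2}>R_{i1}\ge c)>0$ via the containment $\{R_{i(m)}>R_{i1}\ge t\}\supseteq\{R_{i2}>R_{i1}\ge t\}$ is a sound sufficient condition (though ``equivalent'' overstates it), and your plan to use a joint density positive on an open set meeting $\{x_2>x_1\ge c\}$ is the natural way to close the gap. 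In short: your argument is strictly more careful than the paper's on this point, not different in spirit.
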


\begin{Theorem}[Design sensitivity of the aberrant rank test]\label{dsformula}
Define $G(v)=\frac{1}{m}\sum_{j=1}^{m}\max \{F_{j}(v)-F_{j}(c), 0\}$ and $[m-1]=\{1,\dots,m-1\}$. Under Assumptions 1-3, 
\begin{equation}\label{eqn: design sensitivity formula}
		\mathbb{E} \Bigg\{ \max_{b\in [m-1]} \frac{\sum_{j=1}^{b}G(R_{i(j)})+\Gamma \sum_{j=b+1}^{m}G(R_{i(j)})}{b+\Gamma(m-b)} \Bigg \}=\mathbb{E}\{G(R_{i1})\}
\end{equation}
has a unique solution for $\Gamma \in (1,+\infty)$, call it $\widetilde{\Gamma}$. Then $\widetilde{\Gamma}$ is the design sensitivity of the aberrant rank test as in (\ref{def:abe}). That is, as $I \rightarrow \infty$, the power $\Psi_{\Gamma, I}$ of a one-sided $\alpha$-level sensitivity analysis satisfies $\Psi_{\Gamma, I}  \rightarrow 1$ if $\Gamma < \widetilde{\Gamma}$, and $\Psi_{\Gamma, I} \rightarrow 0$ if $\Gamma > \widetilde{\Gamma}$.
\end{Theorem}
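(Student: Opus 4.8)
The plan is to divide both the test statistic and the (data-dependent) rejection threshold by $I^{2}m$ and show that each converges in probability to a deterministic limit, so that $\widetilde{\Gamma}$ emerges as the unique $\Gamma$ at which the two limits cross. Under the convention that $j=1$ is treated in every stratum, $T_{\text{abe}}=\sum_{i=1}^{I}q(R_{i1}\mid\mathbf{R})$, and by the separability approximation quoted before the theorem the favorable-situation power is $\Psi_{\Gamma,I}=\mathbb{P}(T_{\text{abe}}\geq\xi_{\alpha}\mid\mathcal{Z})$ with $\xi_{\alpha}=\sum_{i}\overline{\overline{\mu}}_{i}+\Phi^{-1}(1-\alpha)\sqrt{\sum_{i}\overline{\overline{\nu}}_{i}}$. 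It therefore suffices to (a) identify $\lim I^{-2}m^{-1}T_{\text{abe}}$ and $\lim I^{-2}m^{-1}\xi_{\alpha}$, (b) compare them as functions of $\Gamma$, and (c) check the crossing is unique and transversal.

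\textbf{The empirical-process step.} First I would deal with the cross-strata dependence, which is what prevents applying a law of large numbers to $T_{\text{abe}}$ directly. The normalized rank function $v\mapsto\frac{1}{Im}q(v\mid\mathbf{R})=\frac{1}{Im}\sum_{i'=1}^{I}\sum_{j'=1}^{m}\mathbbm{1}(c\leq R_{i'j'}\leq v)$ is an empirical average over the i.i.d.\ strata (Assumption~\ref{ass:iid}) of the uniformly bounded function class $\{(x_{1},\dots,x_{m})\mapsto\frac{1}{m}\sum_{j}\mathbbm{1}(c\leq x_{j}\leq v):v\in\mathbb{R}\}$, which is Glivenko--Cantelli since it is generated by indicators of intervals; hence $\sup_{v}|\frac{1}{Im}q(v\mid\mathbf{R})-G(v)|\to 0$ almost surely, with $G$ computed from the true favorable-situation marginals. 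Plugging the random arguments $R_{i1}$ and the within-stratum order statistics $R_{i(j)}$ into this uniform limit lets me replace every $\frac{1}{Im}q(\cdot\mid\mathbf{R})$ by $G(\cdot)$ with negligible error; since $G\leq 1$, dominated convergence plus the strong law over the i.i.d.\ strata then give $\frac{1}{I^{2}m}T_{\text{abe}}\to\mathbb{E}\{G(R_{i1})\}$ and $\frac{1}{I^{2}m}\sum_{i}\overline{\overline{\mu}}_{i}\to L(\Gamma):=\mathbb{E}\{\max_{b\in[m-1]}[\sum_{j\leq b}G(R_{i(j)})+\Gamma\sum_{j>b}G(R_{i(j)})]/[b+\Gamma(m-b)]\}$, which is exactly the left-hand side of (\ref{eqn: design sensitivity formula}). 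Because each $\overline{\overline{\nu}}_{i}$ is $O((Im)^{2})$ uniformly (again using $G\leq 1$), one has $\sqrt{\sum_{i}\overline{\overline{\nu}}_{i}}=O(I^{3/2}m)=o(I^{2}m)$, so the variance correction in $\xi_{\alpha}$ washes out after scaling and $\frac{1}{I^{2}m}\xi_{\alpha}\to L(\Gamma)$ as well.

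\textbf{Analysis of $L(\Gamma)$ and conclusion.} Next I would study $L$ on $(1,\infty)$. Continuity in $\Gamma$ follows from bounded convergence. For monotonicity, note that within each stratum the scores $G(R_{i(1)})\leq\dots\leq G(R_{i(m)})$ are sorted, so in each bracket the larger $\Gamma$-weight multiplies the larger scores; a one-line differentiation shows $[\sum_{j\leq b}G(R_{i(j)})+\Gamma\sum_{j>b}G(R_{i(j)})]/[b+\Gamma(m-b)]$ is nondecreasing in $\Gamma$ (with strict increase whenever $G(R_{i(1)})<G(R_{i(m)})$, since then the bottom-$b$ average is strictly below the top-$(m-b)$ average for every $b\in[m-1]$), hence so is the maximum over $b$ and hence $L$, and Assumption~\ref{ass:treateffect} makes the increase strict. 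At the endpoints, $L(1)=\frac{1}{m}\sum_{j}\mathbb{E}\{G(R_{ij})\}$, and integration by parts against the stochastic-dominance inequality $\mathbb{P}(R_{i1}\geq t)\geq\frac{1}{m}\sum_{j}\mathbb{P}(R_{ij}\geq t)$ from Assumption~\ref{ass:treateffect}, strict on the interval $\mathcal{I}$ where $G$ increases by Assumption~\ref{ass:nonzeromeasure}, yields $L(1)<\mathbb{E}\{G(R_{i1})\}$; as $\Gamma\to\infty$ the $b=m-1$ bracket tends to $G(R_{i(m)})$, so $L(\Gamma)\uparrow\mathbb{E}\{G(R_{i(m)})\}$, which exceeds $\mathbb{E}\{G(R_{i1})\}$ because the last part of Assumption~\ref{ass:treateffect} (with continuity of $F$ and Assumption~\ref{ass:nonzeromeasure}) forces $G(R_{i(m)})>G(R_{i1})$ on a set of positive probability. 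Thus $L$ crosses the level $\mathbb{E}\{G(R_{i1})\}$ exactly once, at the unique root $\widetilde{\Gamma}$ of (\ref{eqn: design sensitivity formula}); for $\Gamma<\widetilde{\Gamma}$ we get $L(\Gamma)<\mathbb{E}\{G(R_{i1})\}$, so the two scaled quantities $\frac{1}{I^{2}m}\xi_{\alpha}$ and $\frac{1}{I^{2}m}T_{\text{abe}}$ converge to constants with the former strictly smaller, forcing $\Psi_{\Gamma,I}\to 1$; for $\Gamma>\widetilde{\Gamma}$ the inequality reverses and $\Psi_{\Gamma,I}\to 0$.

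\textbf{Main obstacle.} The hard part is the second step. Essentially all existing design-sensitivity arguments exploit that the matched strata contribute independent summands, but here $q(R_{ij}\mid\mathbf{R})$ depends on the whole sample, so $T_{\text{abe}}$ is a sum of dependent terms to which neither a law of large numbers nor the usual normal-approximation/separability machinery applies directly; the remedy is to replace the random rank function by its deterministic Glivenko--Cantelli limit $G$ \emph{uniformly} before invoking any stratum-wise asymptotics, and this decoupling is the technical heart of the proof. The subsidiary care required is to propagate this uniform replacement through both the means $\overline{\overline{\mu}}_{i}$ and the variances $\overline{\overline{\nu}}_{i}$ at the correct scale, so that the $\Phi^{-1}(1-\alpha)\sqrt{\sum_{i}\overline{\overline{\nu}}_{i}}$ term is genuinely negligible, and to extract from the otherwise opaque Assumptions~\ref{ass:nonzeromeasure}--\ref{ass:treateffect} precisely the strict inequalities that render $L$ strictly increasing with $L(1)<\mathbb{E}\{G(R_{i1})\}<\lim_{\Gamma\to\infty}L(\Gamma)$, so that $\widetilde{\Gamma}$ exists, is unique, and is a transversal crossing.
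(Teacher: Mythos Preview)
Your proposal is correct and matches the paper's proof essentially step for step: the same $I^{2}m$ scaling, the same Glivenko--Cantelli replacement of $\frac{1}{Im}q(\cdot\mid\mathbf{R})$ by $G$ uniformly, the same strong-law limits for $T_{\text{abe}}$ and $\sum_i\overline{\overline{\mu}}_i$, the same crude $O((Im)^2)$ bound on $\overline{\overline{\nu}}_i$ to kill the variance term, and the same continuity/strict-monotonicity/endpoint analysis of $L(\Gamma)$ via Assumptions~\ref{ass:nonzeromeasure}--\ref{ass:treateffect}. The only cosmetic difference is that the paper justifies strict monotonicity of $L$ using $\mathbb{P}(R_{i(1)}\geq c)>0$ from Assumption~\ref{ass:nonzeromeasure} rather than Assumption~\ref{ass:treateffect}, but your route works equally well.
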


Proofs of all propositions and theorems in this paper are provided in Appendix A in the online supplementary materials. Theorem~\ref{dsformula} confirms that the design sensitivity of the aberrant rank test depends only on the underlying data generating distribution $F$ and is independent of the level $\alpha$ and the sample size $I$. Setting $c=-\infty$ gives the design sensitivity formula of the Wilcoxon rank sum test.

\subsection{Asymptotic comparison via the design sensitivity}\label{sec:computeds}

 Theorem~\ref{dsformula} allows us to numerically calculate the design sensitivity of the aberrant rank test in each situation, and compare it with that of the Mantel-Haenszel test. Since the design sensitivity only depends on the data generating process and is independent of the level $\alpha$ and the sample size $I$, it gives us an intrinsic and elegant measurement of how robust a test is to hidden bias, and enables us to asymptotically compare two tests for observational studies. For the calculation of the design sensitivities, as in Theorem~\ref{dsformula}, we assume that for each $i$, without loss of generality, $j=1$ receives treatment and others receive control, and $(R_{i1}, \dots, R_{im})^{T}=(r_{Ti1}, r_{Ci2}, \dots, r_{Cim})^{T}$ is an i.i.d. realization from a multivariate continuous distribution. To make our calculation easier and clearer, we further assume that: First, $r_{Tij}=g(r_{Cij})$ for some deterministic function $g$. That is, given $g$, $r_{Tij}$ is only determined by $r_{Cij}$ and is independent of other individuals' outcomes given $r_{Cij}$; Second, each $r_{Cij}$ is realized from the same distribution $F$; Third, within the same stratum, $R_{i1},\dots, R_{im}$ are independent of each other. In Appendix B, we also examine the cases when $R_{i1}, \dots, R_{im}$ are correlated. Note that these three assumptions merely serve to simplify the simulations and are not necessary for Theorem~\ref{dsformula}. 
 
 We consider the following four models: (i) \textbf{Model 1} (additive treatment effects, normal distribution): $r_{Tij}=r_{Cij}+\beta$, $F$ is the standard normal distribution; (ii) \textbf{Model 2} (additive treatment effects, Laplace distribution): $r_{Tij}=r_{Cij}+\beta$, $F$ is the Laplace distribution with mean zero and variance one; (iii) \textbf{Model 3} (multiplicative treatment effects, normal distribution): $r_{Tij}=\delta\cdot  r_{Cij}$, $F$ is the standard normal distribution; (iv) \textbf{Model 4} (multiplicative treatment effects, Laplace distribution): $r_{Tij}=\delta\cdot r_{Cij}$, $F$ is the Laplace distribution with mean zero and variance one.
 For all four models, we set the aberrant response threshold to be $c=1$, that is, any response $R_{ij}>1$ is considered to be an aberrant response. Table~\ref{tab:designsensitivity} reports the design sensitivities of the Mantel-Haenszel test and the aberrant rank test under Models 1-4 with $m=4$ (i.e., matching with three controls) and various $\beta$ and $\delta$. Calculation is based on Monte-Carlo simulations. Specifically, under each data generating model, we can calculate the left-hand side (LHS) of (\ref{eqn: design sensitivity formula}) for each fixed $\Gamma$ and the right-hand side (RHS) of (\ref{eqn: design sensitivity formula}) using Monte-Carlo simulations. According to Lemma~\ref{lem:monotone} in Appendix A, the RHS of (\ref{eqn: design sensitivity formula}) is a strictly monotonically increasing function of $\Gamma$, therefore we can use the bisection method to find the solution of equation (\ref{eqn: design sensitivity formula}). According to Theorem~\ref{dsformula}, that solution is exactly the design sensitivity of the aberrant rank test given each data generating model.

Two clear patterns emerge in Table~\ref{tab:designsensitivity}. First, the choice of the test statistic has a huge influence on the design sensitivities. For example, under Model 3 with $\delta=2$, the design sensitivity of the aberrant rank test is nearly twice as big as that of the Mantel-Haenszel test. Second, whether or not the aberrant rank test outperforms the Mantel-Haenszel test depends upon the unknown data generating distribution of $\mathcal{F}$. As seen from Table~\ref{tab:designsensitivity}, under Models 1, 3 and 4, the aberrant rank test should be asymptotically less sensitive to unmeasured confounders with larger design sensitivities; instead under Model 2, the Mantel-Haenszel test should be more favorable in a sensitivity analysis with larger $\widetilde{\Gamma}$. These theoretical insights are validated in a simulation study in Section~\ref{sec:simulation}. 

\begin{table}[ht]
\centering \caption{Design sensitivities of the Mantel-Haenszel test and the aberrant rank test under Models 1-4 and matching with three controls with various parameters. The larger of the two design sensitivities of the two tests is in bold in each case.}
\label{tab:designsensitivity}
\small
\centering
\begin{tabular}{cccccccc} 
  \hline
  \multirow{2}{*}{} & \multicolumn{3}{c}{Model 1: additive, normal} & &\multicolumn{3}{c}{Model 2: additive, Laplace}\\
   \cmidrule(r){2-4} \cmidrule(r){6-8}
  Test statistic & $\beta=0.50$ & $\beta=0.75$ & $\beta=1.00$ & & $\beta=0.50$ & $\beta=0.75$ & $\beta=1.00$ \\
  \hline
   Mantel-Haenszel & 2.36 & 3.56  & 5.30 &  & \textbf{2.36} & \textbf{3.91}  & \textbf{7.21 }\\
   Aberrant rank & \textbf{2.63} & \textbf{4.20}  & \textbf{6.50} & & 2.28 & 3.59  & 5.93 \\
 \hline
  \multirow{2}{*}{} & \multicolumn{3}{c}{Model 3: multiplicative, normal} & & \multicolumn{3}{c}{Model 4: multiplicative, Laplace}\\
     \cmidrule(r){2-4} \cmidrule(r){6-8}
  Test statistic & $\delta=1.50$ & $\delta=1.75$ & $\delta=2.00$ & & $\delta=1.50$ & $\delta=1.75$ & $\delta=2.00$\\
  \hline 
   Mantel-Haenszel &  1.80 & 2.11 & 2.37 & & 1.75 & 2.07  & 2.37\\
   Aberrant rank &  \textbf{2.50} &  \textbf{3.28} & \textbf{4.07}& & \textbf{2.15} & \textbf{2.75}  & \textbf{3.36} \\
 \hline 
 \end{tabular}
\end{table}

We give some intuition as to why the aberrant rank test should sometimes be preferred over the Mantel-Haenszel test and other times the Mantel-Haenszel should be preferred. Suppose that $r_{Cij} \stackrel{iid}{\sim} f_{0}(x)$ and $r_{Tij} \stackrel{iid}{\sim} f_{1}(x)$ where $f_{0}$ and $f_{1}$ are two densities. Roughly speaking, the more $f_{1}(x)/f_{0}(x)$ departs from 1, the easier it is to distinguish the treated and control given the outcome value $x$. For Model 1 with $\beta>0$, $f_{1}(x)/f_{0}(x)=\exp(\beta x-\beta^{2}/2)$. For Model 2 with $\beta>0$ and $x > \beta$, $f_{1}(x)/f_{0}(x)= \exp(\sqrt{2} \beta)$. For Model 3 with $\delta>1$ and $x >0$, $f_{1}(x)/f_{0}(x)=\exp((1-1/\delta^{2})x^{2}/2)/\delta$. For Model 4 with $\delta>1$, $f_{1}(x)/f_{0}(x)=\exp(\sqrt{2}(1-1/\delta)x)/\delta $. Thus, for Models 1, 3 and 4 with $\beta>0$ and $\delta>1$, suppose that $c$ is large enough, specially $c\geq \max \{ \frac{\beta}{2}, \sqrt{\frac{2\log \delta}{1-1/\delta^{2}}}, \frac{\log \delta}{\sqrt{2}(1-1/\delta)}\}$, then $f_{1}(x)/f_{0}(x)\geq 1$ and $f_{1}(x)/f_{0}(x)$ is increasing for all $x\geq c$. That is, in these three models, it is easier to detect the true treatment effect at the tail (i.e. larger outcome value $x$) and the aberrant rank test should outperform the Mantel-Haenszel test by assigning larger weights to more aberrant responses (i.e., larger outcome values) via aberrant ranks. For Model 2 with $c\geq \beta$, $f_{1}(x)/f_{0}(x)$ is a constant for $x\geq c$. In this case, the Mantel-Haenszel test should be more powerful than the aberrant rank test since it does not distinguish different magnitudes of severity, while the aberrant rank test loses power by unnecessarily assigning unequal weights based upon the degree of aberration.

\section{A new, general adaptive approach to combine two test statistics in observational studies}

\subsection{Motivation and previous methods}
From the perspectives of design sensitivity and power of sensitivity analysis, neither the Mantel-Haenszel test nor the aberrant rank test uniformly dominates the other. Instead, which test is to be preferred depends upon the data generating process. Unfortunately we typically do not know which one is better for a given setting since we do not typically know the true data generating process. This type of problem is common in observational studies, where we typically have several available tests that we can use, but there is no single choice that can dominate all other choices in all possible situations (Rosenbaum, 2012). To overcome this type of problem in observational studies, various methods have been proposed. Among these, for example, Heller et al. (2009) and Zhang et al. (2011) used a sample splitting method in which a fraction of the data, the planning sample, is used to select a test and the remaining part of the data, the analysis sample, to carry out a test. The sample splitting method throws out the planning sample for carrying out the test which reduces power for small or moderate sample sizes.  

Rosenbaum (2012) proposed an adaptive approach to combine two tests in observational studies that is totally data-driven and does not require dropping samples for design, and can achieve the larger of the two design sensitivities of the component tests. This traditional adaptive approach works for combining different tests within a large class of test statistics for pair matched samples, including any test statistics of the form $T=\sum_{i=1}^{I}\mathbbm{1}(Y_{i}>0)\ h_{i}$, where $Y_{i}=(Z_{i1}-Z_{i2})(R_{i1}-R_{i2})$ is the treated-minus-control difference in response for matched pair $i$ and $h_{i}$ is a function of $|Y_{1}|, \dots, |Y_{I}|$, in which case we can find an uniform upper bound test statistic $\overline{T}_{\Gamma}$ under each sensitivity parameter $\Gamma$ such that $\mathbb{P}(T\geq t \mid \mathcal{F}, \mathcal{Z})\leq \mathbb{P}(\overline{T}_{\Gamma} \geq t \mid \mathcal{F}, \mathcal{Z})$ for any $t$. To combine two different test statistics, this traditional adaptive approach corrects for the correlation between the two test statistics by using the fact that the two upper bound statistics are asymptotically jointly normal under some regularity conditions; see Section 2 in Rosenbaum (2012). The cost for this correction is small compared with, for example, the Bonferroni adjustment since the two tests are typically highly correlated. However, this traditional adaptive approach can only be applied to test statistics that are uniformly bounded by a known distribution, which typically requires either the matching regime to be pair matching or the outcomes to be binary, neither of which would hold for many commonly used tests; see Section~\ref{sec: our contribution} for details. For example, the traditional approach cannot be used for the aberrant rank test, the Wilcoxon rank sum test, the Hodges-Lehmann aligned rank test or the Huber-Maritz m-tests (Gastwirth et al., 2000; Rosenbaum 2002b, 2007b).

\subsection{A new, general adaptive test via two-stage programming}\label{sec:adaptive}

Instead of focusing on matching with $m-1$ ($m\geq 2$) controls as we did in previous sections, in this section we consider a more general matching regime allowing matching with different number of controls across the strata. Suppose that there are $I$ matched strata with $n_{i}$ individuals in the $i$-th stratum, $N=\sum_{i=1}^{I}n_{i}$ individuals in total. $n_{i}=2$ with $Z_{i1}+Z_{i2}=1$ for all $i$ refers to pair matching. $n_{i}=m\geq 3$ with $\sum_{j=1}^{m}Z_{ij}=1$ for all $i$ refers to matching with multiple controls. In full matching, $n_{i}$ can take different values with different $i$, and $\sum_{j=1}^{n_{i}}Z_{ij}\in \{1,n_{i}-1\}$ for all $i$ (i.e., either one treated individual and one or more controls, or one control and one or more treated individuals, within each stratum). As in previous sections, we still let $\mathbf{Z}=(Z_{11}, \dots, Z_{In_{I}})^{T}$ be the binary vector of treatment assignments, and $\mathbf{Z}\in \mathcal{Z}$ if and only if $\sum_{j=1}^{n_{i}}Z_{ij}=1$ for each $i$. The constraint $\sum_{j=1}^{n_{i}}Z_{ij}=1$ for all $i$ is no more restrictive than assuming $\sum_{j=1}^{n_{i}}Z_{ij}\in \{1,n_{i}-1\}$ for all $i$ and is only imposed to make our derivations in this section clearer. See Appendix E for the detailed description of how the procedure derived in this section can be directly extended to allow for $\sum_{j=1}^{n_{i}}Z_{ij}\in \{1,n_{i}-1\}$ for all $i$. We still let $\mathcal{F}$ be the set of all fixed quantities of $r_{Tij}, r_{Cij}, \mathbf{x}_{ij}$ and $u_{ij}$.

Motivated by the demand of performing adaptive inference in much more general settings than the traditional adaptive approach, we develop here a new adaptive approach that can combine any two sum test statistics which refer to any test statistics with the form $T=\mathbf{Z}^{T}\mathbf{q}=\sum_{i=1}^{I}\sum_{j=1}^{n_{i}} Z_{ij}\ q_{ij}$ where each $q_{ij}$ is an arbitrary function of the response vector $\mathbf{R}=(R_{11},\dots, R_{In_{I}})^{T}$ that does not vary with $\mathbf{Z}\in \mathcal{Z}$ under the null hypothesis, and can work under various matching strategies due to the flexibility of the value of each $n_{i}$. We would like the power of the adaptive test to be asymptotically no less than the higher of the two powers of the component tests in sensitivity analysis. The idea is that when the sample size is large, to achieve the higher of the two powers of the component tests is almost equivalent to achieving the larger of the two design sensitivities of the component tests. Consider applying the Bonferroni adjustment to the component tests $T_{k}=\mathbf{Z}^{T}\mathbf{q}_{k}=\sum_{i=1}^{I}\sum_{j=1}^{n_{i}}Z_{ij}q_{ijk}$ with $\mathbf{q}_{k}=(q_{11k}, \dots, q_{In_{I}k})^{T}$ where each $q_{ijk}$ is a function of the response vector $\mathbf{R}$ and $k\in \{1,2\}$. Let $p_{ij}=\mathbb{P}(Z_{ij}=1\mid \mathcal{F}, \mathcal{Z})=\exp(\gamma u_{ij})/\sum_{j^{\prime}=1}^{n_{i}}\exp(\gamma u_{ij^{\prime}})$. For a one-sided testing procedure with level $\alpha$ and given $\Gamma$, we 
\begin{equation}\label{test:bonferroni}
\text{reject the null if}\	\max\limits_{k\in \{1,2\}}\ \min \limits_{\mathbf{u} \in \mathcal{U}}\ \frac{t_{k}-\mu_{k,\mathbf{u}}}{\sigma_{k, \mathbf{u}}} \geq \Phi^{-1}(1-\alpha/2), 
\end{equation}
where $t_{k}$ is the observed value of $T_{k}$, and $\mu_{k, \mathbf{u}}=\mathbb{E}_{\Gamma, \mathbf{u}}(\mathbf{Z}^{T}\mathbf{q}_{k}\mid \mathcal{F}, \mathcal{Z})=\sum_{i=1}^{I}\sum_{j=1}^{n_{i}}p_{ij}q_{ijk}$ and $\sigma_{k, \mathbf{u}}^{2}=Var_{\Gamma, \mathbf{u}}(\mathbf{Z}^{T}\mathbf{q}_{k}\mid \mathcal{F}, \mathcal{Z})=\sum_{i=1}^{I}\sum_{j=1}^{n_{i}}p_{ij}q_{ijk}^{2}-\sum_{i=1}^{I}(\sum_{j=1}^{n_{i}}p_{ij}q_{ijk})^{2}$ are the expectations and variances of $T_{k}$ under the null hypothesis, with a specified $\Gamma$ and given all unobserved covariates $\mathbf{u}=(u_{11}, \dots, u_{In_{I}})^{T}\in [0,1]^{N}=:\mathcal{U}$. The term $\alpha/2$ in the RHS of (\ref{test:bonferroni}) comes from the Bonferroni adjustment with two component tests. Under a normal approximation, the standard deviate of $t_{k}$ follows a standard normal distribution, thus (\ref{test:bonferroni}) is a valid testing procedure with level $\alpha$ and given $\Gamma$ in a sensitivity analysis. Note that the design sensitivity of a test only depends on the data generating distribution and is independent of level $\alpha$. Using an argument parallel to the proof of Proposition 2 in Rosenbaum (2012), it is straightforward to show that applying (\ref{test:bonferroni}) with the two component tests can achieve the larger of the two design sensitivities, where we reject the null as long as one of the two tests rejects the null with significant level $\alpha/2$. However, simply applying (\ref{test:bonferroni}) may lose power due to two significant deficiencies. First, it does not use the fact that the confounder has to impact the treatment assignment in the same way between the two component tests on the same outcome variable. Second, it does not incorporate the information of the correlation between the two component tests. We implement a two-stage programming procedure to overcome these two deficiencies.

In the first stage, we utilize bounds on the correlation between $T_{1}$ and $T_{2}$ to replace $\Phi^{-1}(1-\alpha/2)$ with a smaller rejection threshold under the given $\Gamma$ and level $\alpha<1/2$. Under some mild regularity conditions (see Appendix C for details), $(T_{1}, T_{2})$ is asymptotically bivariate normal in the sense that for large $I$, the distribution function of $\Big(\frac{T_{1}-\mu_{1,\mathbf{u}}}{\sigma_{1, \mathbf{u}}},\ \frac{T_{2}-\mu_{2,\mathbf{u}}}{\sigma_{2, \mathbf{u}}} \Big)$ can be approximated by that of $(X_{1}, X_{2})\sim \mathcal{N}\left(\left(\begin{array}{c}
0\\
0
\end{array}\right),\left(\begin{array}{cc}
1 & \mathbf{\rho}_{\mathbf{u}} \\
\mathbf{\rho}_{\mathbf{u}} & 1 
\end{array}\right)\right)$,
where $\mathbf{\rho}_{\mathbf{u}}=\mathbb{E}\Big(\frac{ T_{1}-\mu_{1, \mathbf{u}}}{\sigma_{1, \mathbf{u}}}\cdot \frac{ T_{2}-\mu_{2, \mathbf{u}}}{\sigma_{2, \mathbf{u}}}\Big \vert \mathcal{F},\mathcal{Z} \Big)$ can be expressed as
\begin{equation}\label{correlation-with_u}
\footnotesize{\mathbf{\rho}_{\mathbf{u}}= \frac{\sum_{i=1}^{I}\sum_{j=1}^{n_{i}} p_{ij}q_{ij1}q_{ij2}-\sum_{i=1}^{I} (\sum_{j=1}^{n_{i}} p_{ij}q_{ij1}) (\sum_{j=1}^{n_{i}} p_{ij}q_{ij2})}{\sqrt{\sum_{i=1}^{I}\sum_{j=1}^{n_{i}}p_{ij}q_{ij1}^{2}-\sum_{i=1}^{I}(\sum_{j=1}^{n_{i}}p_{ij}q_{ij1})^{2}}\sqrt{\sum_{i=1}^{I}\sum_{j=1}^{n_{i}}p_{ij}q_{ij2}^{2}-\sum_{i=1}^{I}(\sum_{j=1}^{n_{i}}p_{ij}q_{ij2})^{2}}}.}
\end{equation}
Let $Q_{\mathbf{\rho}_{\mathbf{u}},\alpha}$ be the quantile such that $\mathbb{P}(X_{1}\leq Q_{\mathbf{\rho}_{\mathbf{u}},\alpha}, X_{2}\leq Q_{\mathbf{\rho}_{\mathbf{u}},\alpha})=1-\alpha$. Note that we would like to derive a valid testing procedure given any $\mathbf{u}$ with the given $\Gamma$ and $\alpha$, we should look at the worst-case rejection threshold $\max_{\mathbf{u}\in \mathcal{U}}Q_{\rho_{\mathbf{u}}, \alpha}$. Invoking Slepian's lemma (Slepian, 1962), to find $\max_{\mathbf{u}\in \mathcal{U}}Q_{\rho_{\mathbf{u}}, \alpha}$, it suffices to find $\min_{\mathbf{u}\in \mathcal{U}} \rho_{\mathbf{u}}$. Through setting $w_{ij}=\exp(\gamma u_{ij})$, we further transform solving $\min_{\mathbf{u}\in \mathcal{U}} \rho_{\mathbf{u}}$ into solving 
\begin{align*}
    &\underset{w_{ij}}{\text{minimize}}\quad \mathbf{\rho}_{\mathbf{u}} \quad \quad  (*) \\
    &\text{subject to}\ \ 1 \leq w_{ij} \leq \Gamma, \quad \forall  i,j
\end{align*}
where $\mathbf{\rho}_{\mathbf{u}}$ is as in (\ref{correlation-with_u}) with $p_{ij}=w_{ij}/\sum_{j^{\prime}=1}^{n_{i}}w_{ij^{\prime}}$. $(*)$ is a large-scale nonlinear optimization problem with box constraints which can be solved approximately in a reasonable amount of time by the well-known L-BFGS-B algorithm, which is a limited-memory Broyden-Fletcher-Goldfarb-Shanno (L-BFGS) algorithm allowing box constraints (Byrd et al., 1995; Zhu et al., 1997). Denote the optimal value of $(*)$ with sensitivity parameter $\Gamma$ as $\mathbf{\rho}^{*}_{\Gamma}$. Then the corresponding worst-case quantile $\max_{\mathbf{u}\in \mathcal{U}}Q_{\rho_{\mathbf{u}}, \alpha}$ equals $Q_{\mathbf{\rho}^{*}_{\Gamma},\alpha}$ by Slepian's lemma. It is well known that $Q_{\mathbf{\rho}^{*}_{\Gamma},\alpha}<\Phi^{-1}(1-\alpha/2)$ as long as $\mathbf{\rho}^{*}_{\Gamma}>-1$. Thus, for two positively correlated test statistics $T_{1}$ and $T_{2}$, especially when the correlation is much greater than zero (which is the case when combining the Mantel-Haenszel test and the aberrant rank test), $Q_{\mathbf{\rho}^{*}_{\Gamma}, \alpha}$ is a much less conservative rejection threshold than $\Phi^{-1}(1-\alpha/2)$. 

In the second stage, we apply the minimax procedure developed in Fogarty and Small (2016) to replace the test statistic $\max\limits_{k\in \{1,2\}}\ \min \limits_{\mathbf{u} \in \mathcal{U}}\ (t_{k}-\mu_{k,\mathbf{u}})/\sigma_{k, \mathbf{u}}$ in (\ref{test:bonferroni}) with a larger one. Note that the following max-min inequality always holds
\begin{align}\label{inequality:minimax}
  \min \limits_{\mathbf{u} \in \mathcal{U}} \  \max\limits_{k\in \{1,2\}} \frac{t_{k}-\mu_{k,\mathbf{u}}}{\sigma_{k, \mathbf{u}}}\geq  \max\limits_{k\in \{1,2\}}\ \min \limits_{\mathbf{u} \in \mathcal{U} }\ \frac{t_{k}-\mu_{k,\mathbf{u}}}{\sigma_{k, \mathbf{u}}},
\end{align}
and strict inequality is possible. (\ref{inequality:minimax}) implies that instead of performing the two sensitivity analyses to solve $\min_{\mathbf{u}\in \mathcal{U} }(t_{k}-\mu_{k,\mathbf{u}})/\sigma_{k, \mathbf{u}}$ for $k\in \{1,2\}$ separately, we should conduct a simultaneous sensitivity analysis to directly check if $\min_{\mathbf{u} \in \mathcal{U}}  \max_{k\in \{1,2\}} (t_{k}-\mu_{k,\mathbf{u}})/\sigma_{k, \mathbf{u}}\geq Q_{\mathbf{\rho}^{*}_{\Gamma},\alpha}$ - if the inequality holds, we reject the null; otherwise, we fail to reject. Adapting the one-sided minimax procedure described in Part B of the Appendices of Fogarty and Small (2016) with our new rejection threshold $Q_{\mathbf{\rho}^{*}_{\Gamma}, \alpha}$, this procedure can be implemented through setting $s_{i}=1/\sum_{j^{\prime}=1}^{n_{i}}\exp(\gamma u_{ij^{\prime}})$ and solving the following quadratically constrained linear program with $M$ being a sufficiently large constant (see Appendix D for the detailed derivation):
\begin{equation*}
     \begin{split}
        \underset{y, p_{ij}, s_i, b_k}{\text{minimize}} \quad &y \quad \quad (**)\\
         \text{subject to}\quad & y\geq (t_k-\mu_{k,\mathbf{u}})^2- Q^{2}_{\mathbf{\rho}^{*}_{\Gamma},\alpha} \sigma^2_{k,\mathbf{u}}-Mb_k \quad \forall k \in \{0,1\}\\
        &\sum_{j=1}^{n_{i}} p_{ij}=1\quad\forall i\\
        &s_i\leq p_{ij}\leq \Gamma s_i\quad\forall i,j\\
        &p_{ij}\geq0 \quad\forall i,j\\
        &b_k\in \{0,1\} \quad \forall k \in \{0,1\}\\
        &-Mb_k\leq t_k-\mu_{k,\mathbf{u}}\leq M(1-b_k),\quad\forall k \in \{0,1\}
     \end{split}
 \end{equation*}
and checking whether the optimal value $y^{*}_{\Gamma}\geq 0$. If it is, we reject the null; otherwise, we fail to reject. The `$M$' constraint here precludes a directional error, as without it one might reject the null if evidence pointed in the opposite direction of the alternative. A quadratically constrained linear program can be efficiently solved with many available solvers. Contrary to implementing $(*)$, from which the gains in power is relatively large when the correlation between $T_{1}$ and $T_{2}$ is strong, implementing $(**)$ (the minimax procedure) typically can have marked improvement of power when the correlation between $T_{1}$ and $T_{2}$ is weak; see Section 8 in Fogarty and Small (2016). That is, by implementing our two-step programming $(*)$ and $(**)$, we can always expect gains in power no matter the correlation between the two component tests are strong or weak.

To conclude, in our new adaptive testing procedure, for a one-sided test with level $\alpha$ and sensitivity parameter $\Gamma$, given the observed value $t_{k}$ of $T_{k}$ $(k=1,2)$ , we
\begin{equation}\label{test:adaptive}
\text{reject the null if}\ \min \limits_{\mathbf{u} \in \mathcal{U}}\ \max\limits_{k\in \{1,2\}} \frac{t_{k}-\mu_{k,\mathbf{u}}}{\sigma_{k, \mathbf{u}}} \geq Q_{\mathbf{\rho}^{*}_{\Gamma},\alpha}.
\end{equation}
When $\Gamma=1$, (\ref{test:adaptive}) reduces to the usual testing procedure with the maximum statistic $\max\{T_{1}, T_{2}\}$ with correcting for $\text{Cor}(T_{1}, T_{2})$. We have shown that (\ref{test:adaptive}) can be implemented through the following two-stage programming method:

\begin{algorithm}[H]\label{algo:adaptive}
\textbf{Input:} Sensitivity parameter $\Gamma$; level $\alpha$ of the sensitivity analysis; treatment assignment indicator vector $\mathbf{Z}=(Z_{11},\dots, Z_{In_{I}})^{T}$; the score vector $\mathbf{q}_{1}=(q_{111},\dots, q_{In_{I}1})^{T}$ associated with $T_{1}=\sum_{i=1}^{I}\sum_{j=1}^{n_{i}}Z_{ij}q_{ij1}$; the score vector $\mathbf{q}_{2}=(q_{112},\dots, q_{In_{I}2})^{T}$ associated with $T_{2}=\sum_{i=1}^{I}\sum_{j=1}^{n_{i}}Z_{ij}q_{ij2}$\;
\textbf{Step 1:} Solve $(*)$ to get the worst-case correlation $\mathbf{\rho}^{*}_{\Gamma}$ along with the corresponding worst-case quantile $Q_{\mathbf{\rho}^{*}_{\Gamma},\alpha}$ \;
\textbf{Step 2:} Solve $(**)$ with $Q_{\mathbf{\rho}^{*}_{\Gamma},\alpha}$ obtained from Step 1, and get the corresponding optimal value $y^{*}_{\Gamma}$ \;
\textbf{Output:} If $y^{*}_{\Gamma}\geq 0$, we reject the null; otherwise, we fail to reject.
\caption{Two-stage programming as the new, general adaptive test}
\end{algorithm}
We provide an implementation of Algorithm~\ref{algo:adaptive} using the R interface to \textsf{Gurobi}, which is a commercial solver but is freely available for academic use. Proposition~\ref{proposition:level} says that the sensitivity analysis with the adaptive testing procedure described in Algorithm~\ref{algo:adaptive} has the correct level $\alpha$ asymptotically.
\begin{Proposition}\label{proposition:level}
For any unknown true $\mathbf{u}_{0}\in \mathcal{U}$ and true $\Gamma_{0}\leq \Gamma$, we have
\begin{equation*}
  \lim_{I\rightarrow \infty}  \mathbb{P}_{\Gamma_{0}, \mathbf{u}_{0}}\Big(\min \limits_{\mathbf{u} \in \mathcal{U} }\ \max\limits_{k\in \{1,2\}} \frac{t_{k}-\mu_{k,\mathbf{u}}}{\sigma_{k, \mathbf{u}}} \geq Q_{\mathbf{\rho}^{*}_{\Gamma}, \alpha}\Big|\mathcal{F}, \mathcal{Z}\Big)\leq \alpha.
\end{equation*}
\end{Proposition}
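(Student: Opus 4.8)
\textbf{Proof proposal for Proposition~\ref{proposition:level}.}

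The plan is to show that the rejection event in the adaptive procedure is contained (asymptotically) in the event that the Bonferroni-type threshold $Q_{\mathbf{\rho}^{*}_{\Gamma},\alpha}$ is exceeded by the single worst-case deviate tuned to the \emph{true} $\mathbf{u}_0$, and then to control the latter event by the asymptotic bivariate normality of $(T_1,T_2)$ together with Slepian's lemma. First I would record the deterministic inequality
\begin{equation*}
\min_{\mathbf{u}\in\mathcal{U}}\ \max_{k\in\{1,2\}} \frac{t_k-\mu_{k,\mathbf{u}}}{\sigma_{k,\mathbf{u}}} \;\leq\; \max_{k\in\{1,2\}} \frac{t_k-\mu_{k,\mathbf{u}_0}}{\sigma_{k,\mathbf{u}_0}},
\end{equation*}
which holds because the left side minimizes over $\mathbf{u}\in\mathcal{U}$ and $\mathbf{u}_0\in\mathcal{U}$. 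Hence it suffices to bound $\mathbb{P}_{\Gamma_0,\mathbf{u}_0}\big(\max_{k}(t_k-\mu_{k,\mathbf{u}_0})/\sigma_{k,\mathbf{u}_0}\geq Q_{\mathbf{\rho}^{*}_{\Gamma},\alpha}\ \big|\ \mathcal{F},\mathcal{Z}\big)$ by $\alpha$ in the limit.

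Next I would argue that under $(\Gamma_0,\mathbf{u}_0)$ the standardized pair $\big((t_1-\mu_{1,\mathbf{u}_0})/\sigma_{1,\mathbf{u}_0},\,(t_2-\mu_{2,\mathbf{u}_0})/\sigma_{2,\mathbf{u}_0}\big)$ is asymptotically bivariate normal with unit marginal variances and correlation $\mathbf{\rho}_{\mathbf{u}_0}$ — this is exactly the approximation asserted in the paragraph introducing $(*)$, and rests on a Lindeberg-type central limit theorem for the independent-across-strata sum $\mathbf{Z}^T\mathbf{q}_k$ under the biased assignment model with parameter $\Gamma_0$; the regularity conditions for this are deferred to Appendix~C. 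Consequently, writing $(X_1,X_2)$ for the limiting bivariate normal vector with correlation $\mathbf{\rho}_{\mathbf{u}_0}$,
\begin{equation*}
\limsup_{I\to\infty}\mathbb{P}_{\Gamma_0,\mathbf{u}_0}\Big(\max_{k}\frac{t_k-\mu_{k,\mathbf{u}_0}}{\sigma_{k,\mathbf{u}_0}}\geq Q_{\mathbf{\rho}^{*}_{\Gamma},\alpha}\Big)=\mathbb{P}\big(\max\{X_1,X_2\}\geq Q_{\mathbf{\rho}^{*}_{\Gamma},\alpha}\big)=1-\mathbb{P}(X_1\leq Q_{\mathbf{\rho}^{*}_{\Gamma},\alpha},\,X_2\leq Q_{\mathbf{\rho}^{*}_{\Gamma},\alpha}).
\end{equation*}
By definition of the quantile $Q_{\mathbf{\rho},\alpha}$ and of $\mathbf{\rho}^{*}_{\Gamma}=\min_{\mathbf{u}\in\mathcal{U}}\mathbf{\rho}_{\mathbf{u}}$ taken at the sensitivity bound $\Gamma\geq\Gamma_0$, Slepian's lemma gives that the bivariate lower-orthant probability $\mathbb{P}(X_1\leq q, X_2\leq q)$ is monotone increasing in the correlation, so evaluating at $\mathbf{\rho}_{\mathbf{u}_0}\geq\mathbf{\rho}^{*}_{\Gamma}$ (the inequality holding because $\mathbf{u}_0$ is feasible for $(*)$ at $\Gamma_0\leq\Gamma$, since the box $[1,\Gamma_0]\subseteq[1,\Gamma]$ after the substitution $w_{ij}=\exp(\gamma_0 u_{ij})$) yields $\mathbb{P}(X_1\leq Q_{\mathbf{\rho}^{*}_{\Gamma},\alpha}, X_2\leq Q_{\mathbf{\rho}^{*}_{\Gamma},\alpha})\geq 1-\alpha$, whence the limsup above is at most $\alpha$.

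The remaining link is the equivalence between the rejection rule in the displayed probability and the condition $y^{*}_{\Gamma}\geq0$ of Algorithm~\ref{algo:adaptive}; this is the content of the derivation in Appendix~D, where the quadratically constrained mixed-integer linear program $(**)$ is constructed so that its optimal value is nonnegative if and only if $\min_{\mathbf{u}\in\mathcal{U}}\max_k (t_k-\mu_{k,\mathbf{u}})/\sigma_{k,\mathbf{u}}\geq Q_{\mathbf{\rho}^{*}_{\Gamma},\alpha}$, with the `$M$'-constraints ruling out the directional error. I expect the main obstacle to be the interchange of limit and maximum together with verifying the Lindeberg condition uniformly enough to pass from finite-$I$ standardized statistics to the bivariate normal limit while the threshold $Q_{\mathbf{\rho}^{*}_{\Gamma},\alpha}$ itself depends (through $\mathbf{\rho}^{*}_{\Gamma}$) on $I$ via the stratum scores $\mathbf{q}_k$; handling this cleanly requires either assuming $\mathbf{\rho}^{*}_{\Gamma}$ converges to a limiting correlation along the sequence of designs, or carrying the argument through with the finite-$I$ quantile and invoking continuity of $Q_{\cdot,\alpha}$ in the correlation, which is where the regularity conditions of Appendix~C do the real work.
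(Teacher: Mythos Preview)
Your approach is essentially the paper's: bound the minimax deviate by plugging in the true confounder configuration, invoke bivariate normality under the true $(\Gamma_0,\mathbf{u}_0)$, and use Slepian's lemma with $\rho_{\mathbf{u}_0}\geq \rho^{*}_{\Gamma}$. The logic is right, but there is one notational conflation you should tighten.

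In your first displayed inequality, the left side is computed with $\mu_{k,\mathbf{u}},\sigma_{k,\mathbf{u}}$ formed at the \emph{analysis} level $\Gamma$ (i.e., $p_{ij}=\exp(\gamma u_{ij})/\sum_{j'}\exp(\gamma u_{ij'})$ with $\gamma=\log\Gamma$). Your justification ``$\mathbf{u}_0\in\mathcal{U}$'' then delivers a right side with $\mu_{k,\mathbf{u}_0},\sigma_{k,\mathbf{u}_0}$ also at $\Gamma$. But under the true $(\Gamma_0,\mathbf{u}_0)$ with $\Gamma_0<\Gamma$, the standardized pair with those $\Gamma$-based moments is \emph{not} asymptotically standard bivariate normal (wrong centering), so the CLT step would fail as written. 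What you actually need on the right side is the standardization at the true assignment probabilities $p_{ij,0}=\exp(\gamma_0 u_{0,ij})/\sum_{j'}\exp(\gamma_0 u_{0,ij'})$. The inequality still holds, but the reason is not ``$\mathbf{u}_0\in\mathcal{U}$''; it is that the true $p_{ij,0}$ are \emph{feasible} in the left-side optimization at $\Gamma$, via $w_{ij}=\exp(\gamma_0 u_{0,ij})\in[1,\Gamma_0]\subseteq[1,\Gamma]$ (equivalently, $u_{ij}=(\gamma_0/\gamma)u_{0,ij}\in[0,1]$). You already invoke exactly this box-inclusion argument later for $\rho_{\mathbf{u}_0}\geq\rho^{*}_{\Gamma}$; you just need to apply it to the first inequality as well.

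The paper handles this by an equivalent monotonicity step: since the feasible sets of $(*)$ and $(**)$ enlarge with $\Gamma$, it first passes from $(y^{*}_{\Gamma},Q_{\rho^{*}_{\Gamma},\alpha})$ to $(y^{*}_{\Gamma_0},Q_{\rho^{*}_{\Gamma_0},\alpha})$ and only then plugs in $\mathbf{u}_0$, so that every moment is computed at $\Gamma_0$ and the CLT is immediate. Either route works; just make the $\Gamma$ versus $\Gamma_0$ dependence of $\mu_{k,\cdot},\sigma_{k,\cdot}$ explicit so the plug-in and the CLT refer to the same probabilities.
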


A nice feature of the traditional adaptive test (Rosenbaum, 2012) is that its design sensitivity is the larger of the two component tests. The Bonferroni adjustment and sample splitting method also have this design sensitivity but sometimes lose power in finite samples to the adaptive test. In Theorem~\ref{th:gainsinds}, we prove that the design sensitivity of our new adaptive approach is always greater than or equal to both two design sensitivities of the component tests, and surprisingly, \textit{strict inequality is possible}. We refer to this new phenomenon as ``super-adaptivity".

\begin{Theorem}[Super-adaptivity]\label{th:gainsinds}
Let $\widetilde{\Gamma}_{1}$ and $\widetilde{\Gamma}_{2}$ be the two design sensitivities of the two tests $T_{1}$ and $T_{2}$, and let $\widetilde{\Gamma}_{1: 2}$ be the design sensitivity of the adaptive testing procedure (\ref{test:adaptive}) implemented by Algorithm~\ref{algo:adaptive} with $T_{1}$ and $T_{2}$ as the two component tests. We have $\widetilde{\Gamma}_{1: 2}\geq \max\{ \widetilde{\Gamma}_{1}, \widetilde{\Gamma}_{2}\}$, and strict inequality is possible. 
\end{Theorem}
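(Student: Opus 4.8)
The plan is to prove the two assertions separately: the inequality $\widetilde{\Gamma}_{1:2}\ge\max\{\widetilde{\Gamma}_{1},\widetilde{\Gamma}_{2}\}$, which holds for every pair of sum statistics and is the routine part, and then the existence of an instance in which the inequality is strict. The only structural facts I need about the adaptive procedure (\ref{test:adaptive}) are the max--min inequality (\ref{inequality:minimax}) and that its data-dependent rejection threshold is uniformly bounded: since $\mathbb{P}(X_{1}\le Q,X_{2}\le Q)\ge\mathbb{P}(X_{1}\le Q)+\mathbb{P}(X_{2}\le Q)-1=2\Phi(Q)-1$, the defining relation of $Q_{\rho^{*}_{\Gamma},\alpha}$ forces $Q_{\rho^{*}_{\Gamma},\alpha}\le\Phi^{-1}(1-\alpha/2)$ for every realization and every $\Gamma$, i.e. the adaptive threshold never exceeds the Bonferroni threshold.

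For the inequality, fix $\Gamma<\max\{\widetilde{\Gamma}_{1},\widetilde{\Gamma}_{2}\}$ and assume without loss of generality that $\Gamma<\widetilde{\Gamma}_{1}$. For each $\mathbf{u}\in\mathcal{U}$ we have $\max_{k\in\{1,2\}}(t_{k}-\mu_{k,\mathbf{u}})/\sigma_{k,\mathbf{u}}\ge(t_{1}-\mu_{1,\mathbf{u}})/\sigma_{1,\mathbf{u}}$, and taking $\min_{\mathbf{u}\in\mathcal{U}}$ gives $\min_{\mathbf{u}\in\mathcal{U}}\max_{k\in\{1,2\}}(t_{k}-\mu_{k,\mathbf{u}})/\sigma_{k,\mathbf{u}}\ge\min_{\mathbf{u}\in\mathcal{U}}(t_{1}-\mu_{1,\mathbf{u}})/\sigma_{1,\mathbf{u}}$. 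Combined with the threshold bound above, the event $\{\min_{\mathbf{u}\in\mathcal{U}}(t_{1}-\mu_{1,\mathbf{u}})/\sigma_{1,\mathbf{u}}\ge\Phi^{-1}(1-\alpha/2)\}$ is contained in the rejection event of (\ref{test:adaptive}). Under the favorable situation with sensitivity parameter $\Gamma$, the probability of the former event is (asymptotically) the power of the one-sided level-$(\alpha/2)$ sensitivity analysis based on $T_{1}$ alone, which tends to $1$ because $\Gamma<\widetilde{\Gamma}_{1}$ and the design sensitivity of a sum statistic does not depend on the nominal level (for $T_{\text{abe}}$ this is Theorem~\ref{dsformula}; for $T_{\text{M-H}}$ it is Rosenbaum and Small, 2017; in the general case it is assumed). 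Hence $\Psi_{\Gamma,I}^{\text{adaptive}}\to 1$ for every $\Gamma<\max\{\widetilde{\Gamma}_{1},\widetilde{\Gamma}_{2}\}$, which is exactly $\widetilde{\Gamma}_{1:2}\ge\max\{\widetilde{\Gamma}_{1},\widetilde{\Gamma}_{2}\}$. Equivalently, one observes that the rejection region of (\ref{test:adaptive}) contains that of the Bonferroni procedure (\ref{test:bonferroni}), whose design sensitivity is $\max\{\widetilde{\Gamma}_{1},\widetilde{\Gamma}_{2}\}$.

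For the strict case it suffices to exhibit one favorable-situation data-generating process, two sum statistics $T_{1},T_{2}$ with well-defined design sensitivities, and a value $\Gamma_{0}>\max\{\widetilde{\Gamma}_{1},\widetilde{\Gamma}_{2}\}$ at which the adaptive test still has power tending to $1$: since the power of a sensitivity analysis is non-increasing in $\Gamma$ (the worst-case $p$-value is non-decreasing in $\Gamma$), this forces $\widetilde{\Gamma}_{1:2}\ge\Gamma_{0}>\max\{\widetilde{\Gamma}_{1},\widetilde{\Gamma}_{2}\}$. To produce such an instance I would pass to the $I\to\infty$ limit using the asymptotic separability algorithm of Gastwirth et al. (2000) together with the matched-data empirical-process machinery developed for Theorem~\ref{dsformula}, obtaining $I^{-1/2}\min_{\mathbf{u}\in\mathcal{U}}\max_{k\in\{1,2\}}(t_{k}-\mu_{k,\mathbf{u}})/\sigma_{k,\mathbf{u}}\to\kappa(\Gamma):=\min_{\pi}\max_{k\in\{1,2\}}\Delta_{k}(\Gamma,\pi)$ in probability, where $\pi$ indexes per-stratum worst-case tilting policies and $\Delta_{k}(\Gamma,\pi)$ is the limiting standardized treated-minus-expected mean of $T_{k}$; analogously $\widetilde{\Gamma}_{k}$ is the root of $\kappa_{k}(\Gamma):=\min_{\pi}\Delta_{k}(\Gamma,\pi)$. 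Since $\kappa(\Gamma)\ge\max_{k}\kappa_{k}(\Gamma)$ and all of these functions are continuous and strictly decreasing near their roots, $\widetilde{\Gamma}_{1:2}$ (the root of $\kappa$) strictly exceeds $\max\{\widetilde{\Gamma}_{1},\widetilde{\Gamma}_{2}\}$ precisely when $\kappa(\max\{\widetilde{\Gamma}_{1},\widetilde{\Gamma}_{2}\})>0$, i.e. when no single policy $\pi$ can simultaneously make $\Delta_{1}(\Gamma,\pi)\le 0$ and $\Delta_{2}(\Gamma,\pi)\le 0$ at $\Gamma=\max\{\widetilde{\Gamma}_{1},\widetilde{\Gamma}_{2}\}$. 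I would therefore engineer the distribution and the score vectors so that the tilting that suppresses the gap of $T_{1}$ necessarily inflates the gap of $T_{2}$, making these two ``bad policy regions'' disjoint, while keeping both $\widetilde{\Gamma}_{1},\widetilde{\Gamma}_{2}$ strictly above $1$; the final comparison can then be verified either analytically for a suitably structured toy example, or numerically for a concrete instance such as the aberrant rank test paired with the Mantel--Haenszel test under one of the models of Section~\ref{sec:computeds}.

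The main obstacle is entirely in this second part. There are two genuine difficulties: (i) justifying the $\sqrt{I}$-scaling limit of the coupled object $\min_{\mathbf{u}\in\mathcal{U}}\max_{k\in\{1,2\}}(t_{k}-\mu_{k,\mathbf{u}})/\sigma_{k,\mathbf{u}}$, which, unlike a single sum statistic, is nonconvex in $\mathbf{u}$ and requires carrying the separability reduction through the joint $(T_{1},T_{2})$ minimax --- this is precisely where the empirical-process arguments of Theorem~\ref{dsformula} are invoked; and (ii) the more delicate design problem of choosing the response distribution and the scores so that the $T_{1}$- and $T_{2}$-worst-case tilting policies provably cannot be reconciled at $\Gamma=\max\{\widetilde{\Gamma}_{1},\widetilde{\Gamma}_{2}\}$ (mere disagreement of the minimizers is not enough: one needs $\kappa$ to be strictly positive there), with both component design sensitivities remaining nondegenerate. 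A minor bookkeeping point, already used above, is that $Q_{\rho^{*}_{\Gamma},\alpha}$ is uniformly bounded by $\Phi^{-1}(1-\alpha/2)$, so the data-dependent threshold can never interfere with the divergence of the minimax deviate.
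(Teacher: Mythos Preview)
Your argument for $\widetilde{\Gamma}_{1:2}\ge\max\{\widetilde{\Gamma}_{1},\widetilde{\Gamma}_{2}\}$ is correct and is essentially the paper's proof: the adaptive rejection region contains that of the Bonferroni procedure (\ref{test:bonferroni}) because of (\ref{inequality:minimax}) and the threshold bound $Q_{\rho^{*}_{\Gamma},\alpha}\le\Phi^{-1}(1-\alpha/2)$, and the latter has design sensitivity $\max\{\widetilde{\Gamma}_{1},\widetilde{\Gamma}_{2}\}$.

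The gap is in the strict-inequality part. You correctly diagnose what is needed---a data-generating process and scores for which no single unmeasured-confounder allocation can simultaneously drive both standardized gaps to zero---but you do not produce one, and the construction is the entire content of this half of the theorem. Your proposed machinery (pushing asymptotic separability and the empirical-process arguments of Theorem~\ref{dsformula} through the joint minimax to obtain a limit $\kappa(\Gamma)$) is far heavier than what the paper uses and leaves both of your self-identified obstacles unresolved. More importantly, your concrete suggestion to try the aberrant rank test paired with the Mantel--Haenszel test under one of the models of Section~\ref{sec:computeds} would almost certainly fail to exhibit a visible strict gap: those two statistics are strongly positively correlated, and the paper itself remarks that substantial gains in design sensitivity from the minimax step arise when the component statistics are negatively correlated or independent, not when they are highly positively correlated.

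The paper's route is a direct, elementary construction in matched pairs. It encodes two integer scores in a single response via $R_{ij}\in\{a+b\sqrt{2}:a,b\in\mathbb{Z}\}$, takes $T_{1}$ and $T_{2}$ to be the paired-difference means of the $a$- and $b$-parts, and chooses the favorable distribution so that the paired differences $(D_{i}^{(1)},D_{i}^{(2)})$ equal $(3,-1)$ or $(-1,3)$ each with probability $1/2$---perfectly negatively correlated. Each component has design sensitivity $3$ by a one-line moment calculation. For the adaptive (minimax) test, the paper writes the worst-case-over-$\mathbf{u}$ of the minimum-over-$k$ expectation as a finite linear program in the per-pair treatment probabilities and verifies via the KKT conditions that its optimal value is $(\Gamma-1)/(1+\Gamma)<1$ for every finite $\Gamma$, whence $\widetilde{\Gamma}_{1:2}=+\infty$. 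No empirical-process limit is needed because the example is discrete and the minimax reduces to an explicit optimization. The key idea you are missing is to force the two component statistics to have \emph{antagonistic} worst-case confounders by making their scores negatively correlated at the stratum level; once you see that, the paired construction and the KKT verification are routine.
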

Theorem~\ref{th:gainsinds} shows that in terms of the design sensitivity, our new adaptive test dominates all the existing methods, including the traditional adaptive test, the Bonferroni adjustment and sample splitting. Recall that the design sensitivity is a threshold of the consistency of a test in a sensitivity analysis with respect to sensitivity parameter $\Gamma$. When $\widetilde{\Gamma}_{1: 2}=\max\{ \widetilde{\Gamma}_{1}, \widetilde{\Gamma}_{2}\}$, roughly speaking, the new adaptive test is consistent as long as one of the two component tests was consistent, which can also be obtained by the traditional adaptive approach. When $\widetilde{\Gamma}_{1: 2}>\max\{ \widetilde{\Gamma}_{1}, \widetilde{\Gamma}_{2}\}$, the new adaptive test can still be consistent even if neither of the two component tests was consistent, which cannot be achieved from using the traditional adaptive approach. 

Typically, substantial gains in design sensitivity (i.e., gaps between $\widetilde{\Gamma}_{1:2}$ and $\max \{ \widetilde{\Gamma}_{1}, \widetilde{\Gamma}_{2} \}$) resulting from Algorithm~\ref{algo:adaptive} are more likely to be observed with two negatively correlated, independent or weakly positively correlated component statistics than with two strongly positively correlated component statistics (see Table~\ref{tab:checkvariouscor} in Appendix A). When combining two statistics $T_{1}$ and $T_{2}$ on one response vector $\mathbf{R}$ in an adaptive test, we often expect $T_{1}$ and $T_{2}$ to be highly positively correlated, in which case gains in design sensitivity may be hard to see without large samples. But Theorem~\ref{th:gainsinds} is still worth highlighting since it is the first time that an adaptive test can result in a design sensitivity strictly larger than $\max \{ \widetilde{\Gamma}_{1}, \widetilde{\Gamma}_{2}\}$, and it can inspire further studies on designing new adaptive tests with larger design sensitivities.

Note that the design sensitivity only measures limiting insensitivity to hidden bias. In terms of the finite sample power, we need to pay the price for correcting for the two component tests in the adaptive test. That is, Theorem~\ref{th:gainsinds} does not imply that the power of the adaptive test in a sensitivity analysis is always greater than or equal to the maximal power of the two component tests for any sample size. Instead, Theorem~\ref{th:gainsinds} implies that as long as the sample size is sufficiently large, applying Algorithm~\ref{algo:adaptive} to perform an adaptive inference is as good or better than knowing which of the two component tests should be better and using only that test, and is typically much better than incorrectly choosing the worse one among the two component tests, regardless of what the unknown data generating process is and what the two component tests are. Having theoretically derived this favorable asymptotic property of the adaptive test in Theorem~\ref{th:gainsinds}, we turn to examining its performance with realistic sample sizes via simulations in Section~\ref{sec:simulation} and Tables~\ref{tab:checkgainsinds} and \ref{tab:checkvariouscor} in Appendix A.

\section{Simulation studies}\label{sec:simulation}

We examine the finite sample power of sensitivity analyses to check the validity of the theoretical intuitions gained from calculating design sensitivities and compare the performances of (i) the Mantel-Haenszel test, (ii) the aberrant rank test, and (iii) our new adaptive test applying Algorithm~\ref{algo:adaptive} with the Mantel-Haenszel test and the aberrant rank test as components. That is, we use simulations to estimate the probability that the worst-case p-value given by a test statistic in a sensitivity analysis with sensitivity parameter $\Gamma$ will be less than $\alpha=0.05$ under the favorable situation when there is an actual treatment effect and no hidden bias. Table~\ref{tab:powerwithadaptive} summarizes the simulated power of the three tests under Models 1 - 4 discussed in Section~\ref{sec:computeds}, where we match with three controls and number of matched strata $I=100$ or $I=1000$. In Table~\ref{tab:powerwithadaptive}, we set $\beta=1$ for Models 1 and 2 and set $\delta=2$ for Models 3 and 4. For reference, we also give the design sensitivity of each test statistic in the first row of each block. We summarize the simulated size of the above three tests in Table~\ref{tab:sizewithadaptive} in Appendix F.

\begin{table}[H]
\centering \caption{Simulated power of the Mantel-Haenszel test, the aberrant rank test and the adaptive test. We set $\alpha=0.05$, $c=1$ and $m=4$. We set $\beta=1$ for Models 1 and 2 and $\delta=2$ for Models 3 and 4. Each number is based on 2,000 replications. The largest of the three simulated powers in each case is in bold.}
\label{tab:powerwithadaptive}
\scriptsize
\begin{tabular}{ccccccc}
\hline
\multirow{2}{*}{\textbf{Model 1}} & \multicolumn{3}{c}{$I=100$ Matched Strata} & \multicolumn{3}{c}{$I=1000$ Matched Strata}\\
& Mantel-Haenszel & Aberrant rank & Adaptive test & Mantel-Haenszel & Aberrant rank & Adaptive test \\
\hline
$\widetilde{\Gamma}$ & 5.30 & 6.50 &  $\geq 6.50$ & 5.30 & 6.50 &  $\geq 6.50$ \\
\hline
$\Gamma=3.0$ & 0.71 & \textbf{0.87} & 0.83 & 1.00 & \textbf{1.00} & 1.00  \\
$\Gamma=3.5$ & 0.46 & \textbf{0.70} & 0.63 & 1.00 & \textbf{1.00} &  1.00  \\
$\Gamma=4.0$ & 0.28 & \textbf{0.52} & 0.43 & 0.96 & \textbf{1.00} &  1.00  \\
$\Gamma=4.5$ & 0.14 & \textbf{0.32} & 0.25 & 0.61 & \textbf{0.99} & 0.99   \\
$\Gamma=5.0$ & 0.08 & \textbf{0.21} & 0.14 & 0.17 & \textbf{0.89} & 0.82 \\
$\Gamma=5.5$ & 0.04 & \textbf{0.12} & 0.09 & 0.02 & \textbf{0.57} &  0.47  \\
$\Gamma=6.0$ & 0.01 & \textbf{0.06} & 0.04 & 0.00 & \textbf{0.20} &  0.14  \\
\hline
\multirow{2}{*}{\textbf{Model 2}} & \multicolumn{3}{c}{$I=100$ Matched Strata} & \multicolumn{3}{c}{$I=1000$ Matched Strata}\\
& Mantel-Haenszel & Aberrant rank & Adaptive test & Mantel-Haenszel & Aberrant rank & Adaptive test \\
\hline
$\widetilde{\Gamma}$ & 7.21 & 5.93 & $\geq 7.21$ & 7.21 & 5.93 & $\geq 7.21$\\
\hline
$\Gamma=3.0$ & \textbf{0.95} & 0.78  & 0.92  & \textbf{1.00} & 1.00 & 1.00 \\
$\Gamma=3.5$ & \textbf{0.84} & 0.58 & 0.77 & \textbf{1.00} & 1.00 &  1.00 \\
$\Gamma=4.0$ & \textbf{0.70} & 0.38 & 0.60  & \textbf{1.00} & 1.00 & 1.00 \\
$\Gamma=4.5$ & \textbf{0.51} & 0.22  & 0.44 & \textbf{1.00} & 0.91  & 1.00 \\
$\Gamma=5.0$ & \textbf{0.37} & 0.13 & 0.26 & \textbf{1.00} & 0.58 &  0.99 \\
$\Gamma=5.5$ & \textbf{0.22} & 0.07 & 0.16 &\textbf{0.93} & 0.20 & 0.88   \\
$\Gamma=6.0$ & \textbf{0.15} & 0.04 & 0.10 & \textbf{0.64} & 0.03 & 0.58    \\
\hline
\multirow{2}{*}{\textbf{Model 3}} & \multicolumn{3}{c}{$I=100$ Matched Strata} & \multicolumn{3}{c}{$I=1000$ Matched Strata}\\
& Mantel-Haenszel & Aberrant rank & Adaptive test & Mantel-Haenszel & Aberrant rank & Adaptive test \\
\hline
$\widetilde{\Gamma}$ & 2.37 & 4.07 & $\geq 4.07$ & 2.37 & 4.07 & $\geq 4.07$ \\
\hline
$\Gamma=1.0$ & 0.94 & \textbf{1.00} & 0.99 & 1.00 & \textbf{1.00} & 1.00  \\
$\Gamma=1.5$ & 0.52 & \textbf{0.94} & 0.93 & 1.00 & \textbf{1.00}  &  1.00 \\
$\Gamma=2.0$ & 0.15 & \textbf{0.74} & 0.71 & 0.63 & \textbf{1.00} & 1.00   \\
$\Gamma=2.5$ & 0.04 & \textbf{0.47}  & 0.39 & 0.01 & \textbf{1.00}  & 1.00 \\
$\Gamma=3.0$ & 0.01 & \textbf{0.24} & 0.17 & 0.00 & \textbf{0.94} &  0.89 \\
\hline
\multirow{2}{*}{\textbf{Model 4}} & \multicolumn{3}{c}{$I=100$ Matched Strata} & \multicolumn{3}{c}{$I=1000$ Matched Strata}\\
& Mantel-Haenszel & Aberrant rank & Adaptive test & Mantel-Haenszel & Aberrant rank & Adaptive test \\
\hline
$\widetilde{\Gamma}$  & 2.37 & 3.36 & $\geq 3.36$  & 2.37 & 3.36 & $\geq 3.36$\\
\hline
$\Gamma=1.0$ & 0.89 & \textbf{0.97} & 0.95 & 1.00 & \textbf{1.00} & 1.00  \\
$\Gamma=1.5$ & 0.46 & \textbf{0.78} & 0.72 & 1.00 & \textbf{1.00} & 1.00\\
$\Gamma=2.0$ & 0.14 & \textbf{0.47} & 0.39  & 0.55 & \textbf{1.00} &  1.00  \\
$\Gamma=2.5$ & 0.03 & \textbf{0.22}  & 0.17 & 0.02 & \textbf{0.88}  & 0.83 \\
$\Gamma=3.0$ & 0.01 & \textbf{0.08} & 0.05 & 0.00 & \textbf{0.29} &  0.24 \\
\hline
\end{tabular}
\end{table}

In Table~\ref{tab:powerwithadaptive}, in general, the power increases as the number of matched strata $I$ increases, and the power decreases as the bias magnitude $\Gamma$ increases, which agrees with empirical knowledge. The simulated power also verifies the validity of our design sensitivity formula. That is, as $I \rightarrow \infty$, the power of the test in a sensitivity analysis goes to 1 for $\Gamma<\widetilde{\Gamma}$, and the power goes to 0 for $\Gamma>\widetilde{\Gamma}$. For example, see the row $\Gamma=5.5$ for Model 1 in Table~\ref{tab:powerwithadaptive}, as $I$ increases from 100 to 1000, the power of the aberrant rank test with $\widetilde{\Gamma}=6.5>5.5$ is closer to 1, but the power of the Mantel-Haenszel test with $\widetilde{\Gamma}=5.3<5.5$ is closer to 0. From Table~\ref{tab:powerwithadaptive}, we can also observe that in Models 1, 3 and 4, the aberrant rank test is more powerful than the Mantel-Haenszel test; instead, in Model 2 the Mantel-Haenszel test has higher power than the aberrant rank test, and the gap between the two powers of these two tests could be extremely large, especially with large sample size and sensitivity parameter $\Gamma$ considerably greater than 1. For example, see Models 1 and 2 with $\Gamma=5.5$ and $I=1000$, and Models 3 and 4 with $\Gamma=2.5$ and $I=1000$. This confirms the two key insights obtained from calculation of design sensitivities: power of a sensitivity analysis can differ a lot with different choices between the two tests and the optimal choice between the two tests could be different under different data generating processes.

We now examine the asymptotic property of the adaptive test. Let $\widetilde{\Gamma}_{1}$, and $\widetilde{\Gamma}_{2}$ denote the design sensitivities of the Mantel-Haenszel test and the aberrant rank test respectively. As long as the given $\Gamma < \max\{ \widetilde{\Gamma}_{1}, \widetilde{\Gamma}_{2}\}$, the power of the adaptive test in a sensitivity analysis goes to 1 as sample size $I \rightarrow \infty$, even if one of the powers of the two component tests goes to zero if $\min\{ \widetilde{\Gamma}_{1}, \widetilde{\Gamma}_{2}\}<  \Gamma < \max\{ \widetilde{\Gamma}_{1}, \widetilde{\Gamma}_{2}\}$. For example, see the rows $\Gamma=6.0$ of Models 1 and 2 and the rows $\Gamma=3.0$ of Models 3 and 4 in Table~\ref{tab:powerwithadaptive}.

We then examine the finite sample performance of the adaptive test. As discussed in Section~\ref{sec:adaptive}, although the adaptive test uniformly dominates the two component tests in terms of the design sensitivity, in practice we need to pay the price for correcting for the two component tests and the finite sample power of the adaptive test may sit in between those of the two component tests, which is the case in Table~\ref{tab:powerwithadaptive}. If this is the case, the simulation results in Table~\ref{tab:powerwithadaptive} confirm that the price paid for correcting for the two component tests is very much worth it in the sense that if the power of the Mantel-Haenszel test and that of the aberrant rank test differ a lot, then the power of the adaptive test is typically much closer to the higher one of the powers of the two component tests than to the lower one in each case. This favorable finite sample property of the adaptive test holds both for relatively large sample sizes (e.g., see the cases $\Gamma=5, I=1000$ in Models 1 and 2) and relatively small sample sizes (e.g., see the cases $\Gamma=1.5$, $I=100$ in Models 3 and 4). To conclude, the new adaptive test is like a high quality insurance policy: we will lose a little money (the low cost of the insurance) if we bought one but an accident never occurs (i.e., if we were lucky enough to always choose the better one among the two component tests), but we will lose much more if an accident indeed occurs (i.e., if we unfortunately choose the worse one among the two component tests) but we never bought one.



\section{Adaptive inference of the effect of mother's age on child stunted growth}\label{sec:real data}

For the study of the effect of mother's age on child stunting discussed in Section~\ref{example}, we summarize the worst-case p-values of a sensitivity analysis reported by three different test statistics: the Mantel-Haenszel test, the aberrant rank test and the adaptive test applying Algorithm~\ref{algo:adaptive} putting together these two tests with various sensitivity parameters $\Gamma$ ranging from $1.00$ to $1.45$; see Appendix G for more details. From Table~\ref{tab:p-values}, we find that the Mantel-Haenszel test fails to detect a possible treatment effect (i.e., worst-case p-value $>0.05$) with sensitivity parameter $\Gamma=1.17$ under level 0.05. However, the aberrant rank test can detect a possible treatment effect (i.e., worst-case p-value $<0.05$) up to a much larger sensitivity parameter $\Gamma=1.43$. Thus, we can see that when studying causal determinants of aberrant response, the aberrant rank test might be preferred to the Mantel-Haenszel test since it might be less sensitive. However, we did not know this in advance of looking at the data, and choosing the test that is less sensitive on the data will inflate Type I errors in a sensitivity analysis. To use the data in choosing the best test while controlling the Type I error rate, we apply the adaptive approach developed in Section~\ref{sec:adaptive} to combine the aberrant rank test with the Mantel-Haenszel test to guarantee a powerful test in sensitivity analyses. From Table~\ref{tab:p-values}, we can find that if we combine these two tests with the new adaptive approach, we can successfully detect the possible actual treatment effect with $\Gamma=1.36$, which is close to the results obtained by using the more favorable one between the two component tests - the aberrant rank test, and substantially better than the least favorable of the two tests. Therefore, both the aberrant rank test and the adaptive test enable us to detect a significant treatment effect even with a nontrivial magnitude of hidden bias. Meanwhile, for this particular data set, the Mantel-Haenszel test would possibly give an exaggerated report of sensitivity to bias. This agrees with all our theoretical insights and simulations results. 

\begin{table}[H]
  \centering
  \caption{One-sided worst-case p-values under various $\Gamma$. The p-values $\approx 0.05$ are in bold. We also report the approximate sensitivity value of each test with level 0.05.}
  \label{tab:p-values}
  \scriptsize
  \centering
\begin{tabular}{cccc}
\hline\multicolumn{4}{c}{One-sided worst-case p-values under various $\Gamma$}\\
& Mantel-Haenszel & Aberrant rank & Adaptive test \\
\hline
$\Gamma=1.00$ & 0.010 & 0.001 & 0.001  \\
$\Gamma=1.05$ & 0.017 & 0.001 & 0.003  \\
$\Gamma=1.10$ & 0.028 & 0.003 & 0.005   \\
$\Gamma=1.15$ & 0.043 & 0.005  & 0.008  \\
$\Gamma=1.17$ & \textbf{0.051} & 0.006 & 0.010 \\
$\Gamma=1.20$ & 0.064 & 0.008 & 0.014  \\
$\Gamma=1.25$ & 0.089 &  0.013 & 0.022  \\
$\Gamma=1.30$ & 0.121 & 0.020 & 0.032  \\
$\Gamma=1.35$ & 0.157 & 0.029 & 0.047  \\
$\Gamma=1.36$ & 0.165 & 0.031 & \textbf{0.050} \\
$\Gamma=1.40$ & 0.198 & 0.040 & 0.064  \\
$\Gamma=1.43$ & 0.225 & \textbf{0.049} & 0.077  \\
$\Gamma=1.45$ & 0.244 & 0.055 & 0.086 \\
\hline
Sensitivity value & 1.17 & 1.43 & 1.36 \\
\hline
\end{tabular}
\end{table}

\section{Discussion}

We have developed an adaptive aberrant rank approach to conducting inference about the effect of a treatment on aberrant (bad) outcomes from matched observational studies when there is an established cutoff for what constitutes an aberrant outcome but more aberrant outcomes are worse than less aberrant ones. We have shown that our new approach asymptotically dominates the traditional approach (performing the Mantel-Haenszel test with the dichotomous outcome indicating aberration/non-aberration) and performs well in simulation studies. To establish the new approach, we have developed an empirical process approach to studying design sensitivity and developed a general adaptive testing procedure. These developments can be applied to other types of general matched observational studies beyond the aberrant outcome setting we have studied. 

There are limitations to this work. For example, we have not discussed how to enable adjustment for measured variables that were not used for matching. This side information, along with the matched observed covariates, can potentially be used to perform covariance adjustment in randomization inference (Rosenbaum, 2002a). However, existing model-based covariance adjustment approaches, e.g., covariance adjustment with robust linear regression considered in Rosenbaum (2002a), may not be directly applicable in our setting since the aberrant rank considers some truncated outcome (zero if not aberrant and multi-valued if aberrant). It might be fruitful for future research to explore how to incorporate both matched and unmatched measured variables to perform some covariance adjustment to further develop the aberrant rank approach.

\begin{center}
{\large\bf Acknowledgement}
\end{center}

We would like to thank Peter Cohen for his kind support for implementing Algorithm~\ref{algo:adaptive} in this article, Karun Adusumilli, Bhaswar Bhattacharya, and Paul Rosenbaum for helpful discussions, and the participants in the causal inference reading group for helpful comments.

\section*{References}%
%



\setlength{\hangindent}{12pt}
\noindent
Bloss, E., Wainaina, F. and Bailey, R. C. (2004). Prevalence and predictors of underweight, stunting, and wasting among children aged 5 and under in western Kenya. \textit{Journal of Tropical Pediatrics.} \textbf{50(5)} 260-270.

\setlength{\hangindent}{12pt}
\noindent
Brown, K. H., Black, R. E. and Becker, S. (1982). Seasonal changes in nutritional status and the prevalence of malnutrition in a longitudinal study of young children in rural Bangladesh. \textit{Am J Clin Nutr.} \textbf{36(2)} 303-13.

\setlength{\hangindent}{12pt}
\noindent
Byrd, R. H., Lu, P., Nocedal, J. and Zhu, C. (1995). A limited memory algorithm for bound constrained optimization. \textit{SIAM Journal on Scientific Computing.} \textbf{16(5)} 1190-1208.


\setlength{\hangindent}{12pt}
\noindent
Cox, D. R. (2018). \textit{Analysis of Binary Data.} Routledge.


\setlength{\hangindent}{12pt}
\noindent
Darteh, E. K. M., Acquah, E. and Kumi-Kyereme, A. (2014). Correlates of stunting among children in Ghana. \textit{BMC Public Health.} \textbf{14(1)} 504.

\setlength{\hangindent}{12pt}
\noindent
Ertefaie, A., Small, D. S. and Rosenbaum, P. R. (2018). Quantitative evaluation of the trade-off of strengthened instruments and sample size in observational studies. \textit{Journal of the American Statistical Association.} \textbf{113(523)} 1122-1134.


\setlength{\hangindent}{12pt}
\noindent
Fink, G., Günther, I. and Hill, K. (2011). The effect of water and sanitation on child health: evidence from the demographic and health surveys 1986–2007. \textit{International Journal of Epidemiology.} \textbf{40(5)} 1196-1204.

\setlength{\hangindent}{12pt}
\noindent
Fogarty, C. B. and Small, D. S. (2016). Sensitivity analysis for multiple comparisons in matched observational studies through quadratically constrained linear programming. \textit{Journal of the American Statistical Association.} \textbf{111(516)} 1820-1830.

\setlength{\hangindent}{12pt}
\noindent
Fogarty, C. B., Lee, K., Kelz, R. R. and Keele, L. J. (2019). Biased Encouragements and Heterogeneous Effects in an Instrumental Variable Study of Emergency General Surgical Outcomes. \textit{arXiv preprint.} arXiv:1909.09533.


\setlength{\hangindent}{12pt}
\noindent
Garrett, J. L. and Ruel, M. T. (2005). Stunted child–overweight mother pairs: prevalence and association with economic development and urbanization. \textit{Food and Nutrition Bulletin.} \textbf{26(2)} 209-221.

\setlength{\hangindent}{12pt}
\noindent
Gastwirth, J. L., Krieger, A. M. and Rosenbaum, P. R. (2000). Asymptotic separability in sensitivity analysis. \textit{Journal of the Royal Statistical Society: Series B (Statistical Methodology).} \textbf{62(3)} 545-555.

\setlength{\hangindent}{12pt}
\noindent
Hansen, B. B. (2004). Full matching in an observational study of coaching for the SAT. \textit{Journal of the American Statistical Association.} \textbf{99(467)} 609-618.

\setlength{\hangindent}{12pt}
\noindent
Hansen, B. B. and Klopfer, S. O. (2006). Optimal full matching and related designs via network flows. \textit{Journal of Computational and Graphical Statistics.} \textbf{15(3)} 609-627.

\setlength{\hangindent}{12pt}
\noindent
Hansen, B. B., Rosenbaum, P. R. and Small, D. S. (2014). Clustered treatment assignments and sensitivity to unmeasured biases in observational studies. \textit{Journal of the American Statistical Association.} \textbf{109(505)} 133-144.

\setlength{\hangindent}{12pt}
\noindent
Harris, N.S., Crawford, P.B., Yangzom, Y., Pinzo, L., Gyaltsen, P. and Hudes, M. (2001).  Nutritional and health status of Tibetan children living at high altitudes.  \textit{New England Journal of Medicine.} \textbf{344} 341-347.

\setlength{\hangindent}{12pt}
\noindent
Heller, R., Rosenbaum, P. R. and Small, D. S. (2009). Split samples and design sensitivity in observational studies. \textit{Journal of the American Statistical Association.} \textbf{104(487)} 1090-1101.

\setlength{\hangindent}{12pt}
\noindent
Hosman, C. A., Hansen, B. B. and Holland, P. W. (2010). The sensitivity of linear regression coefficients’ confidence limits to the omission of a confounder. \textit{The Annals of Applied Statistics.} \textbf{4(2)} 849-870.

\setlength{\hangindent}{12pt}
\noindent
Howard, S. R. and Pimentel, S. D. (2019). The uniform general signed rank test and its design sensitivity. \textit{Biometrika.} Forthcoming.




\setlength{\hangindent}{12pt}
\noindent
Keele, L. and Quinn, K. M. (2017). Bayesian sensitivity analysis for causal effects from $2\times2 $ tables in the presence of unmeasured confounding with application to presidential campaign visits. \textit{The Annals of Applied Statistics.} \textbf{11(4)} 1974-1997.




\setlength{\hangindent}{12pt}
\noindent
Mantel, N. and Haenszel, W. (1959). Statistical aspects of the analysis of data from retrospective studies of disease. \textit{Journal of the National Cancer Institute.} \textbf{22(4)} 719-748.

\setlength{\hangindent}{12pt}
\noindent
McCandless, L. C., Gustafson, P. and Levy, A. (2007). Bayesian sensitivity analysis for unmeasured confounding in observational studies. \textit{Statistics in Medicine.} \textbf{26(11)} 2331-2347.

\setlength{\hangindent}{12pt}
\noindent
Mitra, N. and Heitjan, D. F. (2007). Sensitivity of the hazard ratio to nonignorable treatment assignment in an observational study. \textit{Statistics in Medicine.} \textbf{26(6)} 1398-1414.


\setlength{\hangindent}{12pt}
\noindent
Neyman, J. S. (1923). On the Application of Probability Theory to Agricultural Experiments. Essay on Principles. Section 9. (Tlanslated and edited by DM Dabrowska and TP Speed, Statistical Science (1990), 5, 465-480). \textit{Annals of Agricultural Sciences.} \textbf{10} 1-51.

\setlength{\hangindent}{12pt}
\noindent
Null, C., Stewart, C. P., Pickering, A. J., ... and Hubbard, A. E. (2018). Effects of water quality, sanitation, handwashing, and nutritional interventions on diarrhoea and child growth in rural Kenya: a cluster-randomised controlled trial. \textit{The Lancet Global Health.} \textbf{6(3)} e316-e329.



\setlength{\hangindent}{12pt}
\noindent
Phuka, J.C., Maleta, K., Thakwalakwa, C.,  Bun Cheung, Y., Briend, A., Manary, M.J. and Ashorn, P.  (2008).  Complementary Feeding With Fortified Spread and Incidence of Severe Stunting in 6- to 18-Month-Old Rural Malawians.  \textit{JAMA Pediatrics.} \textbf{162} 1619-1626.

\setlength{\hangindent}{12pt}
\noindent
Pimentel, S. D., Kelz, R. R., Silber, J. H. and Rosenbaum, P. R. (2015). Large, sparse optimal matching with refined covariate balance in an observational study of the health outcomes produced by new surgeons. \textit{Journal of the American Statistical Association.} \textbf{110(510)} 515-527.


\setlength{\hangindent}{12pt}
\noindent
Rosenbaum, P. R. (2002a). Covariance adjustment in randomized experiments and observational studies. \textit{Statistical Science.} \textbf{17(3)} 286-327.

\setlength{\hangindent}{12pt}
\noindent
Rosenbaum, P. R. (2002b). \textit{Observational Studies.} New York: Springer.

\setlength{\hangindent}{12pt}
\noindent
Rosenbaum, P. R. (2004). Design sensitivity in observational studies. \textit{Biometrika.} \textbf{91(1)} 153-164.

\setlength{\hangindent}{12pt}
\noindent
Rosenbaum, P. R. (2007a). Confidence intervals for uncommon but dramatic responses to treatment. \textit{Biometrics.} \textbf{63(4)} 1164-1171.

\setlength{\hangindent}{12pt}
\noindent
Rosenbaum, P. R. (2007b). Sensitivity analysis for m‐estimates, tests, and confidence intervals in matched observational studies. \textit{Biometrics.} \textbf{63(2)} 456-464.

\setlength{\hangindent}{12pt}
\noindent
Rosenbaum, P. R. (2010a). Design sensitivity and efficiency in observational studies. \textit{Journal of the American Statistical Association.} \textbf{105(490)} 692-702.

\setlength{\hangindent}{12pt}
\noindent
Rosenbaum, P. R. (2010b). \textit{Design of Observational Studies.} NY: Springer

\setlength{\hangindent}{12pt}
\noindent
Rosenbaum, P. R. (2011). A new u‐Statistic with superior design sensitivity in matched observational studies. \textit{Biometrics.} \textbf{67(3)} 1017-1027.

\setlength{\hangindent}{12pt}
\noindent
Rosenbaum, P. R. (2012). Testing one hypothesis twice in observational studies. \textit{Biometrika.} \textbf{99(4)} 763-774.

\setlength{\hangindent}{12pt}
\noindent
Rosenbaum, P. R. (2013). Impact of multiple matched controls on design sensitivity in observational studies. \textit{Biometrics.} \textbf{69(1)} 118-127.

\setlength{\hangindent}{12pt}
\noindent
Rosenbaum, P. R. (2014). Weighted M-statistics with superior design sensitivity in matched observational studies with multiple controls. \textit{Journal of the American Statistical Association.} \textbf{109(507)} 1145-1158.

\setlength{\hangindent}{12pt}
\noindent
Rosenbaum, P. R. (2015). Bahadur efficiency of sensitivity analyses in observational studies. \textit{Journal of the American Statistical Association.} \textbf{110(509)} 205-217.

\setlength{\hangindent}{12pt}
\noindent
Rosenbaum, P. R. (2016). The cross‐cut statistic and its sensitivity to bias in observational studies with ordered doses of treatment. \textit{Biometrics.} \textbf{72(1)} 175-183.

\setlength{\hangindent}{12pt}
\noindent
Rosenbaum, P. R. (2017). \textit{Observation and Experiment: An Introduction to Causal Inference.} Harvard University Press.

\setlength{\hangindent}{12pt}
\noindent
Rosenbaum, P. R. and Rubin, D. B. (1983). The central role of the propensity score in observational studies for causal effects. \textit{Biometrika.} \textbf{70(1)} 41-55.

\setlength{\hangindent}{12pt}
\noindent
Rosenbaum, P. R. and Silber, J. H. (2008). Aberrant effects of treatment. \textit{Journal of the American Statistical Association.} \textbf{103(481)} 240-247.

\setlength{\hangindent}{12pt}
\noindent
Rosenbaum, P. R. and Small, D. S. (2017). An adaptive Mantel–Haenszel test for sensitivity analysis in observational studies. \textit{Biometrics.} \textbf{73(2)} 422-430.

\setlength{\hangindent}{12pt}
\noindent
Rubin, D. B. (1974). Estimating causal effects of treatments in randomized and nonrandomized studies. \textit{Journal of Educational Psychology.} \textbf{66(5)} 688.

\setlength{\hangindent}{12pt}
\noindent
Rubin, D. B. (2006). \textit{Matched Sampling for Causal Effects.} Cambridge University Press.

\setlength{\hangindent}{12pt}
\noindent
Shauly‐Aharonov, M. (2020). An exact test with high power and robustness to unmeasured confounding effects. \textit{Statistics in Medicine.} To appear.


\setlength{\hangindent}{12pt}
\noindent
Shepherd, B. E., Gilbert, P. B., Jemiai, Y. and Rotnitzky, A. (2006). Sensitivity analyses comparing outcomes only existing in a subset selected post‐randomization, conditional on covariates, with application to HIV vaccine trials. \textit{Biometrics.} \textbf{62(2)} 332-342.

\setlength{\hangindent}{12pt}
\noindent
Slepian, D. (1962). The one‐sided barrier problem for Gaussian noise. \textit{Bell System Technical Journal.} \textbf{41(2)} 463-501.

\setlength{\hangindent}{12pt}
\noindent
Small, D. S., Cheng, J., Halloran, M. E. and Rosenbaum, P. R. (2013). Case definition and design sensitivity. \textit{Journal of the American Statistical Association.} \textbf{108(504)} 1457-1468.

\setlength{\hangindent}{12pt}
\noindent
Stuart, E. A. (2010). Matching methods for causal inference: A review and a look forward. \textit{Statistical Science} \textbf{25(1)} 1.

\setlength{\hangindent}{12pt}
\noindent
Stuart, E. A. and Hanna, D. B. (2013). Commentary: Should Epidemiologists Be More Sensitive to Design Sensitivity?. \textit{Epidemiology.} \textbf{24(1)} 88-89.




\setlength{\hangindent}{12pt}
\noindent
Van de Poel, E., Hosseinpoor, A. R., Jehu-Appiah, C., Vega, J. and Speybroeck, N. (2007). Malnutrition and the disproportional burden on the poor: the case of Ghana. \textit{International Journal for Equity in Health.} \textbf{6(1)} 21.

\setlength{\hangindent}{12pt}
\noindent
VanderWeele, T. J. and Ding, P. (2017). Sensitivity analysis in observational research: introducing the E-value. \textit{Annals of Internal Medicine.} \textbf{167(4)} 268-274.


\setlength{\hangindent}{12pt}
\noindent
Walker, S. P., Powell, C. A., Grantham-McGregor, S. M., Himes, J. H. and Chang, S. M. (1991). Nutritional supplementation, psychosocial stimulation, and growth of stunted children: the Jamaican study. \textit{Am. J. Clin. Nutr.} \textbf{54(4)} 642-648.

\setlength{\hangindent}{12pt}
\noindent
WHO (1986).  Use and interpretation of anthropometric indicators of nutritional status.  \textit{Bulletin of the World Health Organization.} \textbf{64} 929-941.

\setlength{\hangindent}{12pt}
\noindent
WHO (2006). WHO child growth standards: length/height-for-age, weight-for-age, weight-for-length, weight-for-height and body mass index-for-age: methods and development.

\setlength{\hangindent}{12pt}
\noindent
WHO (2017). Stunting in a Nutshell. \url{https://www.who.int/nutrition/healthygrowthproj_stunted_videos/en/}. Accessed, 02-20.



\setlength{\hangindent}{12pt}
\noindent
Zhang, K., Small, D. S., Lorch, S., Srinivas, S. and Rosenbaum, P. R. (2011). Using split samples and evidence factors in an observational study of neonatal outcomes. \textit{Journal of the American Statistical Association.} \textbf{106(494)} 511-524.

\setlength{\hangindent}{12pt}
\noindent
Zhao, Q. (2018). On sensitivity value of pair-matched observational studies. \textit{Journal of the American Statistical Association.} 1-10.

\setlength{\hangindent}{12pt}
\noindent
Zhao, Q., Small, D. S. and Rosenbaum, P. R. (2018). Cross-screening in observational studies that test many hypotheses. \textit{Journal of the American Statistical Association.} \textbf{113(523)} 1070-1084.

\setlength{\hangindent}{12pt}
\noindent
Zhu, C., Byrd, R. H., Lu, P. and Nocedal, J. (1997). Algorithm 778: L-BFGS-B: Fortran subroutines for large-scale bound-constrained optimization. \textit{ACM Transactions on Mathematical Software (TOMS).} \textbf{23(4)} 550-560.

\setlength{\hangindent}{12pt}
\noindent
Zubizarreta, J. R. (2012). Using mixed integer programming for matching in an observational study of kidney failure after surgery. \textit{Journal of the American Statistical Association.} \textbf{107(500)} 1360-1371.

\setlength{\hangindent}{12pt}
\noindent
Zubizarreta, J. R., Cerdá, M. and Rosenbaum, P. R. (2013). Effect of the 2010 Chilean Earthquake on posttraumatic stress reducing sensitivity to unmeasured bias through study design. \textit{Epidemiology} \textbf{24(1)} 79.

\setlength{\hangindent}{12pt}
\noindent
Zubizarreta, J. R., Paredes, R. D. and Rosenbaum, P. R. (2014). Matching for balance, pairing for heterogeneity in an observational study of the effectiveness of for-profit and not-for-profit high schools in Chile. \textit{The Annals of Applied Statistics.} \textbf{8(1)} 204-231.

\clearpage

  \begin{center}
    {\LARGE\bf Online Supplementary Materials for ``Increasing Power for Observational Studies of Aberrant Response: An Adaptive Approach"}
\end{center}
  
 \begin{abstract}
    Appendix A contains proofs for Propositions \ref{prop:example} and \ref{proposition:level}, Theorems \ref{dsformula} and \ref{th:gainsinds}, and related simulations. Appendix B contains additional calculations of design sensitivities with correlation within matched strata. Appendix C contains more details on the regularity assumptions used when deriving the adaptive test. Appendix D contains more details on the second stage of the two-stage programming method. Appendix E shows how the adaptive testing procedure described in the main text can be directly extended to work for general full matching case. Appendix F investigates the simulated size of a sensitivity analysis with the Mantel-Haenszel test, the aberrant rank test and the adaptive test. Appendix G gives more details on Section~\ref{sec:intro} and Section~\ref{sec:real data}.
 \end{abstract}

\begin{center}
{\large\bf Appendix A: Proofs and Related Simulations}
\end{center}

\subsection*{Proof of Proposition~\ref{prop:example}}

\begin{proof}
The validity of Assumptions~\ref{ass:iid} and \ref{ass:nonzeromeasure} for each example follows immediately from the general assumptions on $(R_{i1},\dots, R_{im})$, so we just need to check the validity of Assumption~\ref{ass:treateffect}. For Example 1 with $\beta>0$ and $c \in \mathbb{R}$, $\mathbb{P}(R_{i1}\geq t)=\mathbb{P}(R_{i2} \geq t-\beta)>\mathbb{P}(R_{i2}\geq t)$ for all $t\geq c$. For Example 2 with $\delta > 1$ and $c > 0$, $\mathbb{P}(R_{i1}\geq t)=\mathbb{P}(R_{i2} \geq \frac{t}{\delta})>\mathbb{P}(R_{i2}\geq t)$ for all $t\geq c$. For Example 3 with $0< p < 1$, $q > 1$ and $c \in \mathbb{R}$, $\mathbb{P}(R_{i1}\geq t)=1-F_{1}(t)=1-p\cdot F_{2}^{q}(t)-(1-p)\cdot F_{2}(t)>1-F_{2}(t)=\mathbb{P}(R_{i2}\geq t)$ for all $t \geq c$. Thus, in Examples 1-3, $\mathbb{P}(R_{i1}\geq t)>\frac{1}{m}\cdot \{ \mathbb{P}(R_{i1}\geq t)+(m-1)\cdot \mathbb{P}(R_{i2}\geq t)\}=\frac{1}{m}\sum_{j=1}^{m}\mathbb{P}(R_{ij}\geq t)$ holds true for any $t\geq c$. From the general assumptions on $(R_{i1}, \dots, R_{im})$, $\mathbb{P}(R_{i(m)}> R_{i1} \geq t)$ for some $t\geq c$ is trivially true. Thus, Assumption~\ref{ass:treateffect} also holds for Examples 1-3.
\end{proof}

\subsection*{Proof of Theorem~\ref{dsformula}}

\begin{Lemma}\label{lem:uniformapprox}
Let $G(v)=\frac{1}{m}\sum_{j=1}^{m}\max \{ F_{j}(v)-F_{j}(c), 0\}$. Under Assumption~\ref{ass:iid}, we have as $I \rightarrow \infty$,
\begin{equation*}
	\sup_{v}\Big|\frac{q(v\mid \mathbf{R})}{mI}-G(v)\Big|\xrightarrow{a.s.} 0.
\end{equation*}
\end{Lemma}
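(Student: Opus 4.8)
The plan is to recognize this as a Glivenko--Cantelli statement for the empirical measure generated by the i.i.d.\ strata. First I would rewrite the normalized aberrant rank as a stratum-wise empirical average. Put
\[
H_i(v) = \frac{1}{m}\sum_{j=1}^{m}\mathbbm{1}(c\le R_{ij}\le v),\qquad \widehat G_I(v) = \frac{1}{I}\sum_{i=1}^{I}H_i(v);
\]
directly from the definition of $q(\cdot\mid\mathbf R)$ one has $q(v\mid\mathbf R)/(mI) = \widehat G_I(v)$. By Assumption~\ref{ass:iid} the vectors $(R_{i1},\dots,R_{im})$ are i.i.d.\ over $i$, so the random functions $v\mapsto H_i(v)$ are i.i.d.\ over $i$; each is nondecreasing, right-continuous and bounded in $[0,1]$. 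Note that any dependence \emph{within} a stratum is irrelevant, since the average over $j$ is taken before the average over the i.i.d.\ index $i$; this is what lets the single assumption of i.i.d.\ strata suffice.

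Second, I would pin down the pointwise limit. For fixed $v$, the strong law of large numbers applied to the bounded i.i.d.\ summands $H_i(v)$ gives $\widehat G_I(v)\xrightarrow{a.s.}\mathbb E[H_1(v)] = \tfrac1m\sum_{j=1}^m\mathbb P(c\le R_{1j}\le v)$. Continuity of $F$ (Assumption~\ref{ass:iid}) makes $\mathbb P(c\le R_{1j}\le v)$ equal $F_j(v)-F_j(c)$ for $v\ge c$ and $0$ for $v<c$, i.e.\ $\mathbb E[H_1(v)] = G(v)$. In particular $G$ is continuous and nondecreasing, with $G(v)=0$ for $v\le c$ and $G(v)\uparrow G(\infty):=\tfrac1m\sum_j(1-F_j(c))$ as $v\to\infty$; correspondingly $\widehat G_I(v)=0$ for $v<c$ and $\widehat G_I(v)\uparrow\widehat G_I(\infty) = \tfrac1{mI}\sum_{i,j}\mathbbm{1}(R_{ij}\ge c)$, and the SLLN also gives $\widehat G_I(\infty)\xrightarrow{a.s.}G(\infty)$.

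Third, I would upgrade pointwise to uniform convergence by the classical partitioning argument, which works because $G$ is continuous and both $\widehat G_I$ and $G$ are monotone. Given $\varepsilon>0$, choose a finite grid $c=v_0<v_1<\dots<v_K=+\infty$ with $G(v_k)-G(v_{k-1})\le\varepsilon$ for all $k$ (possible by continuity and monotonicity of $G$, taking the $v_k$ where $G$ crosses successive multiples of $\varepsilon$ and adjoining $v=\infty$ to absorb the right tail). For $v\in[v_{k-1},v_k)$, monotonicity of $\widehat G_I$ and $G$ sandwiches $\widehat G_I(v)-G(v)$ between $\widehat G_I(v_{k-1})-G(v_k)$ and $\widehat G_I(v_k)-G(v_{k-1})$, so
\[
\sup_v\bigl|\widehat G_I(v)-G(v)\bigr|\le\max_{0\le k\le K}\bigl|\widehat G_I(v_k)-G(v_k)\bigr|+\varepsilon .
\]
Applying the SLLN at each of the finitely many grid points (a finite intersection of almost sure events, with $v_K=\infty$ covered by the limit noted above) forces $\max_k|\widehat G_I(v_k)-G(v_k)|\to0$ a.s., hence $\limsup_I\sup_v|\widehat G_I(v)-G(v)|\le\varepsilon$ a.s.; taking $\varepsilon\downarrow0$ along a sequence completes the proof.

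The step I would watch most carefully is the construction of the grid together with the handling of the possibly unbounded right tail of the support of the $R_{ij}$; the reasoning is routine but needs the tail-absorbing grid point $v=\infty$ to be set up cleanly. An alternative that avoids an ad hoc grid is to invoke the bracketing Glivenko--Cantelli theorem for the function class $\{\phi_v(\mathbf x)=\tfrac1m\sum_j\mathbbm{1}(c\le x_j\le v):v\in\mathbb R\}$, which has envelope $1$ and, for every $\varepsilon>0$, finitely many $\varepsilon$-brackets obtained from exactly the same grid.
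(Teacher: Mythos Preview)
Your argument is correct. The route differs from the paper's in one structural choice: you keep the stratum intact, writing $q(v\mid\mathbf R)/(mI)=\frac{1}{I}\sum_i H_i(v)$ as an average of i.i.d.\ \emph{stratum-level} functions, and then run a direct Glivenko--Cantelli grid argument exploiting monotonicity and continuity of $G$. The paper instead splits coordinate-wise,
\[
\sup_v\Big|\frac{q(v\mid\mathbf R)}{mI}-G(v)\Big|\le \frac{1}{m}\sum_{j'=1}^{m}\sup_v\Big|\frac{1}{I}\sum_{i'=1}^{I}\mathbbm{1}(v\ge R_{i'j'}\ge c)-\max\{F_{j'}(v)-F_{j'}(c),0\}\Big|,
\]
and then for each fixed $j'$ invokes the bracketing Glivenko--Cantelli theorem via the bracketing-number bound of Example~19.6 in Van der Vaart (2000), applied to the i.i.d.\ sequence $R_{1j'},R_{2j'},\dots$. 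Your approach is more self-contained (no external reference needed) and makes the role of Assumption~\ref{ass:iid} transparent, since you never need the marginals $F_j$ separately until you identify the limit; the paper's approach is terser once the citation is granted. The bracketing alternative you sketch at the end is essentially what the paper does, except that the paper applies it one coordinate at a time rather than to the stratum-level class $\{\phi_v\}$.
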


\begin{proof}
	We have the following expression
	\begin{align*}
		\sup_{v}\Big|\frac{q(v\mid \mathbf{R})}{mI}-G(v)\Big|&=\sup_{v}\Big|\frac{1}{m}\sum_{j^{\prime}=1}^{m}\Big[\frac{1}{I}\sum_{i^{\prime}=1}^{I}\mathbbm{1}(v \geq R_{i^{\prime}j^{\prime}}>c)-\max\{F_{j^{\prime}}(v)-F_{j^{\prime}}(c),0\}\Big]\Big|\\
		&\leq \sup_{v}\frac{1}{m}\sum_{j^{\prime}=1}^{m}\Big|\frac{1}{I}\sum_{i^{\prime}=1}^{I}\mathbbm{1}(v \geq R_{i^{\prime}j^{\prime}}>c)-\max\{F_{j^{\prime}}(v)-F_{j^{\prime}}(c),0\}\Big|\\
		&\leq \frac{1}{m}\sum_{j^{\prime}=1}^{m}\sup_{v} \Big|\frac{1}{I}\sum_{i^{\prime}=1}^{I}\mathbbm{1}(v \geq R_{i^{\prime}j^{\prime}}>c)-\max\{F_{j^{\prime}}(v)-F_{j^{\prime}}(c),0\}\Big|.
\end{align*}
First, for each $j^{\prime}$, we have
\begin{equation*}
	\mathbb{E}\{\mathbbm{1}(v \geq R_{i^{\prime}j^{\prime}}>c)\}=\mathbb{P}(v \geq R_{i^{\prime}j^{\prime}}>c)=\max\{F_{j^{\prime}}(v)-F_{j^{\prime}}(c),0\}.
\end{equation*}
Second, for each $j^{\prime}$ in the above sum, the bracketing number of $\mathbbm{1}(v \geq R_{i^{\prime}j^{\prime}}>c)$ is bounded by Example 19.6 in Van der Vaart (2000) where we replace $t_{0}=-\infty$ with $t_{0}=c$. Combining these two facts together, for each $j^{\prime}$, each $\sup_{v}$ term goes to zero a.s. and we have the desired result.
\end{proof}

\begin{Lemma}\label{lem:asapprox}
    Under Assumption~\ref{ass:iid}, we have as $I \rightarrow \infty$,
    \begin{equation*}
    	\frac{q(R_{ij}\mid \mathbf{R})}{mI} \xrightarrow{a.s.} G(R_{ij})\quad \text{and} \quad	\frac{q(R_{i(j)}\mid \mathbf{R})}{mI} \xrightarrow{a.s.} G(R_{i(j)}).
    \end{equation*}
\end{Lemma}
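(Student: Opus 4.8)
The plan is to obtain Lemma~\ref{lem:asapprox} as an essentially immediate consequence of the uniform approximation established in Lemma~\ref{lem:uniformapprox}. First I would fix the indices $i$ and $j$ once and for all (these are held fixed while $I\to\infty$; by the convention adopted in this section one may simply take $i=1$), and observe that $R_{ij}$ and its within-stratum order statistic $R_{i(j)}$ are then \emph{fixed} random variables on the underlying probability space, not objects that change with $I$. What does change with $I$ is the function $v\mapsto q(v\mid\mathbf{R})$, since $\mathbf{R}=(R_{11},\dots,R_{Im})$ grows.

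Next I would invoke Lemma~\ref{lem:uniformapprox} to get an event $\Omega_0$ with $\mathbb{P}(\Omega_0)=1$ on which $\sup_v\big|q(v\mid\mathbf{R})/(mI)-G(v)\big|\to 0$ as $I\to\infty$. The decisive point is that this supremum ranges over \emph{all} real $v$ simultaneously, so on $\Omega_0$ one may evaluate it at the particular realized value $v=R_{ij}(\omega)$ and write
\[
\Big|\frac{q(R_{ij}\mid\mathbf{R})}{mI}-G(R_{ij})\Big|\le \sup_v\Big|\frac{q(v\mid\mathbf{R})}{mI}-G(v)\Big|,
\]
whose right-hand side tends to $0$ as $I\to\infty$ on $\Omega_0$. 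The identical substitution with $v=R_{i(j)}(\omega)$ gives the second assertion. Since both bounds hold for every $\omega\in\Omega_0$, the two claimed almost sure limits follow, and the proof is only a few lines.

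The one place where a word of justification is warranted — and what might at first glance look like an obstacle — is that we are plugging into the uniform bound a random argument that is itself one of the summands defining $q(\cdot\mid\mathbf{R})$ (indeed the pair $(i,j)$ contributes the term $\mathbbm{1}(R_{ij}\ge R_{ij}\ge c)$). This is harmless for two reasons: the statement of Lemma~\ref{lem:uniformapprox} is a statement about the supremum over $v$ holding on a single probability-one event, hence it remains valid when evaluated at any, possibly data-dependent, choice of $v$; and in any case the single term of $q(R_{ij}\mid\mathbf{R})$ coming from $(i,j)$ is $O(1)$, so it is negligible after division by $mI$. Consequently there is really no substantive difficulty here, and I would keep the argument short, treating the lemma as a direct corollary of Lemma~\ref{lem:uniformapprox}.
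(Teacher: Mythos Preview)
Your proposal is correct and is essentially identical to the paper's own proof: the paper simply bounds $\big|q(R_{ij}\mid\mathbf{R})/(mI)-G(R_{ij})\big|$ and $\big|q(R_{i(j)}\mid\mathbf{R})/(mI)-G(R_{i(j)})\big|$ by the uniform deviation $\sup_v\big|q(v\mid\mathbf{R})/(mI)-G(v)\big|$ and invokes Lemma~\ref{lem:uniformapprox}. Your additional commentary about plugging a random argument into the uniform bound is sound but not present (or needed) in the paper's one-line treatment.
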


\begin{proof}
The conclusion follows immediately from Lemma~\ref{lem:uniformapprox} and the fact that $|\frac{q(R_{ij}\mid \mathbf{R})}{mI}-G(R_{ij})|\leq \sup_{v}\Big|\frac{q(v\mid \mathbf{R})}{mI}-G(v)\Big|$ and $|\frac{q(R_{i(j)}|\mathbf{R})}{mI}-G(R_{i(j)})|\leq \sup_{v}\Big|\frac{q(v\mid \mathbf{R})}{mI}-G(v)\Big|$.
\end{proof}

\begin{Lemma}\label{lem:expectationunderthealter}
Under Assumption~\ref{ass:iid}, we have as $I \rightarrow \infty$,
\begin{equation*}
	\frac{1}{I} \sum_{i=1}^{I}\frac{q(R_{i1}\mid \mathbf{R})}{mI}\xrightarrow{a.s.}  \mathbb{E}\{G(R_{i1})\}.
\end{equation*}
\end{Lemma}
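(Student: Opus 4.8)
The plan is to reduce the claim to two ingredients: the uniform approximation already established in Lemma~\ref{lem:uniformapprox} and the classical strong law of large numbers. The one subtlety is that $q(R_{i1}\mid\mathbf{R})$ is \emph{not} a function of the data in stratum $i$ alone --- it depends on all of $\mathbf{R}$ --- so $\frac{1}{I}\sum_{i}q(R_{i1}\mid\mathbf{R})/(mI)$ is not literally an average of i.i.d.\ summands, and the pointwise-in-$i$ statement of Lemma~\ref{lem:asapprox} does not by itself deliver the conclusion. Uniformity over the argument $v$ is exactly what bridges this gap.

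First I would use the triangle inequality to bound the distance between the quantity of interest and the average of $G(R_{i1})$:
\[
\Bigg| \frac{1}{I}\sum_{i=1}^{I} \frac{q(R_{i1}\mid\mathbf{R})}{mI} - \frac{1}{I}\sum_{i=1}^{I} G(R_{i1}) \Bigg| \le \frac{1}{I}\sum_{i=1}^{I}\bigg| \frac{q(R_{i1}\mid\mathbf{R})}{mI} - G(R_{i1})\bigg| \le \sup_{v}\bigg| \frac{q(v\mid\mathbf{R})}{mI} - G(v)\bigg|,
\]
and then invoke Lemma~\ref{lem:uniformapprox} to conclude that the right-hand side tends to $0$ almost surely as $I\to\infty$. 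Next, because the strata are i.i.d.\ under Assumption~\ref{ass:iid}, the variables $G(R_{11}), G(R_{21}),\dots$ are i.i.d.; moreover $0\le G(v)\le 1$ by construction of $G$, so $\mathbb{E}\{G(R_{i1})\}$ is finite and the strong law of large numbers gives $\frac{1}{I}\sum_{i=1}^{I} G(R_{i1}) \xrightarrow{a.s.} \mathbb{E}\{G(R_{i1})\}$. Combining this with the displayed bound by one more application of the triangle inequality yields the desired almost-sure convergence.

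I do not expect a genuine obstacle: the real work has already been done in Lemma~\ref{lem:uniformapprox}. The only point deserving a moment's care is bookkeeping of exceptional sets --- the event on which the uniform approximation fails and the event on which the SLLN fails are each null, so off their (null) union both convergences hold simultaneously, which is all that is needed.
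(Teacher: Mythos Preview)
Your proposal is correct and follows essentially the same two-term decomposition as the paper: split off $\frac{1}{I}\sum_i G(R_{i1})$ and handle it by the SLLN (using boundedness of $G$), then show the residual average vanishes almost surely. The only difference is one of precision: the paper cites Lemma~\ref{lem:asapprox} for the residual term, whereas you go directly to the uniform bound of Lemma~\ref{lem:uniformapprox}; your route is arguably cleaner, since the pointwise-in-$i$ statement of Lemma~\ref{lem:asapprox} does not by itself control a growing average, and the real content is exactly the uniformity you invoke.
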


\begin{proof}
Note that
\begin{equation}\label{twoterminGRi1}
    \frac{1}{I} \sum_{i=1}^{I}\frac{q(R_{i1}\mid \mathbf{R})}{mI}=\frac{1}{I} \sum_{i=1}^{I}G(R_{i1})+\frac{1}{I}\sum_{i=1}^{I}\Big\{ \frac{q(R_{i1}\mid \mathbf{R})}{mI}-G(R_{i1})\Big\}.
\end{equation}
Since $G(R_{i1})$, $i=1,2,\dots$ are bounded and iid, by the law of large numbers, the first term in the RHS of (\ref{twoterminGRi1}) converges to $\mathbb{E}\{G(R_{i1})\}$ a.s.. By Lemma~\ref{lem:asapprox}, the second term in the RHS of (\ref{twoterminGRi1}) converges to 0 a.s.. So the desired result follows.
\end{proof}

\begin{Lemma}\label{lem:expectationunderthenull}
Under Assumption~\ref{ass:iid}, we have as $I \rightarrow \infty$,
\begin{equation*}
	\frac{1}{I}\sum_{i=1}^{I}\frac{\overline{\overline{\mu_{i}}}}{mI} \xrightarrow{a.s.} \mathbb{E} \Big\{ \max_{b\in [m-1]} \frac{\sum_{j=1}^{b}G(R_{i(j)})+\Gamma \sum_{j=b+1}^{m}G(R_{i(j)})}{b+\Gamma(m-b)} \Big\}.
\end{equation*}
\end{Lemma}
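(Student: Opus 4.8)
The plan is to recognize both sides of the claimed identity as a fixed, bounded functional evaluated at the (ordered) within-stratum responses, reduce the right-hand side to the strong law of large numbers for the ``population'' version, and absorb the discrepancy using the uniform approximation already established in Lemma~\ref{lem:uniformapprox}.

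First I would introduce the functional
\[
\psi(y_1,\dots,y_m)=\max_{b\in[m-1]}\frac{\sum_{j=1}^{b}y_j+\Gamma\sum_{j=b+1}^{m}y_j}{b+\Gamma(m-b)}.
\]
Dividing the displayed formula for $\overline{\overline{\mu}}_{ib}$ by $mI$ and taking the maximum over $b$ gives $\overline{\overline{\mu}}_i/(mI)=\psi\!\left(q(R_{i(1)}\mid\mathbf{R})/(mI),\dots,q(R_{i(m)}\mid\mathbf{R})/(mI)\right)$, while the right-hand side of the claimed limit is exactly $\mathbb{E}\{\psi(G(R_{i(1)}),\dots,G(R_{i(m)}))\}$. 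Since $0\le q(v\mid\mathbf{R})/(mI)\le 1$ and $0\le G(v)\le 1$ for every $v$, each value of $\psi$ appearing here is a convex combination of numbers in $[0,1]$ and hence lies in $[0,1]$; in particular everything below is bounded.

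Next I would decompose
\[
\frac{1}{I}\sum_{i=1}^{I}\frac{\overline{\overline{\mu}}_i}{mI}=\frac{1}{I}\sum_{i=1}^{I}\psi\!\left(G(R_{i(1)}),\dots,G(R_{i(m)})\right)+\frac{1}{I}\sum_{i=1}^{I}\left[\psi\!\left(\frac{q(R_{i(1)}\mid\mathbf{R})}{mI},\dots\right)-\psi\!\left(G(R_{i(1)}),\dots\right)\right].
\]
By Assumption~\ref{ass:iid} the terms $\psi(G(R_{i(1)}),\dots,G(R_{i(m)}))$, $i\ge1$, are i.i.d.\ bounded functions of $(R_{i1},\dots,R_{im})$, so the strong law of large numbers shows the first sum converges almost surely to $\mathbb{E}\{\psi(G(R_{i(1)}),\dots,G(R_{i(m)}))\}$, the target limit. (One cannot apply the law of large numbers directly to $\psi(q(R_{i(1)}\mid\mathbf{R})/(mI),\dots)$ because these are not independent across $i$ and carry the $1/(mI)$ normalization — which is exactly why the decomposition and Lemma~\ref{lem:uniformapprox} are needed.) For the remainder, the one calculation to carry out is that $\psi$ is $1$-Lipschitz in $\|\cdot\|_\infty$: each map $\mathbf{y}\mapsto(\sum_{j\le b}y_j+\Gamma\sum_{j>b}y_j)/(b+\Gamma(m-b))$ is a weighted average with nonnegative weights summing to one, hence $1$-Lipschitz in $\|\cdot\|_\infty$, and a pointwise maximum of $1$-Lipschitz functions is $1$-Lipschitz. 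Therefore, for each $i$,
\[
\left|\psi\!\left(\frac{q(R_{i(1)}\mid\mathbf{R})}{mI},\dots\right)-\psi\!\left(G(R_{i(1)}),\dots\right)\right|\le\max_{1\le j\le m}\left|\frac{q(R_{i(j)}\mid\mathbf{R})}{mI}-G(R_{i(j)})\right|\le\sup_{v}\left|\frac{q(v\mid\mathbf{R})}{mI}-G(v)\right|,
\]
so the absolute value of the whole remainder sum is at most $\sup_{v}|q(v\mid\mathbf{R})/(mI)-G(v)|$, which tends to $0$ almost surely by Lemma~\ref{lem:uniformapprox}. Adding the two limits completes the argument.

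I do not expect a genuine obstacle here; this lemma is a routine consequence of Lemma~\ref{lem:uniformapprox} and the strong law of large numbers. The only points needing care are the bookkeeping identifying $\overline{\overline{\mu}}_i/(mI)$ with $\psi$ applied to the normalized aberrant ranks, and the elementary uniform Lipschitz control of $b\mapsto\overline{\overline{\mu}}_{ib}$ over the finitely many $b\in[m-1]$, which is what lets the single sup-norm bound of Lemma~\ref{lem:uniformapprox} dominate the remainder uniformly in $i$.
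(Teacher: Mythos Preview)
Your proposal is correct and follows essentially the same approach as the paper: both decompose the average into the population version handled by the strong law of large numbers plus a remainder, and both control the remainder via the uniform bound of Lemma~\ref{lem:uniformapprox}. Your Lipschitz framing of $\psi$ is a cleaner packaging of the same inequality chain the paper writes out explicitly, and your direct appeal to Lemma~\ref{lem:uniformapprox} (rather than through Lemma~\ref{lem:asapprox}) makes the uniformity in $i$ of the remainder bound more transparent.
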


\begin{proof} 
Note that
	\begin{align*}
	    &\quad \Bigg| \frac{ \overline{\overline{\mu}}_{i}}{mI}-\max_{b\in [m-1]} \frac{\sum_{j=1}^{b}G(R_{i(j)})+\Gamma \sum_{j=b+1}^{m}G(R_{i(j)})}{b+\Gamma(m-b)} \Bigg|\\
		&=\Bigg| \max_{b\in [m-1]}\frac{\sum_{j=1}^{b}\frac{q(R_{i(j)}\mid \textbf{R})}{mI}+\Gamma \sum_{j=b+1}^{m}\frac{q(R_{i(j)}\mid \textbf{R})}{mI}}{b+\Gamma(m-b)}-\max_{b\in [m-1]}\frac{\sum_{j=1}^{b}G(R_{i(j)})+\Gamma \sum_{j=b+1}^{m}G(R_{i(j)})}{b+\Gamma(m-b)} \Bigg|\\
		&\leq \max_{b\in [m-1]} \Bigg| \frac{\sum_{j=1}^{b}\frac{q(R_{i(j)}\mid \textbf{R})}{mI}+\Gamma \sum_{j=b+1}^{m}\frac{q(R_{i(j)}\mid \textbf{R})}{mI}}{b+\Gamma(m-b)}- \frac{\sum_{j=1}^{b}G(R_{i(j)})+\Gamma \sum_{j=b+1}^{m}G(R_{i(j)})}{b+\Gamma(m-b)}  \Bigg| \\
		&\leq \max_{b\in [m-1]} \frac{\sum_{j=1}^{b}|\frac{q(R_{i(j)}\mid \textbf{R})}{mI}-G(R_{i(j)})| +\Gamma \sum_{j=b+1}^{m}|\frac{ q(R_{i(j)}\mid \textbf{R})}{mI}-G(R_{i(j)})|} {b+\Gamma(m-b)}, 
	\end{align*}
together with Lemma~\ref{lem:asapprox}, we have as $I \rightarrow \infty$,
\begin{equation*}
    \frac{ \overline{\overline{\mu}}_{i}}{mI}\xrightarrow{a.s.} \max_{b\in [m-1]} \frac{\sum_{j=1}^{b}G(R_{i(j)})+\Gamma \sum_{j=b+1}^{m}G(R_{i(j)})}{b+\Gamma(m-b)}.
\end{equation*}

Note that
\begin{align}\label{equa:twotermunderthenull}
	\frac{1}{I}\sum_{i=1}^{I}\frac{\overline{\overline{\mu_{i}}}}{mI}
	&=\frac{1}{I}\sum_{i=1}^{I}\max_{b\in [m-1]} \frac{\sum_{j=1}^{b}G(R_{i(j)})+\Gamma \sum_{j=b+1}^{m}G(R_{i(j)})}{b+\Gamma(m-b)}\nonumber \\
	&\quad +\frac{1}{I} \sum_{i=1}^{I}\Big\{ \frac{ \overline{\overline{\mu}}_{i}}{mI}-\max_{b\in [m-1]} \frac{\sum_{j=1}^{b}G(R_{i(j)})+\Gamma \sum_{j=b+1}^{m}G(R_{i(j)})}{b+\Gamma(m-b)}\Big\}.
\end{align}

For the first term in the RHS of (\ref{equa:twotermunderthenull}), note that $\max_{b\in [m-1]} \frac{\sum_{j=1}^{b}G(R_{i(j)})+\Gamma \sum_{j=b+1}^{m}G(R_{i(j)})}{b+\Gamma(m-b)}$, $i=1,2,\dots$ are bounded iid random variables, by the strong law of large numbers, we have as $I \rightarrow \infty$,
\begin{equation*}
\frac{1}{I}\sum_{i=1}^{I}\max_{b\in [m-1]} \frac{\sum_{j=1}^{b}G(R_{i(j)})+\Gamma \sum_{j=b+1}^{m}G(R_{i(j)})}{b+\Gamma(m-b)}\xrightarrow{a.s.}\mathbb{E} \Big\{ \max_{b\in [m-1]} \frac{\sum_{j=1}^{b}G(R_{i(j)})+\Gamma \sum_{j=b+1}^{m}G(R_{i(j)})}{b+\Gamma(m-b)} \Big\}.
\end{equation*}

The second term in the RHS of (\ref{equa:twotermunderthenull}) has been shown to converge to zero almost surely. So the desired conclusion follows.
\end{proof}

For simplicity, from now on, let
\begin{equation*}
	\varphi(\Gamma)=\mathbb{E} \Big\{ \max_{b\in [m-1]} \frac{\sum_{j=1}^{b}G(R_{i(j)})+\Gamma \sum_{j=b+1}^{m}G(R_{i(j)})}{b+\Gamma(m-b)} \Big\}.
\end{equation*}

\begin{Lemma}\label{lem:continuity}
	$\varphi(\Gamma)$ is continuous on $[1,+\infty)$.
\end{Lemma}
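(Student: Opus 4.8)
The plan is to fix a realization of the within-stratum ordered responses $R_{i(1)}\le\dots\le R_{i(m)}$, argue continuity of the integrand as a function of $\Gamma$ pointwise in that realization, and then upgrade this to continuity of the expectation $\varphi$ by dominated convergence. The key structural fact I would record first is that $0\le G(v)\le 1$ for every $v$: indeed $G(v)=\frac{1}{m}\sum_{j=1}^{m}\max\{F_j(v)-F_j(c),0\}$ is an average of $m$ numbers each lying in $[0,1]$. This uniform bound does essentially all of the work.

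Next, for each fixed $b\in[m-1]$ write
\begin{equation*}
h_b(\Gamma)=\frac{\sum_{j=1}^{b}G(R_{i(j)})+\Gamma \sum_{j=b+1}^{m}G(R_{i(j)})}{b+\Gamma(m-b)}.
\end{equation*}
For $\Gamma\in[1,+\infty)$ and $b\le m-1$ the denominator satisfies $b+\Gamma(m-b)\ge b+(m-b)=m\ge 2>0$, so $h_b$ is a ratio of two affine functions of $\Gamma$ whose denominator is bounded away from zero; hence $\Gamma\mapsto h_b(\Gamma)$ is continuous on $[1,+\infty)$. Moreover $h_b(\Gamma)$ is a convex combination of $G(R_{i(1)}),\dots,G(R_{i(m)})$ — the weights $1/(b+\Gamma(m-b))$ and $\Gamma/(b+\Gamma(m-b))$ are nonnegative and sum to one — so $0\le h_b(\Gamma)\le 1$. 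Taking the maximum over the finitely many $b\in[m-1]$, the integrand $\Gamma\mapsto \max_{b\in[m-1]}h_b(\Gamma)$ is continuous on $[1,+\infty)$ and bounded by $1$, for every realization of $(R_{i(1)},\dots,R_{i(m)})$.

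Finally I would invoke the dominated convergence theorem. Given any sequence $\Gamma_n\to\Gamma$ in $[1,+\infty)$, the random variables $\max_{b\in[m-1]}h_b(\Gamma_n)$ converge almost surely to $\max_{b\in[m-1]}h_b(\Gamma)$ by the pointwise continuity just established, and they are all dominated by the integrable constant $1$. Hence $\varphi(\Gamma_n)=\mathbb{E}\{\max_{b\in[m-1]}h_b(\Gamma_n)\}\to\mathbb{E}\{\max_{b\in[m-1]}h_b(\Gamma)\}=\varphi(\Gamma)$, which gives continuity of $\varphi$ on $[1,+\infty)$, the left endpoint $\Gamma=1$ being handled by the one-sided version of the same sequential argument.

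I do not expect a genuine obstacle: the only step requiring care is the interchange of limit and expectation, and the uniform bound $G\le 1$ (hence integrand $\le 1$) makes this immediate. If one wished to avoid appealing to dominated convergence, an equivalent route is to note that on each compact subinterval $[1,\Gamma_0]$ the finitely many maps $h_b$ are equi-Lipschitz in $\Gamma$ — their derivatives are uniformly bounded because numerators and denominators are bounded and denominators are bounded below — so $\max_{b\in[m-1]}h_b(\cdot)$ is Lipschitz in $\Gamma$ with a constant independent of the realization, and continuity (indeed local Lipschitz continuity) of $\varphi$ follows directly.
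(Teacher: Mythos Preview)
Your proof is correct. The paper takes a slightly different, more quantitative route: rather than invoking dominated convergence, it bounds $|\varphi(\Gamma_1)-\varphi(\Gamma_2)|$ directly by pulling the absolute value inside the expectation, using $|\max_b a_b-\max_b c_b|\le \max_b|a_b-c_b|$, and then rewriting each $h_b(\Gamma)$ as $\frac{\sum_{j>b}G(R_{i(j)})}{m-b}$ plus a correction proportional to $\frac{1}{b+\Gamma(m-b)}$; together with $G\le 1$ this gives $|\varphi(\Gamma_1)-\varphi(\Gamma_2)|\le 2\max_{b\in[m-1]}\big|\frac{b}{b+\Gamma_1(m-b)}-\frac{b}{b+\Gamma_2(m-b)}\big|$, from which continuity is immediate. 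This is essentially the realization-independent Lipschitz bound you sketch at the end as an alternative. Your primary DCT argument is cleaner and more modular; the paper's approach is more explicit and delivers a concrete modulus of continuity without appealing to a convergence theorem. Either way, the crucial ingredient is the same observation you identified: $0\le G\le 1$ makes $h_b$ uniformly bounded.
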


\begin{proof}
For any $\Gamma_{1}, \Gamma_{2} \in [1,+\infty)$,
\begin{align*}
	&\quad |\varphi(\Gamma_{1})-\varphi(\Gamma_{2})|\\
	&\leq \mathbb{E} \Big| \max_{b\in [m-1]} \frac{\sum_{j=1}^{b}G(R_{i(j)})+\Gamma_{1} \sum_{j=b+1}^{m}G(R_{i(j)})}{b+\Gamma_{1}(m-b)}-\max_{b\in [m-1]} \frac{\sum_{j=1}^{b}G(R_{i(j)})+\Gamma_{2} \sum_{j=b+1}^{m}G(R_{i(j)})}{b+\Gamma_{2}(m-b)} \Big|\\
	&\leq  \mathbb{E}\Big\{ \max_{b\in [m-1]} \Big|  \frac{\sum_{j=1}^{b}G(R_{i(j)})+\Gamma_{1} \sum_{j=b+1}^{m}G(R_{i(j)})}{b+\Gamma_{1}(m-b)}- \frac{\sum_{j=1}^{b}G(R_{i(j)})+\Gamma_{2} \sum_{j=b+1}^{m}G(R_{i(j)})}{b+\Gamma_{2}(m-b)} \Big| \Big\}  \\
	&=  \mathbb{E}\Big \{\max_{b\in [m-1]} \Big|\Big[ \frac{\sum_{j=b+1}^{m}G(R_{i(j)})}{m-b}+\Big\{ \sum_{j=1}^{b}G(R_{i(j)})-\frac{b\cdot \sum_{j=b+1}^{m}G(R_{i(j)})}{m-b} \Big \} \cdot \frac{1}{b+\Gamma_{1}(m-b)}\Big] \\
	&\quad \quad \quad \quad - \Big[ \frac{\sum_{j=b+1}^{m}G(R_{i(j)})}{m-b}+\Big \{ \sum_{j=1}^{b}G(R_{i(j)})-\frac{b\cdot \sum_{j=b+1}^{m}G(R_{i(j)})}{m-b} \Big\} \cdot \frac{1}{b+\Gamma_{2}(m-b)} \Big] \Big| \Big\}\\
	&=\mathbb{E}\Big\{ \max_{b\in [m-1]} \Big| \frac{b}{b+\Gamma_{1}(m-b)}-\frac{b}{b+\Gamma_{2}(m-b)} \Big|\cdot \Big|\frac{\sum_{j=1}^{b}G(R_{i(j)})}{b}-\frac{\sum_{j=b+1}^{m}G(R_{i(j)})}{m-b} \Big| \Big\} \\
    &\leq 2 \max_{b\in [m-1]} \Big| \frac{b}{b+\Gamma_{1}(m-b)}-\frac{b}{b+\Gamma_{2}(m-b)} \Big|, \quad (\text{since $G(R_{i(j)})\leq 1$})
	\end{align*}
then continuity of $\varphi (\Gamma)$ follows from the fact that $g_{b}(\Gamma)=\frac{b}{b+\Gamma(m-b)}$ is continuous on $[1, +\infty)$ for each $b\in [m-1]$.
\end{proof}

\begin{Lemma}\label{lem:similardurrett}
   Let $X$ and $Y$ be two random variables. Set $s_{X}=\sup \{ t: \mathbb{P}(X \geq t)>0 \}$, $s_{Y}=\sup \{ t: \mathbb{P}(Y \geq t)>0 \}$. Suppose that function $h:\mathbb{R} \rightarrow \mathbb{R}$ is continuously differentiable on $(c, +\infty)$, $h(c)=0$, $h^{\prime}(t)>0$ for $t \in (c, \ s_{X} \vee s_{Y})$, and $\mathbb{E}|h(X)\mathbbm{1}_{X\geq c}|<\infty$, $\mathbb{E}|h(Y)\mathbbm{1}_{Y \geq c}|<\infty$. If for any $t \in (c, \ s_{X} \vee s_{Y})$,  $\mathbb{P}(X \geq t) \leq \mathbb{P}(Y \geq t)$, we have $\mathbb{E}\{h(X)\mathbbm{1}_{X\geq c}\} \leq \mathbb{E}\{h(Y)\mathbbm{1}_{Y \geq c}\}$, and if there exists an open interval $\mathcal{I}\subset (c, \ s_{X} \vee s_{Y})$ such that $\mathbb{P}(X \geq t) < \mathbb{P}(Y \geq t)$ for any $t\in \mathcal{I}$, we have $\mathbb{E}\{h(X)\mathbbm{1}_{X\geq c}\} < \mathbb{E}\{h(Y)\mathbbm{1}_{Y \geq c}\}$.
\end{Lemma}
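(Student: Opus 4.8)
The plan is to represent each of $\mathbb{E}\{h(X)\mathbbm{1}_{X\geq c}\}$ and $\mathbb{E}\{h(Y)\mathbbm{1}_{Y\geq c}\}$ as an integral of the corresponding survival function against the positive measure $h'(t)\,dt$ on $(c,\,s_X\vee s_Y)$, thereby reducing the asserted comparison to a pointwise comparison of survival functions under the integral sign. This is the half-line, truncated analogue of the classical identity $\mathbb{E}\,g(X)=\int g'(t)\,\mathbb{P}(X>t)\,dt$ for a nonnegative increasing $g$ with $g(0)=0$.

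First I would record two preliminaries. Since $\mathbb{P}(X\geq t)=0$ for every $t>s_X$, countable subadditivity applied to $\{X>s_X\}=\bigcup_{n\geq 1}\{X\geq s_X+1/n\}$ gives $\mathbb{P}(X>s_X)=0$, so $X\leq s_X\leq s_X\vee s_Y$ almost surely, and symmetrically $Y\leq s_Y$ a.s. Because $h(c)=0$ and $h'>0$ on $(c,\,s_X\vee s_Y)$, the function $h$ is nondecreasing on $[c,\,s_X]$ with value $0$ at $c$, so $h(X)\mathbbm{1}_{X\geq c}\geq 0$ and likewise $h(Y)\mathbbm{1}_{Y\geq c}\geq 0$ almost surely; this rules out sign cancellation below. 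Second, by the fundamental theorem of calculus — using continuity of $h$ at $c$, where $h(c)=0$, together with continuity of $h'$ on $(c,\infty)$ — one has, for every $x\geq c$, the identity $h(x)=\int_c^x h'(t)\,dt$, the single endpoint being Lebesgue-null.

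Next I would use the pathwise identity $h(X)\mathbbm{1}_{X\geq c}=\int_{(c,\infty)} h'(t)\,\mathbbm{1}\{X\geq t\}\,dt$, valid because both sides vanish on $\{X<c\}$ while on $\{X\geq c\}$ the right side equals $\int_c^X h'=h(X)-h(c)=h(X)$. The integrand $(t,\omega)\mapsto h'(t)\,\mathbbm{1}\{X(\omega)\geq t\}$ on $(c,\infty)\times\Omega$ is nonnegative off a $(\mathrm{Leb}\times\mathbb{P})$-null set: for $t\in(c,\,s_X]$ one has $h'(t)>0$, whereas for $t>s_X$ the factor $\mathbbm{1}\{X\geq t\}$ is $0$ almost surely. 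Tonelli's theorem then yields
\begin{equation*}
\mathbb{E}\{h(X)\mathbbm{1}_{X\geq c}\}=\int_{(c,\infty)} h'(t)\,\mathbb{P}(X\geq t)\,dt=\int_{(c,\,s_X\vee s_Y)} h'(t)\,\mathbb{P}(X\geq t)\,dt,
\end{equation*}
the last equality because $\mathbb{P}(X\geq t)=0$ for $t>s_X$ and $\{s_X\}$ is Lebesgue-null; this common value is finite by the hypothesis $\mathbb{E}|h(X)\mathbbm{1}_{X\geq c}|<\infty$. Running the identical argument for $Y$ gives $\mathbb{E}\{h(Y)\mathbbm{1}_{Y\geq c}\}=\int_{(c,\,s_X\vee s_Y)} h'(t)\,\mathbb{P}(Y\geq t)\,dt<\infty$. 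Since both integrals are finite they may be subtracted:
\begin{equation*}
\mathbb{E}\{h(Y)\mathbbm{1}_{Y\geq c}\}-\mathbb{E}\{h(X)\mathbbm{1}_{X\geq c}\}=\int_{(c,\,s_X\vee s_Y)} h'(t)\,\bigl\{\mathbb{P}(Y\geq t)-\mathbb{P}(X\geq t)\bigr\}\,dt.
\end{equation*}
On $(c,\,s_X\vee s_Y)$ the weight $h'(t)$ is strictly positive, so under $\mathbb{P}(X\geq t)\leq\mathbb{P}(Y\geq t)$ the integrand is nonnegative and the left side is $\geq 0$, which is the weak inequality; and if $\mathbb{P}(X\geq t)<\mathbb{P}(Y\geq t)$ throughout an open interval $\mathcal{I}\subset(c,\,s_X\vee s_Y)$, the integrand is strictly positive on $\mathcal{I}$, a set of positive Lebesgue measure, forcing the integral to be strictly positive and giving the strict inequality.

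The substantive content is entirely endpoint bookkeeping rather than any delicate estimate, so that is the part I would write out carefully: verifying $X\leq s_X$ a.s.\ so the representation $h(X)=\int_c^X h'$ is legitimate on the range actually attained by $X$; checking continuity of $h$ at $c$, where only $h(c)=0$ is given, so the fundamental theorem of calculus extends down to $c$; confirming nonnegativity of the integrand off a product-null set so that Tonelli applies directly, sidestepping a full Fubini argument with separate integrability of $|h'|\,\mathbb{P}(X\geq t)$; and noting that shrinking the domain from $(c,\infty)$ to $(c,\,s_X\vee s_Y)$ discards only a set on which the relevant survival function already vanishes. I do not expect any of these to be a genuine obstacle, but they are where an otherwise routine computation could go astray.
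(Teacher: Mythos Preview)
Your proof is correct and follows essentially the same approach as the paper: represent $\mathbb{E}\{h(X)\mathbbm{1}_{X\geq c}\}$ as $\int_c^{\infty} h'(t)\,\mathbb{P}(X\geq t)\,dt$ via Fubini/Tonelli after writing $h(X)\mathbbm{1}_{X\geq c}=\int_c^X h'(t)\,dt$ pathwise, and then compare the resulting integrals pointwise using the assumed stochastic ordering. You are simply more explicit than the paper about the endpoint bookkeeping and the justification for Tonelli, but the argument is the same.
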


\begin{proof}
	We have
	\begin{align*}
		\mathbb{E}\{h(X)\mathbbm{1}_{X\geq c}\} 
		&=\int_{\Omega}\int_{c}^{X}h^{\prime}(t)\mathbbm{1}_{X\geq c} \ dt \ d\mathbb{P} \quad \text{(since $h(c)=0$)} \\
		&=\int_{\Omega}\int_{c}^{+\infty}h^{\prime}(t)\mathbbm{1}_{X\geq t} \ dt \ d\mathbb{P}\\
		&=\int_{c}^{+\infty} \int_{\Omega}\ h^{\prime}(t)\mathbbm{1}_{X\geq t } \ d\mathbb{P} \ dt \quad \text{(by Fubini's theorem)} \\
		&=\int_{c}^{+\infty}h^{\prime}(t)\mathbb{P}(X\geq t)dt\\
		&=\int_{c}^{s_{X} }h^{\prime}(t)\mathbb{P}(X\geq t)dt.
	\end{align*}
		
Similarly, we have
\begin{equation*}
	\mathbb{E}\{h(Y)\mathbbm{1}_{Y\geq c}\}=\int_{c}^{s_{Y}}h^{\prime}(t)\mathbb{P}(Y\geq t)dt.
\end{equation*}

Note that if for any $t \in (c, \ s_{X} \vee s_{Y})$,  $\mathbb{P}(X \geq t) \leq \mathbb{P}(Y \geq t)$, then $s_{X} \leq s_{Y}$. So the desired conclusion follows immediately from the above two equalities and the assumption that $h^{\prime}(t)>0$ for $t \in (c, \ s_{X} \vee s_{Y})$.  
\end{proof}

\begin{Lemma} \label{lem:boundingEG_i1}
Under Assumptions~\ref{ass:iid}-\ref{ass:treateffect}, we have
 \begin{equation*}
 \frac{1}{m}\sum_{j=1}^{m} \mathbb{E}\{G(R_{ij})\} < \mathbb{E}\{G(R_{i1})\}< \mathbb{E}\{G(R_{i(m)})\}.
 \end{equation*}
 \end{Lemma}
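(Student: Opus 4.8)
The plan is to establish the two strict inequalities separately, using throughout the structure of $G$: it is continuous, non-decreasing, vanishes on $(-\infty,c]$, satisfies $0\le G\le 1$, and equals $\frac1m\sum_{j=1}^m\bigl(F_j(v)-F_j(c)\bigr)$ for $v\ge c$, so that $G$ is differentiable on $(c,+\infty)$ with $G'(v)=\frac1m\sum_{j=1}^m f_j(v)$ there. The key monotonicity fact I would record first is that $G$ is \emph{strictly} increasing on $[c,s)$, where $s=\max_j s_j$: picking $j^{\star}$ with $s_{j^{\star}}=s$, Assumption~\ref{ass:nonzeromeasure} gives $f_{j^{\star}}>0$ on $[c,s)$, so for $c\le a<b<s$ the $j^{\star}$-summand alone contributes $\int_a^b f_{j^{\star}}>0$ to $G(b)-G(a)$ while every other summand is non-negative. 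This lemma is exactly what pins the design sensitivity in Theorem~\ref{dsformula} strictly between $1$ and $\infty$, since $\frac1m\sum_j\mathbb{E}\{G(R_{ij})\}$ and $\mathbb{E}\{G(R_{i(m)})\}$ are the values of $\varphi(\Gamma)$ at $\Gamma=1$ and $\Gamma\to\infty$.

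For the lower bound $\frac1m\sum_{j=1}^m\mathbb{E}\{G(R_{ij})\}<\mathbb{E}\{G(R_{i1})\}$, I would introduce the uniform mixture $Y$ of the marginals $F_1,\dots,F_m$, so that $\mathbb{P}(Y\ge t)=\frac1m\sum_{j=1}^m\mathbb{P}(R_{ij}\ge t)$ and hence $\frac1m\sum_{j=1}^m\mathbb{E}\{G(R_{ij})\}=\mathbb{E}\{G(Y)\}=\mathbb{E}\{G(Y)\mathbbm{1}_{Y\ge c}\}$, while $\mathbb{E}\{G(R_{i1})\}=\mathbb{E}\{G(R_{i1})\mathbbm{1}_{R_{i1}\ge c}\}$ because $G\equiv 0$ on $(-\infty,c]$. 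Since $\mathbb{P}(Y\ge t)>0$ precisely when some $\mathbb{P}(R_{ij}\ge t)>0$, the upper endpoint of $Y$ is $\max_j s_j=s$, so the comparison range $(c,\,s_Y\vee s_{R_{i1}})$ appearing in Lemma~\ref{lem:similardurrett} is exactly $(c,s)$. The first inequality in Assumption~\ref{ass:treateffect} reads $\mathbb{P}(Y\ge t)\le\mathbb{P}(R_{i1}\ge t)$ for $t\in[c,s)$, strict on the open interval $\mathcal{I}\subset[c,s)$. I would then apply Lemma~\ref{lem:similardurrett} with $h=G$, identifying its ``$X$'' with $Y$ and its ``$Y$'' with $R_{i1}$: the hypotheses hold since $G(c)=0$, $G$ is (continuously) differentiable on $(c,+\infty)$ with $G'>0$ on $(c,s)$ by the first paragraph, and $\mathbb{E}|G(\cdot)\mathbbm{1}_{\cdot\ge c}|<\infty$ trivially because $0\le G\le 1$. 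The strict conclusion of that lemma gives $\mathbb{E}\{G(Y)\}<\mathbb{E}\{G(R_{i1})\}$, which is the claim.

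For the upper bound $\mathbb{E}\{G(R_{i1})\}<\mathbb{E}\{G(R_{i(m)})\}$ I would argue directly rather than through Lemma~\ref{lem:similardurrett}. Since $R_{i1}\le R_{i(m)}$ pointwise and $G$ is non-decreasing, $G(R_{i1})\le G(R_{i(m)})$ almost surely, so it suffices to produce an event of positive probability on which the inequality is strict. By the last clause of Assumption~\ref{ass:treateffect} there is $t_0\ge c$ with $\mathbb{P}(R_{i(m)}>R_{i1}\ge t_0)>0$; continuity of $F$ gives $\mathbb{P}(R_{i1}\ge s)\le\mathbb{P}(R_{i1}\ge s_1)=0$, which forces $t_0<s$ and shows the event $E=\{R_{i(m)}>R_{i1}\ge t_0\}$ coincides almost surely with $E\cap\{R_{i1}<s\}$. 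On $E\cap\{R_{i1}<s\}$ we have $c\le t_0\le R_{i1}<s$ and $R_{i(m)}>R_{i1}$; applying strict monotonicity of $G$ on $[c,s)$ directly when $R_{i(m)}<s$, and via an intermediate point of $(R_{i1},s)$ together with $G(R_{i(m)})\ge\lim_{v\uparrow s}G(v)$ when $R_{i(m)}\ge s$, we obtain $G(R_{i(m)})>G(R_{i1})$ on that event. Integrating against the a.s.\ inequality then yields the strict inequality in expectation.

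The hard part is pure bookkeeping: verifying that $G$ meets the regularity hypotheses of Lemma~\ref{lem:similardurrett} on the \emph{full} interval $(c,s)$ — in particular that $G'>0$ throughout, which is where Assumption~\ref{ass:nonzeromeasure} is needed in the strong form ``$f_j>0$ on $[c,s_j)$'' rather than merely ``aberrant responses occur with positive probability'' — and checking that the mixture $Y$ has upper endpoint exactly $s$, so that the comparison range in the lemma matches the range $[c,s)$ on which Assumption~\ref{ass:treateffect} supplies the stochastic dominance. It is worth noting that the lower bound uses only the first (inequality) part of Assumption~\ref{ass:treateffect} while the upper bound uses only its final clause, consistent with the respective roles of these conditions in keeping the design sensitivity of Theorem~\ref{dsformula} bounded away from $1$ and from $\infty$.
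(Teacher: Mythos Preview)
Your proof is correct, and interestingly it mirrors the paper's argument with the roles of the two inequalities swapped. The paper handles the upper bound $\mathbb{E}\{G(R_{i1})\}<\mathbb{E}\{G(R_{i(m)})\}$ by invoking Lemma~\ref{lem:similardurrett} with $X=R_{i1}$, $Y=R_{i(m)}$, whereas you prove it directly from the pointwise inequality $R_{i1}\le R_{i(m)}$ together with strict monotonicity of $G$ on $[c,s)$. Your route is arguably cleaner here: it exploits the almost-sure ordering and so sidesteps the need to exhibit an open interval on which $\mathbb{P}(R_{i(m)}\ge t)>\mathbb{P}(R_{i1}\ge t)$, a hypothesis of Lemma~\ref{lem:similardurrett} that the paper asserts holds but does not spell out.

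For the lower bound the situation is reversed: the paper writes out the integral representation $\mathbb{E}\{G(R_{ij})\}=\int_c^{s_j}G'(t)\mathbb{P}(R_{ij}\ge t)\,dt$ for each $j$, averages, and compares directly to $\int_c^s G'(t)\mathbb{P}(R_{i1}\ge t)\,dt$ using Assumption~\ref{ass:treateffect}. Your mixture construction $Y$ with $\mathbb{P}(Y\ge t)=\tfrac1m\sum_j\mathbb{P}(R_{ij}\ge t)$ repackages exactly the same computation as a single application of Lemma~\ref{lem:similardurrett}; the content is identical. One small point you leave implicit but the paper states explicitly: the fact that $s_1=s$ (needed so that the comparison interval in Lemma~\ref{lem:similardurrett} really is $(c,s)$) follows from the first part of Assumption~\ref{ass:treateffect}, since $\mathbb{P}(R_{i1}\ge t)\ge\tfrac1m\sum_j\mathbb{P}(R_{ij}\ge t)>0$ for all $t\in[c,s)$.
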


\begin{proof}
	The second inequality follows immediately from applying Lemma~\ref{lem:similardurrett} with $h(t)=G(t)=\frac{1}{m}\sum_{j=1}^{m}\max \{F_{j}(t)-F_{j}(c),0\}$, $X=R_{i1}$ and $Y=R_{i(m)}$. For the first inequality, let $s_{j}=\sup \{ t: \mathbb{P}(R_{ij} \geq t)>0 \}$. Assumption~\ref{ass:treateffect} implies that $s_{1}=s=\max_{j}s_{j}$. Follow a similar calculation as in Lemma~\ref{lem:similardurrett}, by Assumption~\ref{ass:treateffect} we have
	\begin{align*}
	    	\frac{1}{m}\sum_{j=1}^{m}\mathbb{E}\{G(R_{ij})\}&=\frac{1}{m}\sum_{j=1}^{m}\int_{c}^{s_{j}}G^{\prime}(t)\mathbb{P}(R_{ij}\geq t)dt\\
	    	&\leq  \frac{1}{m}\sum_{j=1}^{m} \int_{c}^{s}G^{\prime}(t)\mathbb{P}(R_{ij}\geq t)dt\\
	    	&=  \int_{c}^{s}G^{\prime}(t)\cdot \frac{1}{m}\sum_{j=1}^{m}\mathbb{P}(R_{ij}\geq t)dt\\
	    	&< \int_{c}^{s}G^{\prime}(t)\mathbb{P}(R_{i1}\geq t)dt\\
	    	&=\mathbb{E}\{G(R_{i1})\}.
	\end{align*}
\end{proof}

\begin{Lemma}\label{lem:boundinglimitofphi}
	Under Assumptions~\ref{ass:iid}-\ref{ass:treateffect}, we have
	\begin{equation*}
	 \lim_{\Gamma \rightarrow 1^{+}}\varphi(\Gamma)< \mathbb{E}\{G(R_{i1})\}, \quad \lim_{\Gamma \rightarrow +\infty}\varphi(\Gamma) > \mathbb{E}\{G(R_{i1})\}.
	 \end{equation*}
\end{Lemma}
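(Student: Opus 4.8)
The plan is to evaluate both one-sided limits of $\varphi$ in closed form and then appeal to Lemma~\ref{lem:boundingEG_i1} for the strict inequalities.

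First I would handle $\Gamma \to 1^{+}$. At $\Gamma = 1$ every denominator $b + \Gamma(m-b)$ equals $m$, and the numerator $\sum_{j=1}^{b} G(R_{i(j)}) + \sum_{j=b+1}^{m} G(R_{i(j)}) = \sum_{j=1}^{m} G(R_{i(j)})$ no longer depends on $b$, so the maximand collapses to $\frac{1}{m}\sum_{j=1}^{m} G(R_{i(j)})$. Since $(R_{i(1)},\dots,R_{i(m)})$ is a permutation of $(R_{i1},\dots,R_{im})$, this equals $\frac{1}{m}\sum_{j=1}^{m} G(R_{ij})$, whence $\varphi(1) = \frac{1}{m}\sum_{j=1}^{m}\mathbb{E}\{G(R_{ij})\}$. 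By Lemma~\ref{lem:continuity}, $\varphi$ is continuous on $[1,+\infty)$, so $\lim_{\Gamma\to 1^{+}}\varphi(\Gamma) = \varphi(1)$, and the first inequality of Lemma~\ref{lem:boundingEG_i1} finishes this case.

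Next I would handle $\Gamma \to +\infty$. For each fixed $b \in [m-1]$ one has $m - b \geq 1$, so $\frac{\sum_{j=1}^{b} G(R_{i(j)}) + \Gamma \sum_{j=b+1}^{m} G(R_{i(j)})}{b + \Gamma(m-b)} \to \frac{1}{m-b}\sum_{j=b+1}^{m} G(R_{i(j)})$ as $\Gamma \to +\infty$; since $[m-1]$ is finite, the maximum over $b$ converges to $\max_{b\in[m-1]} \frac{1}{m-b}\sum_{j=b+1}^{m} G(R_{i(j)})$. Monotonicity of $G$ together with $R_{i(1)}\leq\cdots\leq R_{i(m)}$ gives $G(R_{i(1)})\leq\cdots\leq G(R_{i(m)})$, so $\frac{1}{m-b}\sum_{j=b+1}^{m} G(R_{i(j)}) \leq G(R_{i(m)})$ with equality at $b = m-1$; hence the maximand tends pointwise to $G(R_{i(m)})$. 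As $0 \leq G \leq 1$, dominated convergence gives $\lim_{\Gamma\to+\infty}\varphi(\Gamma) = \mathbb{E}\{G(R_{i(m)})\}$, and the second inequality of Lemma~\ref{lem:boundingEG_i1} completes the proof.

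The hard part is essentially pushed into Lemma~\ref{lem:boundingEG_i1}, whose strict inequalities rely on the strict-dominance-on-an-open-interval and non-degeneracy clauses of Assumption~\ref{ass:treateffect}. The only delicate points in the present argument are the two limit interchanges: for $\Gamma\to 1^{+}$ this is supplied for free by the continuity already proved in Lemma~\ref{lem:continuity}, and for $\Gamma\to+\infty$ it is a one-line dominated-convergence argument using the uniform bound $G \leq 1$. The one structural observation that must not be skipped is that the $\Gamma\to+\infty$ pointwise limit of the maximand is exactly $G(R_{i(m)})$ and not some other average of the top order statistics — this is where monotonicity of $G$ enters.
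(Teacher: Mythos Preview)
Your proposal is correct and follows essentially the same route as the paper: both arguments reduce the two limits to the quantities $\frac{1}{m}\sum_{j=1}^{m}\mathbb{E}\{G(R_{ij})\}$ and $\mathbb{E}\{G(R_{i(m)})\}$ via boundedness of $G$ and then invoke Lemma~\ref{lem:boundingEG_i1}. Your version is marginally tidier in two places---you use Lemma~\ref{lem:continuity} directly at $\Gamma=1$ rather than a separate bounded-convergence step, and you identify the $\Gamma\to+\infty$ limit exactly as $G(R_{i(m)})$ via monotonicity of $G$, whereas the paper only bounds it below by $\mathbb{E}\{G(R_{i(m)})\}$---but these are cosmetic differences, not a different strategy.
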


\begin{proof}

For any $\Gamma\geq 1$, since $G(R_{i(j)})\leq 1$, we have
\begin{equation*}
0 \leq \max_{b\in [m-1]} \frac{\sum_{j=1}^{b}G(R_{i(j)})+\Gamma \sum_{j=b+1}^{m}G(R_{i(j)})}{b+\Gamma(m-b)} \leq 1.
\end{equation*}
By continuity of $\varphi$ (Lemma~\ref{lem:continuity}) and bounded convergence theorem, 
\begin{align*}
	\lim_{\Gamma \rightarrow 1^{+}}\varphi(\Gamma)&= \lim_{n \rightarrow \infty}\varphi\big(\frac{n+1}{n} \big)\\
	&=\mathbb{E} \Big\{ \lim_{n \rightarrow \infty} \max_{b\in [m-1]} \frac{\sum_{j=1}^{b}G(R_{i(j)})+\frac{n+1}{n} \sum_{j=b+1}^{m}G(R_{i(j)})}{b+\frac{n+1}{n}(m-b)} \Big\}\\
	&=\frac{\sum_{j=1}^{m}\mathbb{E}\{G(R_{i(j)})\}}{m}\\
	&=\frac{\sum_{j=1}^{m}\mathbb{E}\{G(R_{ij})\}}{m}\\
	&< \mathbb{E}\{G(R_{i1})\}, \quad \text{(by Lemma~\ref{lem:boundingEG_i1})}
	\end{align*}
\begin{align*}
	\lim_{\Gamma \rightarrow +\infty}\varphi(\Gamma)&=\lim_{n \rightarrow \infty}\varphi(n)\\
	&=\mathbb{E} \Big\{ \lim_{n \rightarrow \infty} \max_{b\in [m-1]} \frac{\sum_{j=1}^{b}G(R_{i(j)})+n \sum_{j=b+1}^{m}G(R_{i(j)})}{b+n(m-b)} \Big\}\\
	&=\mathbb{E} \Big\{ \max_{b\in [m-1]}\frac{\sum_{j=b+1}^{m}G(R_{i(j)})}{m-b}\Big\}\\
	&\geq \mathbb{E}\{G(R_{i(m)})\}\\
	&> \mathbb{E}\{G(R_{i1})\}. \quad \text{(by Lemma~\ref{lem:boundingEG_i1})}
\end{align*}
 \end{proof}
 
 \begin{Lemma}\label{lem:monotone}
   Under Assumptions~\ref{ass:iid} and~\ref{ass:nonzeromeasure}, $\varphi(\Gamma)$ is a strictly monotonically increasing function of $\Gamma$ on $[1, +\infty)$. 
 \end{Lemma}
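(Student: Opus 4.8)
The plan is to pass from the expectation defining $\varphi$ to a pointwise statement about a single stratum $(R_{i1},\dots,R_{im})$ (which makes sense by Assumption~\ref{ass:iid}, the expectation being over the common distribution $F$), prove monotonicity pointwise, and then upgrade ``non-decreasing'' to ``strictly increasing'' by producing a positive-probability event on which strictness holds. For $b\in[m-1]$ write $A_{b}=\sum_{j=1}^{b}G(R_{i(j)})$ and $B_{b}=\sum_{j=b+1}^{m}G(R_{i(j)})$, so that the inner quantity in $\varphi$ is $g_{b}(\Gamma)=(A_{b}+\Gamma B_{b})/(b+\Gamma(m-b))$ and $\varphi(\Gamma)=\mathbb{E}\{\Phi(\Gamma)\}$ with $\Phi(\Gamma)=\max_{b\in[m-1]}g_{b}(\Gamma)$. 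First I would note that $G(v)=\frac1m\sum_{j}\max\{F_{j}(v)-F_{j}(c),0\}$ is non-decreasing in $v$ (a sum of non-decreasing functions), so $G(R_{i(1)})\le\cdots\le G(R_{i(m)})$. A direct differentiation gives $g_{b}'(\Gamma)=\{bB_{b}-(m-b)A_{b}\}/\{b+\Gamma(m-b)\}^{2}$, whose sign equals that of $B_{b}/(m-b)-A_{b}/b$, the gap between the mean of the largest $m-b$ of the numbers $G(R_{i(j)})$ and the mean of the smallest $b$ of them; by the ordering this gap is $\ge0$, so every $g_{b}$, hence $\Phi=\max_{b}g_{b}$, is non-decreasing on $[1,\infty)$. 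Since $0\le G\le1$ forces $\Phi\in[0,1]$, $\varphi(\Gamma)=\mathbb{E}\{\Phi(\Gamma)\}$ is finite and non-decreasing.

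The crux is the following pointwise claim: if $\Phi(\cdot)$ is constant on some nondegenerate interval $[\Gamma_{1},\Gamma_{2}]\subseteq[1,\infty)$, then $G(R_{i(1)})=\cdots=G(R_{i(m)})$. Indeed, write $\Phi\equiv c$ there and choose $b^{*}$ attaining the max at $\Gamma_{1}$; since $g_{b^{*}}$ is non-decreasing and bounded above by $\Phi\equiv c$ on the interval, $g_{b^{*}}\equiv c$ on $[\Gamma_{1},\Gamma_{2}]$, and as a ratio of affine functions of $\Gamma$ this forces $g_{b^{*}}\equiv c$ identically, i.e.\ $A_{b^{*}}=cb^{*}$ and $B_{b^{*}}=c(m-b^{*})$. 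Now invoke the ordering once more: the average of the bottom $b^{*}$ values being $c$ gives $G(R_{i(b^{*})})\ge c$, the average of the top $m-b^{*}$ values being $c$ gives $G(R_{i(b^{*}+1)})\le c$, and monotonicity squeezes $G(R_{i(b^{*})})=G(R_{i(b^{*}+1)})=c$; since within each block the values are sorted with average equal to an extreme value, both blocks are constant and all $G(R_{i(j)})=c$. Contrapositively, on the event $E=\{G(R_{i(1)})<G(R_{i(m)})\}$ the (non-decreasing) function $\Phi$ is \emph{strictly} increasing on $[1,\infty)$.

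It then suffices to show $\mathbb{P}(E)>0$, which is where Assumption~\ref{ass:nonzeromeasure} enters. With $s=\max_{j}s_{j}>c$, that assumption makes $G$ strictly increasing on $(c,s)$, identically $0$ on $(-\infty,c]$, and (when $s<\infty$) constant on $[s,\infty)$; thus $G(R_{i(1)})<G(R_{i(m)})$ precisely when $R_{i(1)}<R_{i(m)}$, $R_{i(m)}>c$ and $R_{i(1)}<s$. Because $\mathbb{P}(R_{i1}>c)>0$ (as $s_{1}>c$), $\mathbb{P}(R_{i1}\ge s)=0$ (by the definition of $s_{1}\le s$ together with continuity of $F_{1}$), and all responses within a stratum coincide only with probability zero, the event $\{R_{i1}\in(c,s)\}$ has positive probability and lies, up to a null set, inside $E$; hence $\mathbb{P}(E)>0$. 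Finally, for any $1\le\Gamma_{1}<\Gamma_{2}$ we have $\Phi(\Gamma_{2})-\Phi(\Gamma_{1})\ge0$ always and $>0$ on $E$, so
$$\varphi(\Gamma_{2})-\varphi(\Gamma_{1})=\mathbb{E}\{\Phi(\Gamma_{2})-\Phi(\Gamma_{1})\}\ \ge\ \mathbb{E}\{(\Phi(\Gamma_{2})-\Phi(\Gamma_{1}))\,\mathbbm{1}_{E}\}\ >\ 0,$$
which is the asserted strict monotonicity.

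I expect the main obstacle to be the step ``$\Phi$ constant on an interval $\Rightarrow$ all the $G(R_{i(j)})$ are equal'': the linear-fractional form of $g_{b}$ makes ``constant on an interval'' upgrade to ``constant everywhere'', but converting $A_{b^{*}}=cb^{*}$, $B_{b^{*}}=c(m-b^{*})$ into genuine degeneracy of all $m$ ordered values requires the sorting/averaging squeeze above. Everything else --- the non-decreasingness of each $g_{b}$ from the order-statistic inequality, and the positivity of $\mathbb{P}(E)$ from Assumption~\ref{ass:nonzeromeasure} --- is routine once the quantities $A_{b},B_{b}$ are introduced.
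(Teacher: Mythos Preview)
Your proof is correct and follows the same skeleton as the paper's: write $\Phi(\Gamma)=\max_{b}g_{b}(\Gamma)$, show each $g_{b}$ is non-decreasing (you differentiate, the paper computes $g_{b}(\Gamma_{2})-g_{b}(\Gamma_{1})$ algebraically), then exhibit a positive-probability event on which $\Phi$ is \emph{strictly} increasing. The two proofs diverge in how they obtain strictness. The paper simply takes the event $\{R_{i(1)}\ge c\}$ (all responses aberrant), on which the ordered $G$-values are a.s.\ strictly increasing so every $g_{b}$ is strictly increasing; it then needs $\mathbb{P}(R_{i(1)}\ge c)>0$, a joint statement. You instead prove the sharper pointwise fact ``$\Phi$ constant on an interval $\Rightarrow$ all $G(R_{i(j)})$ equal'' via the linear-fractional structure and the averaging squeeze, and then use the event $\{R_{i1}\in(c,s)\}$, whose positive probability follows from the \emph{marginal} condition $s_{1}>c$ alone. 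Your route is a bit longer but buys two things: the constancy characterization is the exact obstruction to strictness (so $E=\{G(R_{i(1)})<G(R_{i(m)})\}$ is the maximal event that works), and the positivity check is cleaner since it avoids any appeal to the joint distribution beyond ``ties have probability zero.''
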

 
 \begin{proof}
 	For any $b\in [m-1]$, and for any $1\leq \Gamma_{1} <\Gamma_{2}<+\infty$, since
 \begin{equation*}
 	\frac{\sum_{j=1}^{b}G(R_{i(j)})}{b}-\frac{\sum_{j=b+1}^{m}G(R_{i(j)})}{m-b}\leq 0,
 \end{equation*}
 we have
\begin{align*}
&\quad \frac{\sum_{j=1}^{b}G(R_{i(j)})+\Gamma_{2} \sum_{j=b+1}^{m}G(R_{i(j)})}{b+\Gamma_{2}(m-b)}-\frac{\sum_{j=1}^{b}G(R_{i(j)})+\Gamma_{1} \sum_{j=b+1}^{m}G(R_{i(j)})}{b+\Gamma_{1}(m-b)}\\
	&= \frac{\sum_{j=b+1}^{m}G(R_{i(j)})}{m-b}+\Big\{ \sum_{j=1}^{b}G(R_{i(j)})-\frac{b\cdot \sum_{j=b+1}^{m}G(R_{i(j)})}{m-b} \Big\} \cdot \frac{1}{b+\Gamma_{2}(m-b)} \\
	&\quad - \frac{\sum_{j=b+1}^{m}G(R_{i(j)})}{m-b}-\Big\{\sum_{j=1}^{b}G(R_{i(j)})-\frac{b\cdot \sum_{j=b+1}^{m}G(R_{i(j)})}{m-b} \Big\} \cdot \frac{1}{b+\Gamma_{1}(m-b)}\\
	&=b\Big\{\frac{\sum_{j=1}^{b}G(R_{i(j)})}{b}-\frac{\cdot \sum_{j=b+1}^{m}G(R_{i(j)})}{m-b} \Big\}\Big\{ \frac{1}{b+\Gamma_{2}(m-b)}-\frac{1}{b+\Gamma_{1}(m-b)} \Big\}\\
	&\geq  0,
\end{align*}
where equality holds if and only if 
\begin{equation*}
	\frac{\sum_{j=1}^{b}G(R_{i(j)})}{b}-\frac{\sum_{j=b+1}^{m}G(R_{i(j)})}{m-b}= 0.
\end{equation*}

Thus, we have
\begin{equation*}
	\max_{b\in [m-1]} \frac{\sum_{j=1}^{b}G(R_{i(j)})+\Gamma_{2} \sum_{j=b+1}^{m}G(R_{i(j)})}{b+\Gamma_{2}(m-b)} \geq  \max_{b\in [m-1]} \frac{\sum_{j=1}^{b}G(R_{i(j)})+\Gamma_{1} \sum_{j=b+1}^{m}G(R_{i(j)})}{b+\Gamma_{1}(m-b)}.
\end{equation*}

That is, to show that $\varphi(\Gamma_{2})>\varphi(\Gamma_{1})$, it suffices to show that
\begin{equation*}
	\mathbb{P} \Big\{\max_{b\in [m-1]} \frac{\sum_{j=1}^{b}G(R_{i(j)})+\Gamma_{2} \sum_{j=b+1}^{m}G(R_{i(j)})}{b+\Gamma_{2}(m-b)}>  \max_{b\in [m-1]} \frac{\sum_{j=1}^{b}G(R_{i(j)})+\Gamma_{1} \sum_{j=b+1}^{m}G(R_{i(j)})}{b+\Gamma_{1}(m-b)}\Big\}>0.
\end{equation*}

We have
\begin{align*}
&\quad 	\mathbb{P}\Big\{\max_{b\in [m-1]} \frac{\sum_{j=1}^{b}G(R_{i(j)})+\Gamma_{2} \sum_{j=b+1}^{m}G(R_{i(j)})}{b+\Gamma_{2}(m-b)}>  \max_{b\in [m-1]} \frac{\sum_{j=1}^{b}G(R_{i(j)})+\Gamma_{1} \sum_{j=b+1}^{m}G(R_{i(j)})}{b+\Gamma_{1}(m-b)}\Big\}\\
&\geq \mathbb{P}\Big[\bigcap_{b=1}^{m-1}\Big \{ \frac{\sum_{j=1}^{b}G(R_{i(j)})+\Gamma_{2} \sum_{j=b+1}^{m}G(R_{i(j)})}{b+\Gamma_{2}(m-b)}>\frac{\sum_{j=1}^{b}G(R_{i(j)})+\Gamma_{1} \sum_{j=b+1}^{m}G(R_{i(j)})}{b+\Gamma_{1}(m-b)}\Big\} \Big]\\
&= \mathbb{P} \Big[\bigcap_{b=1}^{m-1}\Big \{ \frac{\sum_{j=1}^{b}G(R_{i(j)})}{b}-\frac{\sum_{j=b+1}^{m}G(R_{i(j)})}{m-b}<0 \Big\} \Big]\\
    &\geq \mathbb{P}(R_{i(m)}> R_{i(m-1)} > \dots > R_{i(1)}\geq c ) \\
   	&=\mathbb{P}(R_{i(1)}\geq c) \quad \quad \text{(by Assumption~\ref{ass:iid})}\\
	&>0, \quad \quad \text{(by Assumption~\ref{ass:nonzeromeasure})}
\end{align*}
so the conclusion follows.
\end{proof}

\begin{proof}[Theorem 1]
By Lemma~\ref{lem:continuity}, Lemma~\ref{lem:boundinglimitofphi} and Lemma~\ref{lem:monotone}, it is clear that equation $\varphi(\Gamma)=\mathbb{E}\{G(R_{i1})\}$
has a unique solution $\widetilde{\Gamma}$ on $[1,+\infty)$.
Note that,
\begin{align}
	\Psi_{I, \Gamma}&=\mathbb{P}(T_{\text{abe}}\geq \xi_{\alpha}\mid \mathcal{Z})\nonumber \\
	&= \mathbb{P}\Bigg\{\frac{\sum_{i=1}^{I}q(R_{i1}\mid \mathbf{R})- \sum_{i=1}^{I} \overline{\overline{\mu}}_{i}}{\sqrt{\sum_{i=1}^{I}\overline{\overline{\nu}}_{i}}}\geq  \Phi^{-1}(1-\alpha) \Bigg\}\nonumber \\
	&=\mathbb{P}\Bigg\{\frac{\sqrt{I}\big( \frac{1}{I} \sum_{i=1}^{I}\frac{q(R_{i1}\mid \mathbf{R})}{mI}- \frac{1}{I} \sum_{i=1}^{I} \frac{\overline{\overline{\mu}}_{i}}{mI} \big)}{\sqrt{\frac{1}{I}\sum_{i=1}^{I}\frac{\overline{\overline{\nu}}_{i}}{(mI)^{2}}}} \geq  \Phi^{-1}(1-\alpha) \Bigg\}\label{equa:asymppower} .
\end{align}

Since $q(R_{i(j)}\mid \mathbf{R})\leq mI$, we have
\begin{equation*}
    \frac{1}{I}\sum_{i=1}^{I}\frac{\overline{\overline{\nu}}_{i}}{(mI)^{2}}=\frac{1}{I}\sum_{i=1}^{I} \frac{\max_{b \in B_{i}}\overline{\overline{\nu}}_{ib}}{(mI)^{2}}
    \leq \frac{1}{I}\sum_{i=1}^{I}\max_{b \in B_{i}}  \frac{\sum_{j=1}^{b} \frac{q^{2}(R_{i(j)}|\textbf{R})}{(mI)^{2}}+\Gamma \sum_{j=b+1}^{m}\frac{ q^{2}(R_{i(j)}|\textbf{R})}{(mI)^{2}}}{b+\Gamma(m-b)}\leq 1.
\end{equation*}
   
For $\Gamma < \widetilde{\Gamma}$, by Lemma~\ref{lem:monotone} we have  $\varphi(\widetilde{\Gamma})>\varphi(\Gamma)$. Thus, as $I \rightarrow \infty$, 
\begin{align*}
\frac{\sqrt{I}\big\{ \frac{1}{I} \sum_{i=1}^{I}\frac{q(R_{i1}|\mathbf{R})}{mI}- \frac{1}{I} \sum_{i=1}^{I} \frac{\overline{\overline{\mu}}_{i}}{mI}\big\}}{\sqrt{\frac{1}{I}\sum_{i=1}^{I}\frac{\overline{\overline{\nu}}_{i}}{(mI)^{2}}}}
	&\simeq \frac{\sqrt{I}\big\{ \mathbb{E}(G(R_{i1}))-\varphi(\Gamma) \big\}}{\sqrt{\frac{1}{I}\sum_{i=1}^{I}\frac{\overline{\overline{\nu}}_{i}}{(mI)^{2}}}} \quad \text{(by Lemma~\ref{lem:expectationunderthealter} and Lemma~\ref{lem:expectationunderthenull})}\\
	&=\frac{\sqrt{I}\big\{ \varphi(\widetilde{\Gamma})-\varphi(\Gamma) \big\}}{\sqrt{\frac{1}{I}\sum_{i=1}^{I}\frac{\overline{\overline{\nu}}_{i}}{(mI)^{2}}}}\\
	&\geq \sqrt{I}\big\{ \varphi(\widetilde{\Gamma})-\varphi(\Gamma) \big\}\\
	&\rightarrow +\infty.
\end{align*}

Similarly, we have for $\Gamma > \widetilde{\Gamma}$,
\begin{equation*}
	\frac{\sqrt{I}\big\{ \frac{1}{I} \sum_{i=1}^{I}\frac{q(R_{i1}\mid \mathbf{R})}{mI}- \frac{1}{I} \sum_{i=1}^{I} \frac{\overline{\overline{\mu}}_{i}}{mI}\big\}}{\sqrt{\frac{1}{I}\sum_{i=1}^{I}\frac{\overline{\overline{\nu}}_{i}}{(mI)^{2}}}} \rightarrow -\infty,
\end{equation*}
so the conclusion follows from (\ref{equa:asymppower}).
\end{proof}

\section*{Proof of Proposition~\ref{proposition:level}}

\begin{proof}

Suppose that $\mathbf{u}_{0}\in \mathcal{U}$ is the unknown true vector of unmeasured confounders and $\Gamma_{0}\leq \Gamma$ is the unknown true magnitude of hidden bias. Recall that $\mathbf{\rho}^{*}_{\Gamma}$ and $y^{*}_{\Gamma}$ are the optimal values of $(*)$ and $(**)$ with sensitivity parameter $\Gamma$ respectively. Since the constraint regions of $(*)$ and $(**)$ enlarge as $\Gamma$ increases, we have $y^{*}_{\Gamma_{0}}\geq y^{*}_{\Gamma}$ and $\mathbf{\rho}^{*}_{\Gamma_{0}}\geq \mathbf{\rho}^{*}_{\Gamma}$. By Slepian's lemma, $\mathbf{\rho}^{*}_{\Gamma_{0}}\geq \mathbf{\rho}^{*}_{\Gamma}$ implies $Q_{\mathbf{\rho}^{*}_{\Gamma_{0}}, \alpha}\leq Q_{\mathbf{\rho}^{*}_{\Gamma}, \alpha}$. Thus we have
\begin{align*}
\mathbb{P}_{\Gamma_{0}, \mathbf{u}_{0}}\Big(\min \limits_{\mathbf{u} \in \mathcal{U} }\ \max\limits_{k\in \{1,2\}} \frac{t_{k}-\mu_{k,\mathbf{u}}}{\sigma_{k, \mathbf{u}}} &\geq Q_{\mathbf{\rho}^{*}_{\Gamma}, \alpha}\Big|\mathcal{F}, \mathcal{Z}\Big)\\
&=\mathbb{P}_{\Gamma_{0}, \mathbf{u}_{0}}\big (y^{*}_{\Gamma}\geq Q_{\mathbf{\rho}^{*}_{\Gamma}, \alpha}\mid \mathcal{F}, \mathcal{Z}\big )\\
 &\leq \mathbb{P}_{\Gamma_{0}, \mathbf{u}_{0}}\big( y^{*}_{\Gamma_{0}}\geq Q_{\mathbf{\rho}^{*}_{\Gamma_{0}}, \alpha}\mid \mathcal{F}, \mathcal{Z}\big )\\
  &= \mathbb{P}_{\Gamma_{0},\mathbf{u}_{0}}\Big(\min \limits_{\mathbf{u} \in \mathcal{U} }\ \max\limits_{k\in \{1,2\}} \frac{t_{k}-\mu_{k,\mathbf{u}}}{\sigma_{k, \mathbf{u}}} \geq \max_{\mathbf{u}\in \mathcal{U}} Q_{\rho_{\mathbf{u}}, \alpha}\Big|\mathcal{F}, \mathcal{Z}, \Gamma=\Gamma_{0} \Big)\\
  &\leq \mathbb{P}_{\Gamma_{0}, \mathbf{u}_{0}}\Big(\min \limits_{\mathbf{u} \in \mathcal{U} }\Big \{ \max\limits_{k\in \{1,2\}} \frac{t_{k}-\mu_{k,\mathbf{u}}}{\sigma_{k, \mathbf{u}}}- Q_{\rho_{\mathbf{u}}, \alpha}\Big \}\geq 0 \Big|\mathcal{F}, \mathcal{Z}, \Gamma=\Gamma_{0}\Big)\\
  &\leq \mathbb{P}_{\Gamma_{0}, \mathbf{u}_{0}}\Big( \max\limits_{k\in \{1,2\}} \frac{t_{k}-\mu_{k,\mathbf{u}_{0}}}{\sigma_{k, \mathbf{u}_{0}}}\geq  Q_{\rho_{\mathbf{u}_{0}}, \alpha} \Big|\mathcal{F}, \mathcal{Z}, \Gamma=\Gamma_{0} \Big)\\
  &\rightarrow \alpha, \quad \text{as $I \rightarrow \infty$}
\end{align*}
so the desired conclusion follows.
\end{proof}

\section*{Proof of Theorem~\ref{th:gainsinds}}

\begin{proof}

Recall that using the Bonferroni adjustment to combine $T_{1}$ and $T_{2}$ takes the design sensitivity $\max \{ \widetilde{\Gamma}_{1}, \widetilde{\Gamma}_{2}\}$; see Rosenbaum (2012). Since inequality (\ref{inequality:minimax}) always holds, the test statistic used in testing procedure (\ref{test:adaptive}) implemented by Algorithm~\ref{algo:adaptive} uniformly dominates the one used by the Bonferroni adjustment, which implies $\widetilde{\Gamma}_{1:2} \geq \max \{ \widetilde{\Gamma}_{1}, \widetilde{\Gamma}_{2}
\}$. 

We then construct an example to show that $\widetilde{\Gamma}_{1:2}>\max \{ \widetilde{\Gamma}_{1}, \widetilde{\Gamma}_{2}\}$ is possible. That is, we show that the minimax procedure (developed in Fogarty and Small, 2016) implemented in the Step 2 of Algorithm~\ref{algo:adaptive} can result in improved design sensitivity by enforcing that unmeasured confounder must have the same impact on the probabilities of assignment to treatment for all scores in each component test on the same outcome variable.

Suppose we have $I$ matched pairs and potential responses $r_{Tij}, r_{Cij}\in S= \{a+b\sqrt{2}: a \in \mathbb{Z}, b\in \mathbb{Z}\}$, $i=1, \dots, I, j=1, 2$. For $x\in S$, we define two functions $f_{1}(x)=a$ and $f_{2}(x)=b$ where $x=a+b\sqrt{2}$ for some $a, b\in \mathbb{Z}$. Note that both $f_{1}$ and $f_{2}$ are well-defined since for $a_{1}, b_{1}, a_{2}, b_{2} \in \mathbb{Z}$, we have $a_{1}+b_{1}\sqrt{2}=a_{2}+b_{2}\sqrt{2}$ if and only if $a_{1}=a_{2}$ and $b_{1}=b_{2}$. Consider two test statistics $T_{1}=I^{-1}\sum_{i=1}^{I}D_{i}^{(1)}$ and $T_{2}=I^{-1}\sum_{i=1}^{I}D_{i}^{(2)}$ where $D_{i}^{(1)}=(Z_{i1}-Z_{i2})(f_{1}(R_{i1})-f_{1}(R_{i2}))$ and $D_{i}^{(2)}=(Z_{i1}-Z_{i2})(f_{2}(R_{i1})-f_{2}(R_{i2}))$ are treated-minus-control paired difference of $f_{1}(R_{ij})$ and $f_{2}(R_{ij})$ respectively. Note that $r_{Tij}=f_{1}(r_{Tij})+f_{2}(r_{Tij})\sqrt{2}$ and $r_{Cij}=f_{1}(r_{Cij})+f_{2}(r_{Cij})\sqrt{2}$. Suppose that for all $i=1,\dots,I, j=1, 2$, the vector $(f_{1}(r_{Tij}), f_{1}(r_{Cij}), f_{2}(r_{Tij}), f_{2}(r_{Cij}))$ are i.i.d. realizations from the distribution:
\begin{equation*}
   (f_{1}(r_{Tij}), f_{1}(r_{Cij}), f_{2}(r_{Tij}), f_{2}(r_{Cij}))=
\begin{cases}
	(3, 0, -1, 0) \quad \text{with probability 1/2}\\
	(-1, 0, 3, 0) \quad \text{with probability 1/2}.
\end{cases}
\end{equation*}
Therefore, the vector of treated-minus-control paired differences $\mathbf{D}_{i}=(D_{i}^{(1)}, D_{i}^{(2)})$ are identically distributed as:
\begin{equation}\label{twooutcomes:nulldistibution}
   (D_{i}^{(1)}, D_{i}^{(2)})=
\begin{cases}
	(3,-1) \quad \text{with probability 1/2}\\
	(-1,3) \quad \text{with probability 1/2}.
\end{cases}
\end{equation}
For $k=1,2$, let $\widetilde{\Gamma}_{k}$ denote the design sensitivity for $T_{k}$ under Fisher's sharp null of no treatment effect $H_{0}: r_{Tij}=r_{Cij}$, $i=1,\dots,I, j=1,2$, and let $\widetilde{\Gamma}_{1:2}$ denote the design sensitivity for testing $H_{0}$ with $T_{1}$ and $T_{2}$ as the two component test statistics through the minimax procedure.

We at first show that $\widetilde{\Gamma}_{1}=\widetilde{\Gamma}_{2}=3$. The design sensitivity is the value of $\widetilde{\Gamma}$ such that the worst-case expectation $\overline{\mu}_{\Gamma,k}$ of $T_{k}$ (i.e. the expectation of the limiting bounding distribution of $T_{k}$) with the magnitude of hidden bias $\Gamma=\widetilde{\Gamma}$ equals the true expectation $\mu_{k}$ of $T_{k}$ (i.e. the actual expectation of the limiting distribution of $T_{k}$) based on how the paired differences $D_{i}^{(k)}$ are generated, $k\in \{1,2\}$; see Rosenbaum (2004). In this case, according to (\ref{twooutcomes:nulldistibution}), it is clear that $\mu_{1}=\mu_{2}=1$. To find the worst-case expectation of $T_{k}$ at a given $\Gamma$, it is clear that under Fisher's sharp null $H_{0}$, any paired difference $D_{i}^{(k)}$ where a 3 was observed should be assigned probability $\Gamma/(1+\Gamma)$ for the treated unit in the pair having the higher value of $f_{k}$. Similarly, when a $-1$ is observed the probability that the treated unit had the lower value of $f_{k}$ should be set to $1/(1+\Gamma)$. For any $\Gamma$, the worst-case expectation $\overline{\mu}_{\Gamma,k}$ of $T_{k}$ is then:
\begin{equation*}
    \overline{\mu}_{\Gamma,k}=\frac{1}{2}\cdot 3 \cdot \big(\frac{\Gamma}{1+\Gamma}-\frac{1}{1+\Gamma}\big)+\frac{1}{2}\cdot (-1) \cdot \big(\frac{1}{1+\Gamma}-\frac{\Gamma}{1+\Gamma}\big)=\frac{2\Gamma-2}{1+\Gamma}.
\end{equation*}
To obtain $\widetilde{\Gamma}_{k}$, we just need to solve the equation $\overline{\mu}_{\widetilde{\Gamma}_{k}, 1}=\mu_{k}$ with $\widetilde{\Gamma}_{k}$, which, from the above arguments, can be written as: $(2\widetilde{\Gamma}_{k}-2)/(1+\widetilde{\Gamma}_{k})=1$. Thus, $\widetilde{\Gamma}_{k}=3$ for $k=1,2$.

We then show that $\widetilde{\Gamma}_{1:2}=+\infty$. Note that asymptotically the minimax procedure fails to reject $H_{0}$ if the maximum over the unmeasured confounders of the minimum over the outcomes of the expectations of $T_{1}$ and $T_{2}$ under Fisher's sharp null $H_{0}$ exceeds the true expectation of the test statistic (in our example, 1 for both $T_{1}$ and $T_{2}$). Hence, the design sensitivity is the value of $\Gamma$ such that the worst-case expectations under Fisher's sharp null for both test statistics exceed their true expectations. Asymptotically, of the $I$ matched pairs $I/2$ will have the observed paired difference of $(3,-1)$ for their two score functions $(f_{1}, f_{2})$ and $I/2$ will have an observed paired difference of $(-1,3)$. Separating the observed pairs into two sets according to their paired difference, let $p_{i}$ be the probability that the treated individual receives the treatment in pair $i$ in the $(3,-1)$ group, and let $q_{i}$ be the probability that the treated individual receives the treatment in pair $i$ of the $(-1,3)$ group. For any given $\Gamma$, consider the following optimization problem:
\begin{align*}
    &\underset{y, p_{i}, q_{i}}{\text{maximize}} \quad y \quad \quad \quad \quad \quad (***)\\
    &\text{subject to} \ \ y \leq \frac{1}{I}\sum_{i=1}^{I/2}\{3\cdot(2p_{i}-1)+(-1)\cdot (2q_{i}-1)\}\\
    &\quad \quad \quad \quad \quad y \leq \frac{1}{I}\sum_{i=1}^{I/2}\{(-1)\cdot(2p_{i}-1)+3\cdot (2q_{i}-1)\} \\
    &\quad \quad \quad \quad \quad \frac{1}{1+\Gamma} \leq p_{i} \leq \frac{\Gamma}{1+\Gamma} \quad i=1,\dots, I/2 \\
    &\quad \quad \quad \quad \quad \frac{1}{1+\Gamma} \leq q_{i} \leq \frac{\Gamma}{1+\Gamma}. \quad  i=1,\dots, I/2
\end{align*}
Let $\mathbf{x}=(y,p_{1},q_{1},\dots, p_{I/2}, q_{I/2})$,  the above problem can be rewritten in canonical form:
\begin{align*}
    &\underset{y, p_{i}, q_{i}}{\text{maximize}} \quad f(\mathbf{x})=y \quad \quad \quad \quad \quad (***)\\
    &\text{subject to} \ \  g_{1}(\mathbf{x})=y-\frac{1}{I}\sum_{i=1}^{I/2}(6p_{i}-2q_{i})+1\leq 0 \\
    &\quad \quad \quad \quad \quad g_{2}(\mathbf{x})=y-\frac{1}{I}\sum_{i=1}^{I/2}(-2p_{i}+6q_{i})+1\leq 0 \\
    &\quad \quad \quad \quad \quad   s_{pi}(\mathbf{x})=p_{i}-\frac{\Gamma}{1+\Gamma}\leq 0 \quad i=1,\dots, I/2 \\
    &\quad \quad \quad \quad \quad  s_{qi}(\mathbf{x})=q_{i}-\frac{\Gamma}{1+\Gamma}\leq 0 \quad i=1,\dots, I/2 \\
    &\quad \quad \quad \quad \quad   t_{pi}(\mathbf{x})=-p_{i}+\frac{1}{1+\Gamma}\leq 0 \quad i=1,\dots, I/2 \\
    &\quad \quad \quad \quad \quad   t_{qi}(\mathbf{x})=-q_{i}+\frac{1}{1+\Gamma}\leq 0. \quad i=1,\dots, I/2
\end{align*}

The above problem along with its canonical form considers the maximum over the unmeasured confounders of the minimum over the outcomes of the expectations of $T_{1}$ and $T_{2}$ under Fisher's sharp null $H_{0}$. The design sensitivity would be the value $\Gamma=\widetilde{\Gamma}$ such that the optimal value $y^{*}$ exceeds the true expectation 1. We claim that the optimal solution is $p_{i}^{*}=q_{i}^{*}=\Gamma/(1+\Gamma)$ for each $i=1,\dots,I/2$, yielding $y^{*}=(\Gamma-1)/(1+\Gamma)$. To show this, we proceed by showing that this solution satisfies the Karush–Kuhn–Tucker (KKT) conditions. Since both the objective functions and the constraints are affine, the KKT conditions are sufficient for proving optimality of a solution. 

Associate KKT multipliers $\lambda_{1}$, $\lambda_{2}$, $\alpha_{pi}$, $\alpha_{qi}$, $\beta_{pi}$, $\beta_{qi}$ with the above constraints. Let $\lambda_{1}=\lambda_{2}=1/2$, $\alpha_{pi}=\alpha_{qi}=2/I$, $\beta_{pi}=\beta_{qi}=0$ for all $i$. We just need to check the following four parts of KKT conditions hold:

(1) Stationarity: partial of the objective function equals sum of partials of constraints times their KKT multipliers for each variable. 
\begin{align*}
    &y \ \ 1=\lambda_{1}+\lambda_{2}=1/2+1/2\\
    &p_{i}\ \ 0=-\lambda_{1}\cdot \frac{1}{I}\cdot 6-\lambda_{2}\cdot \frac{1}{I}\cdot (-2)+\alpha_{pi}-\beta_{pi}=-\frac{1}{2}\cdot \frac{1}{I}\cdot 6-\frac{1}{2}\cdot \frac{1}{I}\cdot (-2)+\frac{2}{I}-0\\
    &q_{i}\ \ 0=-\lambda_{1}\cdot \frac{1}{I}\cdot(-2)-\lambda_{2}\cdot \frac{1}{I}\cdot 6+\alpha_{qi}-\beta_{qi}=-\frac{1}{2}\cdot \frac{1}{I}\cdot (-2)-\frac{1}{2}\cdot \frac{1}{I}\cdot 6+\frac{2}{I}-0.
\end{align*}

(2) Primal feasibility: constraints must be satisfied. Let $\mathbf{x}^{*}=(y^{*},p_{1}^{*},q_{1}^{*},\dots, p_{I/2}^{*}, q_{I/2}^{*})=(\frac{\Gamma-1}{1+\Gamma}, \frac{\Gamma}{1+\Gamma},\frac{\Gamma}{1+\Gamma},\dots, \frac{\Gamma}{1+\Gamma},\frac{\Gamma}{1+\Gamma})$,
\begin{align*}
    g_{1}(\mathbf{x}^{*})=g_{2}(\mathbf{x}^{*})=\frac{\Gamma-1}{1+\Gamma}-\frac{1}{I}\cdot \frac{I}{2}\cdot \frac{4\Gamma}{1+\Gamma}+1=0,
\end{align*}
and it is clear that $s_{pi}(\mathbf{x}^{*})$, $s_{qi}(\mathbf{x}^{*}), t_{pi}(\mathbf{x}^{*}), t_{qi}(\mathbf{x}^{*})\leq 0$ are satisfied.

(3) Dual feasibility: KKT multipliers must be non-negative. This clearly holds based on our choices: $\lambda_{1}=\lambda_{2}=1/2$, $\alpha_{pi}=\alpha_{qi}=2/I$, $\beta_{pi}=\beta_{qi}=0$ for each $i$.

(4) Complementary slackness: for each constraint, either the constraint is binding or the KKT multiplier is zero. This clearly holds since the only constraints which are not binding at the solution are $t_{pi}(\mathbf{x}^{*})$ and $t_{qi}(\mathbf{x}^{*})$. For these, $\beta_{pi}=\beta_{qi}=0$. 

Hence, the KKT conditions $(1)-(4)$ are satisfied, which by KKT sufficiency implies optimality of the proposed solution $y^{*}=(\Gamma-1)/(1+\Gamma)<1$ for any $1\leq \Gamma<+\infty$. Hence, $\widetilde{\Gamma}_{1:2}=+\infty$ since for any finite $\Gamma$, the optimal value $y^{*}$ cannot exceed 1. Thus, the proof is complete.
\end{proof}

We study the simulated power to illustrate the example with $\widetilde{\Gamma}_{1:2}>\max
\{ \widetilde{\Gamma}_{1}, \widetilde{\Gamma}_{2} \}$ constructed in the proof of Theorem~\ref{th:gainsinds}. We consider the test statistics $T_{1}$ and $T_{2}$ defined in the proof of Theorem~\ref{th:gainsinds}. In this example constructed in the proof, we have $\widetilde{\Gamma}_{1}=\widetilde{\Gamma}_{2}=3$ and $\widetilde{\Gamma}_{1:2}=+\infty$. In Table~\ref{tab:checkgainsinds}, we report the simulated power of (i) using $T_{1}$ to test Fisher's sharp null of no treatment effect $H_{0}$, (ii) using $T_{2}$ to test $H_{0}$, and (iii) using the minimax procedure to combine $T_{1}$ and $T_{2}$ to test $H_{0}$, with level $\alpha=0.05$, $\Gamma=2,2.5,2.9,3.1,4,6$ and sample size $I=50, 100, 300$. The (one-sided) minimax procedure uses the critical value $\Phi^{-1}(1-\alpha/2)$ as in Fogarty and Small (2016) in simulations. All the numbers are based on 10,000 replications.

\begin{table}[H]
    \centering
    \caption{Simulated power of $T_{1}$, $T_{2}$ and the minimax procedure combining $T_{1}$ and $T_{2}$.}
    \label{tab:checkgainsinds}
    \smallskip
\smallskip
\smallskip
    \footnotesize
   \begin{tabular}{cccccccccc}
\hline
\multirow{3}{*}{} & \multicolumn{3}{c}{$T_{1}$} & \multicolumn{3}{c}{$T_{2}$}& \multicolumn{3}{c}{Minimax} \\
\cmidrule(r){2-4} \cmidrule(r){5-7} \cmidrule(r){8-10}
& $I=50$ &  $I=100$ & $I=300$ & $I=50$ & $I=100$ & $I=300$ & $I=50$ & $I=100$ & $I=300$ \\
\hline
$\Gamma=2.0$ & 0.24 & 0.47 & 0.92 & 0.25 & 0.45 & 0.93  & 1.00  & 1.00  & 1.00  \\
$\Gamma=2.5$ & 0.06 & 0.09 & 0.29 & 0.06 & 0.09  & 0.30 & 1.00 & 1.00  & 1.00  \\
$\Gamma=2.9$ & 0.02 & 0.02 & 0.03 & 0.01 & 0.02 & 0.03 & 0.48 & 1.00  & 1.00  \\
$\Gamma=3.1$ & 0.01 & 0.01 & 0.01 & 0.01 & 0.01 & 0.01 & 0.33 & 1.00  & 1.00  \\
$\Gamma=4.0$ & 0.00 & 0.00 & 0.00 & 0.00 & 0.00 & 0.00 & 0.01 & 1.00  & 1.00  \\
$\Gamma=6.0$ & 0.00 & 0.00 & 0.00 & 0.00 & 0.00 & 0.00 & 0.00 & 0.20  & 1.00  \\
\hline
\end{tabular}
\end{table}

\normalsize

From Table~\ref{tab:checkgainsinds}, we can see that as sample size $I \rightarrow \infty$, the simulated power of all the three tests increases to 1 for $\Gamma < 3$. For $\Gamma >3$, the simulated power of $T_{1}$ and $T_{2}$ is nearly zero, while the simulated power of using the minimax procedure to combine $T_{1}$ and $T_{2}$ in a sensitivity analysis still increases to 1 as $I$ increases. This holds true even for $\Gamma$ much larger than 3, which confirms that $\widetilde{\Gamma}_{1}=\widetilde{\Gamma}_{2}=3$ and $\widetilde{\Gamma}_{1:2}=+\infty$.

In the proof of Theorem~\ref{th:gainsinds}, to find a case in which applying the minimax procedure results in the substantial gains in design sensitivity, we constructed an example with a perfect negative correlation between the two treated-minus-control paired differences of the two score functions $f_{1}$ and $f_{2}$ of the two tests $T_{1}$ and $T_{2}$. In our example, the worst-case unmeasured confounder vector $\mathbf{u}$ for one test is actually the best-case $\mathbf{u}$ for the other test, and it is this conflict that yields the infinite design sensitivity. It is not necessary that the two test statistics be perfectly negatively correlated to attain an improved design sensitivity, and our procedure applied to independently distributed test statistics would also yield a design sensitivity that is larger than the max of the component design sensitivities. As the correlation gets closer to 1, in each pair the worst-case $\mathbf{u}$ for one test statistic is close to or exactly the worst-case $\mathbf{u}$ for the other test statistic with higher and higher probability, which means that the gap of the max-min inequality (\ref{inequality:minimax}) gets smaller. Recall that the gains in design sensitivity are resulted from the gap of the max-min inequality (\ref{inequality:minimax}), we therefore expect the magnitude of the gains in design sensitivity gets larger as the correlation between the two test statistics gets closer to $-1$, and gets smaller as the correlation gets closer to 1. This also explains why the gains in design sensitivity from applying our new adaptive testing procedure to combine the aberrant rank test and the Mantel-Haenszel test are small and hard to observe since these two component tests are highly positively correlated.

We study the simulated power to illustrate how the power of using the minimax procedure to combine $T_{1}$ and $T_{2}$ varies with the correlation between the two test statistics. In particular, we consider the following three settings of the joint distribution of two paired treated-minus-control differences $(D_{i}^{(1)}, D_{i}^{(2)})$:
\begin{itemize}
    \item Setting 1 (perfectly positively correlated): $\mathbb{P}(D_{i}^{(1)}=3, D_{i}^{(2)}=3)=\mathbb{P}(D_{i}^{(1)}=-1, D_{i}^{(2)}=-1)=1/2$
    \item Setting 2 (independent): $\mathbb{P}(D_{i}^{(1)}=3, D_{i}^{(2)}=3)=\mathbb{P}(D_{i}^{(1)}=3, D_{i}^{(2)}=-1)=\mathbb{P}(D_{i}^{(1)}=-1, D_{i}^{(2)}=3)=\mathbb{P}(D_{i}^{(1)}=-1, D_{i}^{(2)}=-1)=1/4$
    \item Setting 3 (perfectly negatively correlated): $\mathbb{P}(D_{i}^{(1)}=3, D_{i}^{(2)}=-1)=\mathbb{P}(D_{i}^{(1)}=-1, D_{i}^{(2)}=3)=1/2$.
\end{itemize}
Settings 1-3 have the same marginal distribution $\mathbb{P}(D_{i}^{(1)}=3)=\mathbb{P}(D_{i}^{(2)}=3)=\mathbb{P}(D_{i}^{(1)}=-1)=\mathbb{P}(D_{i}^{(2)}=-1)=1/2$. That is, the power of using $T_{1}$ and $T_{2}$ to test $H_{0}$ is the same in each setting and has been reported in Table~\ref{tab:checkgainsinds} if we still set $\alpha=0.05$, $\Gamma=2,2.5,2.9,3.1,4,6$ and sample size $I=50, 100, 300$. Note that Setting 3 is exactly the example constructed in the proof. We set the critical value $=\Phi^{-1}(1-\alpha/2)$ in each setting. All the numbers are based on 10,000 replications.

\begin{table}[ht]
    \centering
    \caption{Simulated power of the minimax procedure with the various correlations of the two component test statistics.}
    \label{tab:checkvariouscor}
    \smallskip
\smallskip
\smallskip
    \footnotesize
   \begin{tabular}{cccccccccc}
\hline
\multirow{3}{*}{} & \multicolumn{3}{c}{Setting 1} & \multicolumn{3}{c}{Setting 2}& \multicolumn{3}{c}{Setting 3} \\
\cmidrule(r){2-4} \cmidrule(r){5-7} \cmidrule(r){8-10}
& $I=50$ &  $I=100$ & $I=300$ & $I=50$ & $I=100$ & $I=300$ & $I=50$ & $I=100$ & $I=300$ \\
\hline
$\Gamma=2.0$ & 0.10 & 0.31 & 0.87 & 0.38 & 0.86 & 1.00  & 1.00  & 1.00  & 1.00  \\
$\Gamma=2.5$ & 0.01 & 0.05 & 0.17 & 0.11 & 0.46 & 0.99 & 1.00 & 1.00  & 1.00  \\
$\Gamma=2.9$ & 0.00 & 0.00 & 0.01 & 0.03 & 0.19 & 0.85 & 0.48 & 1.00  & 1.00  \\
$\Gamma=3.1$ & 0.00 & 0.00 & 0.00 & 0.02 & 0.11 & 0.67 & 0.33 & 1.00  & 1.00  \\
$\Gamma=4.0$ & 0.00 & 0.00 & 0.00 & 0.00 & 0.00 & 0.05 & 0.01 & 1.00  & 1.00  \\
$\Gamma=6.0$ & 0.00 & 0.00 & 0.00 & 0.00 & 0.00 & 0.00 & 0.00 & 0.20  & 1.00  \\
\hline
\end{tabular}
\end{table}
\normalsize

From Table~\ref{tab:checkvariouscor}, we can see that the power of using the minimax procedure to combine $T_{1}$ and $T_{2}$ increases as the correlation of the two test statistics decreases, which agrees with the theoretical insight that the magnitude of the gains in design sensitivity gets larger as the correlation between the two test statistics gets closer to $-1$. Also, Table~\ref{tab:checkvariouscor} suggests that substantial gains in the design sensitivities can be expected in both Setting 2 (independent case) and Setting 3 (negatively correlated case). Especially, for Setting 2 (independent case), the power still increases as the sample size increases when $\Gamma=4$, suggesting that the design sensitivity of the adaptive test (resulted from the minimax procedure) should be at least greater than or equal to 4 and is therefore evidently greater than the individual design sensitivity (=3). Therefore, substantial gains in design sensitivity resulted from the adaptive test can be evident with two negatively correlated, independent or weakly positively correlated component test statistics.

\begin{center}
{\large\bf Appendix B: Asymptotic Comparison With Correlation Within Strata}
\end{center}

In Section~\ref{sec:computeds}, to make the simulations easier and clearer, we assume that $R_{i1}, \dots, R_{im}$ are independent of each other within each stratum $i$. In practice, matching may introduce correlation within strata, so here we examine whether the pattern of asymptotic comparisons in Section~\ref{sec:computeds} via the design sensitivity still holds with correlated outcomes within strata or not.

\begin{table}[ht]
\centering \caption{Design sensitivities of the Mantel-Haenszel test and the aberrant rank test under Models 5-8 and matching with three controls with various parameters. The larger of the two design sensitivities of the two tests is in bold in each case.}
\label{tab:designsensitivity with correlation}
\smallskip
\smallskip
\smallskip
\small
\centering
\begin{tabular}{ c c c c } 
  \hline
  \multicolumn{4}{c}{Model 5: additive, multivariate normal}  \\
  Test statistic & $\beta=0.50$ & $\beta=0.75$ & $\beta=1.00$\\
  \hline
   Mantel-Haenszel & 3.47 & 6.45  & 11.99 \\
   Aberrant rank & \textbf{3.98} & \textbf{7.70}  & \textbf{14.74} \\
 \hline
 \multicolumn{4}{c}{Model 6: additive, multivariate Laplace}  \\
  Test statistic & $\beta=0.50$ & $\beta=0.75$ & $\beta=1.00$ \\
 \hline 
   Mantel-Haenszel &  \textbf{3.83} & \textbf{8.12}  & \textbf{17.51}\\
   Aberrant rank &  3.69 & 7.31  & 14.47 \\
 \hline
 \multicolumn{4}{c}{Model 7: multiplicative, multivariate normal}  \\
  Test statistic & $\delta=1.50$ & $\delta=1.75$ & $\delta=2.00$ \\
  \hline 
   Mantel-Haenszel &  2.30 & 2.91 & 3.46 \\
   Aberrant rank &  \textbf{3.62} &  \textbf{5.38} & \textbf{7.26} \\
 \hline
 \multicolumn{4}{c}{Model 8: multiplicative, multivariate Laplace}  \\
  Test statistic &  $\delta=1.50$ & $\delta=1.75$ & $\delta=2.00$ \\
 \hline
   Mantel-Haenszel & 2.39 & 3.13  & 3.82 \\
   Aberrant rank & \textbf{3.35} & \textbf{4.94}  & \textbf{6.66}
    \\
 \hline 
 \end{tabular}
\end{table}

As in Theorem~\ref{dsformula} and Section~\ref{sec:computeds}, without loss of generality, we still assume that $j=1$ receives treatment and others receive control for each $i$, and $(R_{i1}, \dots, R_{im})^{T}=(r_{Ti1}, r_{Ci2}, \dots, r_{Cim})^{T}$ is an i.i.d. realization from a multivariate continuous distribution. Following Section~\ref{sec:computeds}, we still assume that $r_{Tij}=g(r_{Cij})$ for some deterministic function $g$. Instead of assuming $r_{Ci1},\dots, r_{Cim}$ are independent of each other within each stratum $i$ as in Section~\ref{sec:computeds}, here we allow $r_{Ci1},\dots, r_{Cim}$ to be correlated with each other. Parallel with Models 1-4 in Section~\ref{sec:computeds}, we consider the following four models with correlated outcomes within strata:
 \begin{itemize}
	\item Model 5 (additive treatment effects, multivariate normal distribution with correlation): $r_{Tij}=r_{Cij}+\beta$, and $(r_{Ci1}, \dots, r_{Cim})$ follows a multivariate normal distribution with the mean vector being $(0,\dots,0)$ and the covariance matrix being $\Sigma_{m\times m}$ where $\Sigma_{ii}=1$ for $i=1,\dots,m$ and $\Sigma_{ij}=0.5$ for any $i\neq j$.
	\item Model 6 (additive treatment effects, multivariate Laplace distribution with correlation): $r_{Tij}=r_{Cij}+\beta$, and $(r_{Ci1}, \dots, r_{Cim})$ follows a multivariate Laplace distribution with the mean vector being $(0,\dots,0)$ and the covariance matrix being $\Sigma_{m\times m}$ where $\Sigma_{ii}=1$ for $i=1,\dots,m$ and $\Sigma_{ij}=0.5$ for any $i\neq j$.
    \item Model 7 (multiplicative treatment effects, multivariate normal distribution with correlation): $r_{Tij}=\delta\cdot  r_{Cij}$, and $(r_{Ci1}, \dots, r_{Cim})$ follows a multivariate normal distribution with the mean vector being $(0,\dots,0)$ and the covariance matrix being $\Sigma_{m\times m}$ where $\Sigma_{ii}=1$ for $i=1,\dots,m$ and $\Sigma_{ij}=0.5$ for any $i\neq j$.
	\item Model 8 (multiplicative treatment effects, multivariate Laplace distribution with correlation): $r_{Tij}=\delta\cdot r_{Cij}$, and $(r_{Ci1}, \dots, r_{Cim})$ follows a multivariate Laplace distribution with the mean vector being $(0,\dots,0)$ and the covariance matrix being $\Sigma_{m\times m}$ where $\Sigma_{ii}=1$ for $i=1,\dots,m$ and $\Sigma_{ij}=0.5$ for any $i\neq j$.
\end{itemize}
 For Models 5-8, we still set the aberrant response threshold to be $c=1$. Table~\ref{tab:designsensitivity with correlation} reports the design sensitivities of the Mantel-Haenszel test and the aberrant rank test under Models 5-8 with $m=4$ (i.e., matching with three controls) and various $\beta$ and $\delta$. As described in Section~\ref{sec:computeds}, calculation of the design sensitivity can be done via Monte-Carlo simulations.

 The pattern of Table~\ref{tab:designsensitivity with correlation} agrees with that of Table~\ref{tab:designsensitivity}. First, the choice of the test statistic still has a huge influence on the design sensitivities in the presence of correlation within strata; see Models 7 and 8 in Table~\ref{tab:designsensitivity with correlation}. Second, with correlated outcomes within strata, whether or not the aberrant rank test outperforms the Mantel-Haenszel test still depends on the unknown data generating process. From Table~\ref{tab:designsensitivity with correlation}, we can see that under Models 5, 7 and 8, the aberrant rank test outperforms the Mantel-Haenszel test in terms of design sensitivities. While under Model 6, the design sensitivities of the Mantel Haenszel test are larger than those of the aberrant rank test. Note that Models 5-8 are parallel with Models 1-4: both the marginal distributions of $r_{Tij}$ and $r_{Cij}$ are correspondingly equal for Models 1-4 versus Models 5-8. Therefore, the insights in Section~\ref{sec:computeds} on when the aberrant rank test should be preferred over the Mantel-Haenszel test and other times the Mantel-Haenszel test should be preferred can still shed light on the simulation results in Table~\ref{tab:designsensitivity with correlation}.

\clearpage
\begin{center}
    {\large\bf Appendix C: More Details on the Regularity Assumptions}
\end{center}

We give a sufficient condition under which $(T_{1}, T_{2})$ considered in Section~\ref{sec:adaptive} is asymptotically bivariate normal in the sense that in large sample size, the distribution function of $\Big(\frac{T_{1}-\mu_{1,\mathbf{u}}}{\sigma_{1, \mathbf{u}}},\ \frac{T_{2}-\mu_{2,\mathbf{u}}}{\sigma_{2, \mathbf{u}}} \Big)$ can be approximated by that of $\mathcal{N}\left(\left(\begin{array}{c}
0\\
0
\end{array}\right),\left(\begin{array}{cc}
1 & \mathbf{\rho}_{\mathbf{u}} \\
\mathbf{\rho}_{\mathbf{u}} & 1 
\end{array}\right)\right)$, where $\mathbf{\rho}_{\mathbf{u}}=\mathbb{E}\Big(\frac{ T_{1}-\mu_{1, \mathbf{u}}}{\sigma_{1, \mathbf{u}}}\cdot \frac{ T_{2}-\mu_{2, \mathbf{u}}}{\sigma_{2, \mathbf{u}}}\Big \vert \mathcal{F},\mathcal{Z} \Big)$. 

Let $\widetilde{T}_{k,i}=(\sum_{j=1}^{n_{i}}Z_{ij}q_{ijk}-\sum_{j=1}^{n_{i}}p_{ij}q_{ijk})/\sigma_{k,\mathbf{u}}$ for $i=1,\dots,I$ and $k\in\{1,2\}$. Therefore, we have $\frac{T_{1}-\mu_{1,\mathbf{u}}}{\sigma_{1, \mathbf{u}}}=\sum_{i=1}^{I}\widetilde{T}_{1,i}$ and $\frac{T_{2}-\mu_{2,\mathbf{u}}}{\sigma_{2, \mathbf{u}}}=\sum_{i=1}^{I}\widetilde{T}_{2,i}$. Let $T_{\text{joint},i}=(\widetilde{T}_{1,i}, \widetilde{T}_{2,i})^{T}$ for $i=1,\dots,I$. Proposition~\ref{prop: regularity assumption} gives a sufficient condition under which the desired asymptotic bivariate normality holds.

\begin{Proposition}\label{prop: regularity assumption}

We let $\Sigma_{\mathbf{u}}=\left(\begin{array}{cc}
1 & \mathbf{\rho}_{\mathbf{u}} \\
\mathbf{\rho}_{\mathbf{u}} & 1 
\end{array}\right)$. Suppose that the following three assumptions hold: (i) Treatment assignments are independent across matched strata; (ii) As $I\rightarrow \infty$, $\Sigma_{\mathbf{u}}$ has a positive definite limit $\widetilde{\Sigma}$; (iii) For any fixed nonzero vector $\lambda=(\lambda_{1}, \lambda_{2})^{T}$, there exists some $\delta>0$, such that Lyapunov's condition
\begin{equation*}
    \lim_{I\rightarrow \infty}\frac{1}{(\lambda^{T}\Sigma_{\mathbf{u}}\lambda)^{1+\delta/2}}\sum_{i=1}^{I}\mathbb{E}\big\{|\lambda^{T}T_{\text{joint},i}|^{2+\delta} \big\}=0
\end{equation*}
is satisfied. Then we have as $I\rightarrow \infty$, $(T_{1}, T_{2})$ is asymptotically bivariate normal in the sense that 
\begin{equation*}
\Sigma_{\mathbf{u}}^{-1/2}\Big(\frac{T_{1}-\mu_{1,\mathbf{u}}}{\sigma_{1, \mathbf{u}}},\ \frac{T_{2}-\mu_{2,\mathbf{u}}}{\sigma_{2, \mathbf{u}}} \Big)^{T} \xrightarrow{\mathcal{L}} \mathcal{N}(\mathbf{0}, I_{2\times 2}).
\end{equation*}
\end{Proposition}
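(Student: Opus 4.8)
The plan is to establish this bivariate CLT by the Cram\'er--Wold device, reducing it to a one-dimensional Lyapunov central limit theorem for a triangular array, and then to remove the $I$-dependent standardizing matrix by a converging-together (Slutsky-type) argument. Everything below is conditional on $\mathcal{F}$ and $\mathcal{Z}$, so the only randomness is the within-stratum treatment assignment; under the Rosenbaum model with parameter $\Gamma$ and confounders $\mathbf{u}$ this is independent across strata by assumption~(i), hence $T_{\text{joint},1},\dots,T_{\text{joint},I}$ are independent, each finitely supported (so all moments exist) and centered by construction. Writing $S_I:=\big(\tfrac{T_1-\mu_{1,\mathbf{u}}}{\sigma_{1,\mathbf{u}}},\ \tfrac{T_2-\mu_{2,\mathbf{u}}}{\sigma_{2,\mathbf{u}}}\big)^T=\sum_{i=1}^I T_{\text{joint},i}$, independence across strata together with the definitions of $\sigma_{k,\mathbf{u}}^2$ and of $\rho_{\mathbf{u}}$ in~(\ref{correlation-with_u}) gives $\mathrm{Cov}(S_I\mid\mathcal{F},\mathcal{Z})=\Sigma_{\mathbf{u}}$ \emph{exactly}: the diagonal entries equal $\sum_{i=1}^I\mathrm{Var}(\widetilde T_{k,i})=\mathrm{Var}(T_k)/\sigma_{k,\mathbf{u}}^2=1$ and the off-diagonal entry equals $\sum_{i=1}^I\mathrm{Cov}(\widetilde T_{1,i},\widetilde T_{2,i})=\rho_{\mathbf{u}}$; by assumption~(ii), $\Sigma_{\mathbf{u}}\to\widetilde\Sigma$ with $\widetilde\Sigma$ positive definite.

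By the Cram\'er--Wold device it suffices to show that for every fixed nonzero $\lambda=(\lambda_1,\lambda_2)^T\in\mathbb{R}^2$, $\lambda^T S_I=\sum_{i=1}^I\lambda^T T_{\text{joint},i}\xrightarrow{\mathcal{L}}\mathcal{N}(0,\lambda^T\widetilde\Sigma\lambda)$. The summands $\lambda^T T_{\text{joint},i}$, $i=1,\dots,I$, form a row-independent triangular array of centered, bounded random variables whose variances sum to $\lambda^T\Sigma_{\mathbf{u}}\lambda\to\lambda^T\widetilde\Sigma\lambda>0$, with positivity of the limit coming from $\lambda\neq0$ and positive-definiteness of $\widetilde\Sigma$ (so $\lambda^T\Sigma_{\mathbf{u}}\lambda>0$ for all large $I$). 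Assumption~(iii) is \emph{precisely} the Lyapunov moment condition for this array, so the Lyapunov CLT for triangular arrays yields $\lambda^T S_I/\sqrt{\lambda^T\Sigma_{\mathbf{u}}\lambda}\xrightarrow{\mathcal{L}}\mathcal{N}(0,1)$, and multiplying by $\sqrt{\lambda^T\Sigma_{\mathbf{u}}\lambda}\to\sqrt{\lambda^T\widetilde\Sigma\lambda}$ and invoking Slutsky gives $\lambda^T S_I\xrightarrow{\mathcal{L}}\mathcal{N}(0,\lambda^T\widetilde\Sigma\lambda)$. Since $\lambda$ is arbitrary, Cram\'er--Wold delivers $S_I\xrightarrow{\mathcal{L}}\mathcal{N}(\mathbf{0},\widetilde\Sigma)$.

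It remains to strip off the prefactor. The map $\Sigma\mapsto\Sigma^{-1/2}$ is continuous on the open cone of positive definite matrices, so by~(ii) the deterministic matrices $\Sigma_{\mathbf{u}}^{-1/2}$ converge to $\widetilde\Sigma^{-1/2}$; the converging-together lemma (a Slutsky-type result for a deterministic, convergent matrix sequence) then gives $\Sigma_{\mathbf{u}}^{-1/2}S_I\xrightarrow{\mathcal{L}}\widetilde\Sigma^{-1/2}\,\mathcal{N}(\mathbf{0},\widetilde\Sigma)=\mathcal{N}\big(\mathbf{0},\widetilde\Sigma^{-1/2}\widetilde\Sigma\,\widetilde\Sigma^{-1/2}\big)=\mathcal{N}(\mathbf{0},I_{2\times2})$, which is the assertion.

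The genuine analytic inputs (the Lyapunov condition and positive-definiteness of the limiting covariance) are packaged into the hypotheses, so what remains is essentially bookkeeping; the step I expect to be the only subtle one is handling the $I$-dependence of \emph{both} the array entries $\widetilde T_{k,i}$ (through $p_{ij}$ and the normalizers $\sigma_{k,\mathbf{u}}$, which depend on the whole stratum configuration) \emph{and} the whitening matrix $\Sigma_{\mathbf{u}}$. This is why the argument must be framed for triangular arrays and must invoke the converging-together lemma rather than ordinary Slutsky, and why assumption~(ii) must furnish a \emph{positive definite} limit rather than merely a limit: one needs $\lambda^T\widetilde\Sigma\lambda>0$ for Cram\'er--Wold to produce a nondegenerate Gaussian, and $\widetilde\Sigma\succ0$ for the final whitening step to be well defined.
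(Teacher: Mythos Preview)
Your proof is correct and follows essentially the same route as the paper: Cram\'er--Wold to reduce to a one-dimensional Lyapunov CLT for a triangular array, followed by a Slutsky-type argument to handle the $I$-dependent whitening matrix. The only cosmetic difference is ordering: the paper first premultiplies by the \emph{limit} $\widetilde\Sigma^{-1/2}$, applies Cram\'er--Wold and Lyapunov to that combination, and then swaps $\widetilde\Sigma^{-1/2}$ for $\Sigma_{\mathbf{u}}^{-1/2}$ at the end, whereas you first establish $S_I\xrightarrow{\mathcal{L}}\mathcal{N}(\mathbf{0},\widetilde\Sigma)$ and then whiten; both arrive at the same place via the same tools.
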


\begin{proof}
We first show that $\widetilde{\Sigma}^{-1/2}\Big(\frac{T_{1}-\mu_{1,\mathbf{u}}}{\sigma_{1, \mathbf{u}}},\ \frac{T_{2}-\mu_{2,\mathbf{u}}}{\sigma_{2, \mathbf{u}}} \Big)^{T}\xrightarrow{\mathcal{L}} \mathcal{N}(\mathbf{0}, I_{2\times 2})$. Through an application of the Cram\'er-Wold device (Billingsley, 1995; Theorem 29.4), to ensure that $\widetilde{\Sigma}^{-1/2}\Big(\frac{T_{1}-\mu_{1,\mathbf{u}}}{\sigma_{1, \mathbf{u}}},\ \frac{T_{2}-\mu_{2,\mathbf{u}}}{\sigma_{2, \mathbf{u}}} \Big)^{T} \xrightarrow{\mathcal{L}} \mathcal{N}(\mathbf{0}, I_{2\times 2})$,  we just need to ensure that for any nonzero vector $\widetilde{\lambda}=(\widetilde{\lambda}_{1}, \widetilde{\lambda}_{2})^{T}$, the following standardized deviate is asymptotically normal:
\begin{equation}\label{asymptotic normal of standard deviation}
    \frac{\widetilde{\lambda}^{T} \widetilde{\Sigma}^{-1/2}\Big(\frac{T_{1}-\mu_{1,\mathbf{u}}}{\sigma_{1, \mathbf{u}}},\ \frac{T_{2}-\mu_{2,\mathbf{u}}}{\sigma_{2, \mathbf{u}}} \Big)^{T} }{\sqrt{\widetilde{\lambda}^{T}\widetilde{\lambda}}}\xrightarrow{\mathcal{L}}\mathcal{N}(0,1).
\end{equation}
When treatment assignments are independent across matched strata (condition (i)), for each $I=1,2,\dots$, the sequence 
\begin{equation*}
    A_{I}=\{\widetilde{\lambda}^{T}\widetilde{\Sigma}^{-1/2}T_{\text{joint},1},\dots,\widetilde{\lambda}^{T}\widetilde{\Sigma}^{-1/2} T_{\text{joint},I}\}
\end{equation*}
is a sequence of independent random variables, and the collection $\{A_{1},A_{2},\dots,\}$ is a triangular array of random variables. Set $\lambda^{T}=\widetilde{\lambda}^{T} \widetilde{\Sigma}^{-1/2}$ in condition (iii) and apply Lyapunov central limit theorem (Billingsley, 1995; Theorem 27.3) to $\{A_{1},A_{2},\dots,\}$, we have
\begin{equation*}
     \frac{\widetilde{\lambda}^{T} \widetilde{\Sigma}^{-1/2}\Big(\frac{T_{1}-\mu_{1,\mathbf{u}}}{\sigma_{1, \mathbf{u}}},\ \frac{T_{2}-\mu_{2,\mathbf{u}}}{\sigma_{2, \mathbf{u}}} \Big)^{T} }{\sqrt{\widetilde{\lambda}^{T}\widetilde{\Sigma}^{-1/2}\Sigma_{\mathbf{u}}\widetilde{\Sigma}^{-1/2} \widetilde{\lambda}}}\xrightarrow{\mathcal{L}}\mathcal{N}(0,1).
\end{equation*}
Then (\ref{asymptotic normal of standard deviation}) follows immediately from
\begin{align*}
    \frac{\widetilde{\lambda}^{T} \widetilde{\Sigma}^{-1/2}\Big(\frac{T_{1}-\mu_{1,\mathbf{u}}}{\sigma_{1, \mathbf{u}}},\ \frac{T_{2}-\mu_{2,\mathbf{u}}}{\sigma_{2, \mathbf{u}}} \Big)^{T} }{\sqrt{\widetilde{\lambda}^{T}\widetilde{\lambda}}}&=\frac{\widetilde{\lambda}^{T} \widetilde{\Sigma}^{-1/2}\Big(\frac{T_{1}-\mu_{1,\mathbf{u}}}{\sigma_{1, \mathbf{u}}},\ \frac{T_{2}-\mu_{2,\mathbf{u}}}{\sigma_{2, \mathbf{u}}} \Big)^{T} }{\sqrt{\widetilde{\lambda}^{T}\widetilde{\Sigma}^{-1/2}\Sigma_{\mathbf{u}}\widetilde{\Sigma}^{-1/2} \widetilde{\lambda}}}\frac{\sqrt{\widetilde{\lambda}^{T}\widetilde{\Sigma}^{-1/2}\Sigma_{\mathbf{u}}\widetilde{\Sigma}^{-1/2}\widetilde{\lambda}}}{\sqrt{\widetilde{\lambda}^{T}\widetilde{\lambda}}}\\
    &\xrightarrow{\mathcal{L}} \mathcal{N}(0,1)\cdot 1 \quad \text{(by condition (ii) and Slutsky's theorem)}\\
    &\sim \mathcal{N}(0,1).
\end{align*}
So we have shown that $\widetilde{\Sigma}^{-1/2}\Big(\frac{T_{1}-\mu_{1,\mathbf{u}}}{\sigma_{1, \mathbf{u}}},\ \frac{T_{2}-\mu_{2,\mathbf{u}}}{\sigma_{2, \mathbf{u}}} \Big)^{T}\xrightarrow{\mathcal{L}} \mathcal{N}(\mathbf{0}, I_{2\times 2})$. Then note that
\begin{align*}
   &\quad \Sigma_{\mathbf{u}}^{-1/2}\Big(\frac{T_{1}-\mu_{1,\mathbf{u}}}{\sigma_{1, \mathbf{u}}},\ \frac{T_{2}-\mu_{2,\mathbf{u}}}{\sigma_{2, \mathbf{u}}} \Big)^{T}\\
   &= \Sigma_{\mathbf{u}}^{-1/2}\widetilde{\Sigma}^{1/2}\widetilde{\Sigma}^{-1/2} \Big(\frac{T_{1}-\mu_{1,\mathbf{u}}}{\sigma_{1, \mathbf{u}}},\ \frac{T_{2}-\mu_{2,\mathbf{u}}}{\sigma_{2, \mathbf{u}}} \Big)^{T}\\
   &\xrightarrow{\mathcal{L}} I_{2\times 2} \cdot \mathcal{N}(\mathbf{0}, I_{2\times 2})\\
   &\sim \mathcal{N}(\mathbf{0}, I_{2\times 2}).
\end{align*}
That is, the distribution function of $\Big(\frac{T_{1}-\mu_{1,\mathbf{u}}}{\sigma_{1, \mathbf{u}}},\ \frac{T_{2}-\mu_{2,\mathbf{u}}}{\sigma_{2, \mathbf{u}}} \Big)^{T}$ can be approximated by that of $\mathcal{N}(\mathbf{0}, \Sigma_{\mathbf{u}})$. So we are done.

\end{proof}

\begin{center}
{\large\bf Appendix D: More Details on the Two-Stage Programming Method}
\end{center}

In this section, we give a detailed derivation of the second stage of the two-stage programming method. Suppose that the prespecified level $\alpha<1/2$. Once getting the rejection threshold $Q_{\mathbf{\rho}^{*}_{\Gamma},\alpha}$ through solving $(*)$, we apply the method developed in Fogarty and Small (2016) to formulate the problem of checking if the inequality 
\begin{equation}\label{ineq: second stage of two-stage method }
    \min_{\mathbf{u} \in \mathcal{U}}  \max_{k\in \{1,2\}} \frac{t_{k}-\mu_{k,\mathbf{u}}}{\sigma_{k, \mathbf{u}}}\geq Q_{\mathbf{\rho}^{*}_{\Gamma},\alpha}, \quad \text{($Q_{\mathbf{\rho}^{*}_{\Gamma},\alpha}>0$ when $\alpha<1/2$)}
\end{equation} 
would hold at a given $\Gamma$ into checking if the optimal value of $(**)$ is greater than or equal to zero or not.

For each fixed $\Gamma$, to check if (\ref{ineq: second stage of two-stage method }) holds or not, we just need to check if 
\begin{equation*}
    \min_{\mathbf{u} \in \mathcal{U}}  
    \max_{k\in \{1,2\}}\Big(t_{k}-\mu_{k,\mathbf{u}}- Q_{\mathbf{\rho}^{*}_{\Gamma}, \alpha}\sigma_{k, \mathbf{u}}\Big) \geq 0,
\end{equation*} 
which can be transformed into solving the following optimization problem and checking if its optimal value $\geq 0$ or not by introducing an auxiliary variable $y$:
\begin{equation*}
     \begin{split}
        \underset{y, u_{ij}}{\text{minimize}} \quad &y \\
         \text{subject to}\quad & y\geq t_k-\mu_{k,\mathbf{u}}- Q_{\mathbf{\rho}^{*}_{\Gamma},\alpha} \sigma_{k,\mathbf{u}} \quad \forall k \in \{0,1\}\\
        &0\leq u_{ij}\leq 1. \quad\forall i,j
     \end{split}
 \end{equation*}
 Note that the above constraints force $y$ to be larger than $t_k-\mu_{k,\mathbf{u}}- Q_{\mathbf{\rho}^{*}_{\Gamma},\alpha} \sigma_{k,\mathbf{u}}$ for both $k=1$ and $k=2$, and drive us to search for the feasible value of $\mathbf{u}=(u_{11},\dots,u_{In_{I}})\in \mathcal{U}$ that allows for $y$ to be as small as possible. This is a routine way of solving minimax problems (Charalambous and Conn, 1978). Therefore, the above optimization problem indeed seeks to find $\min_{\mathbf{u} \in \mathcal{U}}  
    \max_{k\in \{1,2\}} (t_{k}-\mu_{k,\mathbf{u}}- Q_{\mathbf{\rho}^{*}_{\Gamma}, \alpha}\sigma_{k, \mathbf{u}})$.
    
Recall that $w_{ij}=\exp(\gamma u_{ij})$ and $p_{ij}=w_{ij}/\sum_{j^{\prime}=1}^{n_{i}}w_{ij^{\prime}}$. Then the above optimization problem can be written as 
\begin{equation*}
     \begin{split}
        \underset{y, w_{ij}}{\text{minimize}} \quad &y \\
         \text{subject to}\quad & y\geq t_k-\mu_{k,\mathbf{u}}- Q_{\mathbf{\rho}^{*}_{\Gamma},\alpha} \sigma_{k,\mathbf{u}} \quad \forall k \in \{0,1\}\\
        &1\leq w_{ij}\leq \Gamma, \quad\forall i,j
     \end{split}
 \end{equation*}
 where $\mu_{k, \mathbf{u}}=\sum_{i=1}^{I}\sum_{j=1}^{n_{i}}p_{ij}q_{ijk}=\sum_{i=1}^{I}\sum_{j=1}^{n_{i}}\frac{w_{ij}}{\sum_{j^{\prime}=1}^{n_{i}}w_{ij^{\prime}}}\cdot q_{ijk}$ and 
 \begin{align*}
     \sigma_{k, \mathbf{u}}&=\sqrt{\sum_{i=1}^{I}\sum_{j=1}^{n_{i}}p_{ij}q_{ijk}^{2}-\sum_{i=1}^{I}\big(\sum_{j=1}^{n_{i}}p_{ij}q_{ijk}\big)^{2}} \\ &=\sqrt{\sum_{i=1}^{I}\sum_{j=1}^{n_{i}}\frac{w_{ij}}{\sum_{j^{\prime}=1}^{n_{i}}w_{ij^{\prime}}}\cdot q_{ijk}^{2}-\sum_{i=1}^{I}\big(\sum_{j=1}^{n_{i}}\frac{w_{ij}}{\sum_{j^{\prime}=1}^{n_{i}}w_{ij^{\prime}}}\cdot q_{ijk}\big)^{2}}.
 \end{align*}
There are two terms that make the constraints of the above optimization problem complicated: the square root term $\sigma_{k, \mathbf{u}}$ and the linear-fractional term $w_{ij}/\sum_{j^{\prime}=1}^{n_{i}}w_{ij^{\prime}}$. 

We first consider how to get rid of the square root term $\sigma_{k, \mathbf{u}}$. Recall that we are just concerned about if the optimal value of the above optimization problem $\geq 0$ or not. We introduce a prespecified sufficiently large constant `$M$' and two axillary variables $b_{1}$ and $b_{2}$, and then instead consider the following adjusted optimization problem:
\begin{equation*}
     \begin{split}
        \underset{y, w_{ij}, b_{k}}{\text{minimize}}  \quad &y \\
         \text{subject to}\quad & y\geq (t_k-\mu_{k,\mathbf{u}})^2- Q^{2}_{\mathbf{\rho}^{*}_{\Gamma},\alpha} \sigma^2_{k,\mathbf{u}}-Mb_k \quad \forall k \in \{0,1\}\\
        &1\leq w_{ij}\leq \Gamma \quad\forall i,j\\
        &b_k\in \{0,1\} \quad \forall k \in \{0,1\}\\
        &-Mb_k\leq t_k-\mu_{k,\mathbf{u}}\leq M(1-b_k). \quad\forall k \in \{0,1\}
     \end{split}
 \end{equation*}
Note that when $M$ is sufficiently large, for all $k \in \{0,1\}$ and $1\leq w_{ij}\leq \Gamma$, we have $-Mb_k\leq t_k-\mu_{k,\mathbf{u}}\leq M(1-b_k)$ for either $b_{k}=0$ or $b_{k}=1$. When $b_{k}=0$, we have the constraints $0 \leq t_k-\mu_{k,\mathbf{u}}\leq M$ and $y\geq (t_k-\mu_{k,\mathbf{u}})^2- Q^{2}_{\mathbf{\rho}^{*}_{\Gamma},\alpha} \sigma^2_{k,\mathbf{u}}$. When $M$ is sufficiently large, for any $\mathbf{u}\in [0,1]^{N}$ (or equivalently, for any $(w_{11}, \dots, w_{In_{I}})\in [1,\Gamma]^{N}$) such that $0 \leq t_k-\mu_{k,\mathbf{u}}\leq M $, we have $(t_k-\mu_{k,\mathbf{u}})^2- Q^{2}_{\mathbf{\rho}^{*}_{\Gamma},\alpha} \sigma^2_{k,\mathbf{u}}\geq 0$ if and only if $t_k-\mu_{k,\mathbf{u}}- Q_{\mathbf{\rho}^{*}_{\Gamma},\alpha} \sigma_{k,\mathbf{u}}\geq 0$. When $b_{k}=1$, we have the constraints $-M\leq t_k-\mu_{k,\mathbf{u}}\leq 0$ and $y\geq (t_k-\mu_{k,\mathbf{u}})^2- Q^{2}_{\mathbf{\rho}^{*}_{\Gamma},\alpha} \sigma^2_{k,\mathbf{u}}-M$. When $M$ is sufficiently large, for any $\mathbf{u}_{1}, \mathbf{u}_{2}\in [0,1]^{N}$, we have $(t_k-\mu_{k,\mathbf{u}_{2}})^2- Q^{2}_{\mathbf{\rho}^{*}_{\Gamma},\alpha} \sigma^2_{k,\mathbf{u}_{2}}>(t_k-\mu_{k,\mathbf{u}_{1}})^2- Q^{2}_{\mathbf{\rho}^{*}_{\Gamma},\alpha} \sigma^2_{k,\mathbf{u}_{1}}-M$ and $(t_k-\mu_{k,\mathbf{u}_{1}})^2- Q^{2}_{\mathbf{\rho}^{*}_{\Gamma},\alpha} \sigma^2_{k,\mathbf{u}_{1}}-M<0$. Therefore, the above `$M$' constraint imposes a directional error to ensure that we will not falsely reject the null if evidence pointed in the opposite direction of the alternative, i.e., the cases in which $\min \limits_{\mathbf{u} \in \mathcal{U}} \ \max\limits_{k\in \{1,2\}}\ (t_{k}-\mu_{k,\mathbf{u}})^{2}/\sigma^{2}_{k, \mathbf{u}}\geq Q^{2}_{\mathbf{\rho}^{*}_{\Gamma},\alpha}$ while $\min \limits_{\mathbf{u} \in \mathcal{U}} \ \max\limits_{k\in \{1,2\}}\ (t_{k}-\mu_{k,\mathbf{u}})/\sigma_{k, \mathbf{u}}< Q_{\mathbf{\rho}^{*}_{\Gamma},\alpha}$.

We then consider how to get rid of the linear-fractional term $w_{ij}/\sum_{j^{\prime}=1}^{n_{i}}w_{ij^{\prime}}$. A routine way of transforming a linear-fractional term into linear terms is through applying the Charnes-Cooper transformation (Charnes and Cooper, 1962):
\begin{equation*}
    p_{ij}=\frac{w_{ij}}{\sum_{j^{\prime}=1}^{n_{i}}w_{ij^{\prime}}}=\frac{\exp(\gamma u_{ij})}{\sum_{j^{\prime}=1}^{n_{i}}\exp(\gamma u_{ij^{\prime}})}, \quad s_{i}=\frac{1}{\sum_{j^{\prime}=1}^{n_{i}}w_{ij^{\prime}}}=\frac{1}{\sum_{j^{\prime}=1}^{n_{i}}\exp(\gamma u_{ij^{\prime}})}.
\end{equation*}
Then the above optimization problem can be transformed into the following quadratically constrained linear program as stated in section~\ref{sec:adaptive}:
\begin{equation*}
     \begin{split}
        \underset{y, p_{ij}, s_i, b_k}{\text{minimize}} \quad &y  \quad \quad (**)\\
         \text{subject to}\quad & y\geq (t_k-\mu_{k,\mathbf{u}})^2- Q^{2}_{\mathbf{\rho}^{*}_{\Gamma},\alpha} \sigma^2_{k,\mathbf{u}}-Mb_k \quad \forall k \in \{0,1\}\\
        &\sum_{j=1}^{n_{i}} p_{ij}=1\quad\forall i\\
        &s_i\leq p_{ij}\leq \Gamma s_i\quad\forall i,j\\
        &p_{ij}\geq0 \quad\forall i,j\\
        &b_k\in \{0,1\} \quad \forall k \in \{0,1\}\\
        &-Mb_k\leq t_k-\mu_{k,\mathbf{u}}\leq M(1-b_k). \quad\forall k \in \{0,1\}
     \end{split}
 \end{equation*}
Therefore, to judge if (\ref{ineq: second stage of two-stage method }) holds or not, we just need to check whether the optimal value $y^{*}_{\Gamma}$ of $(**)$ satisfies $y^{*}_{\Gamma} \geq 0$. If it is, we reject the null; otherwise, we fail to reject. 

\begin{center}
{\large\bf Appendix E: Adaptive Approach in Full Matching Case}
\end{center}

In this section, we discuss how the adaptive approach developed in the main text can be easily adjusted to allow for full matching case, i.e., the constraint that $\sum_{j=1}^{n_{i}}Z_{ij}\in \{1,n_{i}-1\}$ for all $i=1,\dots,I$ (i.e., either one treated individual and one or more controls, or one control and one or more treated individuals, within each stratum). Let $I=I_{1}+I_{2}$. We rearrange the index $i$ of each stratum such that $\sum_{j=1}^{n_{i}}Z_{ij}=1$ for $i=1,\dots,I_{1}$ and $\sum_{j=1}^{n_{i}}Z_{ij}=n_{i}-1$ for $i=I_{1}+1,\dots, I_{1}+I_{2}$. Let $\widetilde{\mathcal{Z}}$ be the collection of the treatment assignment indicators $\mathbf{Z}=(Z_{11},\dots,Z_{In_{I}})$ such that $\mathbf{Z}\in \widetilde{\mathcal{Z}}$ if and only if $\sum_{j=1}^{n_{i}}Z_{ij}=1$ for all $i=1,\dots,I_{1}$ and $\sum_{j=1}^{n_{i}}Z_{ij}=n_{i}-1$ for all $i=I_{1}+1,\dots,I_{1}+I_{2}$. Let $p_{ij}=\mathbb{P}(Z_{ij}=1\mid \mathcal{F}, \widetilde{\mathcal{Z}})$, $\widetilde{p}_{ij}=\mathbb{P}(Z_{ij}=0\mid \mathcal{F}, \widetilde{\mathcal{Z}})=1-p_{ij}$ and $T_{k,i}=\sum_{j=1}^{n_{i}}Z_{ij}q_{ijk}$ for $i=1,\dots,I, j=1,\dots, n_{i}, k=1,2$. Therefore, we have $T_{1}=\sum_{i=1}^{I}T_{1,i}$ and $T_{2}=\sum_{i=1}^{I}T_{2,i}$. In this case, we have
\begin{equation*}
\left\{
\begin{array}{ll}
      p_{ij}=\exp(\gamma u_{ij})/\sum_{j^{\prime}=1}^{n_{i}}\exp(\gamma u_{ij^{\prime}}) &\quad \text{for\ $i=1,\dots,I_{1}, j=1,\dots,n_{i}$}, \\
      \widetilde{p}_{ij}=\exp(-\gamma u_{ij})/\sum_{j^{\prime}=1}^{n_{i}}\exp(-\gamma u_{ij^{\prime}})   & \quad \text{for \ $i=I_{1}+1,\dots, I_{1}+I_{2}, j=1,\dots,n_{i}$}. 
\end{array}
\right. 
\end{equation*}
Then we have
\begin{align}\label{correlation formula in full matching case}
    &\widetilde{\mu}_{k, \mathbf{u}}=\mathbb{E}_{\Gamma, \mathbf{u}}(T_{k}\mid \mathcal{F}, \widetilde{\mathcal{Z}})=\sum_{i=1}^{I_{1}}\sum_{j=1}^{n_{i}}p_{ij}q_{ijk}+\sum_{i=I_{1}+1}^{I_{1}+I_{2}}\sum_{j=1}^{n_{i}}q_{ijk}-\sum_{i=I_{1}+1}^{I_{1}+I_{2}}\sum_{j=1}^{n_{i}}\widetilde{p}_{ij}q_{ijk},\nonumber\\
    &\widetilde{\sigma}_{k, \mathbf{u}}^{2}=Var_{\Gamma, \mathbf{u}}(T_{k}\mid \mathcal{F}, \widetilde{\mathcal{Z}})=\sum_{i=1}^{I_{1}}\sum_{j=1}^{n_{i}}p_{ij}q_{ijk}^{2}-\sum_{i=1}^{I_{1}}(\sum_{j=1}^{n_{i}}p_{ij}q_{ijk})^{2}+\sum_{i=I_{1}+1}^{I_{1}+I_{2}}\sum_{j=1}^{n_{i}}\widetilde{p}_{ij}q_{ijk}^{2}-\sum_{i=I_{1}+1}^{I_{1}+I_{2}}(\sum_{j=1}^{n_{i}}\widetilde{p}_{ij}q_{ijk})^{2},\nonumber \\
    & Cov_{\Gamma, \mathbf{u}}(T_{1}, T_{2}\mid \mathcal{F}, \widetilde{\mathcal{Z}})=\sum_{i=1}^{I_{1}}\sum_{j=1}^{n_{i}}p_{ij}q_{ij1}q_{ij2}-\sum_{i=1}^{I_{1}}(\sum_{j=1}^{n_{i}}p_{ij}q_{ij1})(\sum_{j=1}^{n_{i}}p_{ij}q_{ij2})\nonumber \\
    &\qquad \qquad \qquad \qquad \qquad \qquad +\sum_{i=I_{1}+1}^{I_{1}+I_{2}}\sum_{j=1}^{n_{i}}\widetilde{p}_{ij}q_{ij1}q_{ij2}-\sum_{i=I_{1}+1}^{I_{1}+I_{2}}(\sum_{j=1}^{n_{i}}\widetilde{p}_{ij}q_{ij1})(\sum_{j=1}^{n_{i}}\widetilde{p}_{ij}q_{ij2}), \nonumber \\
    & \widetilde{\mathbf{\rho}}_{\mathbf{u}}=\mathbb{E}\Big(\frac{ T_{1}-\widetilde{\mu}_{1, \mathbf{u}}}{\widetilde{\sigma}_{1, \mathbf{u}}}\cdot \frac{ T_{2}-\widetilde{\mu}_{2, \mathbf{u}}}{\widetilde{\sigma}_{2, \mathbf{u}}}\Big \vert \mathcal{F},\widetilde{\mathcal{Z}} \Big)=\frac{Cov_{\Gamma, \mathbf{u}}(T_{1}, T_{2}\mid \mathcal{F}, \widetilde{\mathcal{Z}})}{\widetilde{\sigma}_{1, \mathbf{u}}\widetilde{\sigma}_{2, \mathbf{u}}}.
\end{align}
Similar to the main text, we let $w_{ij}=\exp(\gamma u_{ij})$. We further let $\widetilde{w}_{ij}=\exp(-\gamma u_{ij})=w_{ij}^{-1}$. In the first stage, to find the worst-case correlation $\min_{\mathbf{u}\in \mathcal{U}} \widetilde{\mathbf{\rho}}_{\mathbf{u}}$, we just need to solve the following optimization problem which just slightly adjusts the box constraints of the optimization problem $(*)$:
\begin{align*}
    &\underset{w_{ij}, \widetilde{w}_{ij}}{\text{minimize}}\quad \widetilde{\mathbf{\rho}}_{\mathbf{u}} \quad \quad  (\diamond) \\
    &\text{subject to}\ \ 1 \leq w_{ij} \leq \Gamma \quad  i=1,\dots,I_{1},j=1,\dots,n_{i} \\
    &\qquad \qquad \quad \Gamma^{-1} \leq \widetilde{w}_{ij} \leq 1, \quad  i=I_{1}+1,\dots,I_{1}+I_{2}, j=1,\dots,n_{i} 
\end{align*}
where $\rho_{\mathbf{u}}$ is as defined in (\ref{correlation formula in full matching case}) with 
\begin{equation*}
\left\{
\begin{array}{ll}
      p_{ij}=w_{ij}/\sum_{j^{\prime}=1}^{n_{i}}w_{ij} &\quad \text{for\ $i=1,\dots,I_{1},j=1,\dots,n_{i}$}, \\
      \widetilde{p}_{ij}=\widetilde{w}_{ij}/\sum_{j^{\prime}=1}^{n_{i}}\widetilde{w}_{ij}  & \quad \text{for\ $i=I_{1}+1,\dots, I_{1}+I_{2}, j=1,\dots, n_{i}$}. 
\end{array}
\right. 
\end{equation*}
Similar to $(*)$ in the main text, the optimization problem $(\diamond)$ can be solved approximately in a reasonable amount of time by the L-BFGS-B algorithm (Byrd et al., 1995; Zhu et al., 1997). Denote the optimal value of $(\diamond)$ with sensitivity parameter $\Gamma$ as $\mathbf{\rho}^{\diamond}_{\Gamma}$. Then the corresponding worst-case quantile $\max_{\mathbf{u}\in \mathcal{U}}Q_{\widetilde{\rho}_{\mathbf{u}}, \alpha}$ equals $Q_{\mathbf{\rho}^{\diamond}_{\Gamma},\alpha}$ by Slepian's lemma. 

As discussed in the main text, to determine if we should reject the null with level $\alpha$ and a given $\Gamma$ in a sensitivity analysis, we then need to check if $\min_{\mathbf{u} \in \mathcal{U}}  \max_{k\in \{1,2\}} (t_{k}-\widetilde{\mu}_{k,\mathbf{u}})/\widetilde{\sigma}_{k, \mathbf{u}}\geq Q_{\mathbf{\rho}^{\diamond}_{\Gamma},\alpha}$ at that given $\Gamma$. Adapting a similar argument to that in Appendix D, this procedure can be implemented through setting
\begin{equation*}
\left\{
\begin{array}{ll}
      s_{i}=1/\sum_{j^{\prime}=1}^{n_{i}}\exp(\gamma u_{ij^{\prime}}) &\quad \text{for\ $i=1,\dots,I_{1}$} \\
      \widetilde{s}_{i}=1/\sum_{j^{\prime}=1}^{n_{i}}\exp(-\gamma u_{ij^{\prime}})   & \quad \text{for \ $i=I_{1}+1,\dots, I_{1}+I_{2}$}. 
\end{array}
\right. 
\end{equation*}
and solving the following quadratically constrained linear program with $M$ being a sufficiently large constant:
\begin{equation*}
     \begin{split}
        \underset{y, p_{ij}, \widetilde{p}_{ij}, s_i, \widetilde{s}_{i}, b_k}{\text{minimize}} \quad &y \quad \quad (\diamond \diamond)\\
         \text{subject to}\quad & y\geq (t_k-\widetilde{\mu}_{k,\mathbf{u}})^2- Q^{2}_{\mathbf{\rho}^{\diamond}_{\Gamma},\alpha} \widetilde{\sigma}^2_{k,\mathbf{u}}-Mb_k \quad \forall k \in \{0,1\}\\
        &\sum_{j=1}^{n_{i}} p_{ij}=1\quad\forall i=1,\dots,I_{1},\\
        &\sum_{j=1}^{n_{i}} \widetilde{p}_{ij}=1\quad\forall i=I_{1}+1,\dots,I_{1}+I_{2},\\
        &s_i\leq p_{ij}\leq \Gamma s_i\quad\forall i=1,\dots,I_{1},j=1,\dots,n_{i},\\
        &\Gamma^{-1}\widetilde{s}_i\leq \widetilde{p}_{ij}\leq \widetilde{s}_i\quad\forall i=I_{1}+1,\dots,I_{1}+I_{2},j=1,\dots,n_{i},\\
        &p_{ij}\geq0 \quad\forall i=1,\dots,I_{1},j=1,\dots,n_{i},\\
        &\widetilde{p}_{ij}\geq0 \quad\forall i=I_{1}+1,\dots,I_{1}+I_{2},j=1,\dots,n_{i},\\
        &b_k\in \{0,1\} \quad \forall k \in \{0,1\}\\
        &-Mb_k\leq t_k-\widetilde{\mu}_{k,\mathbf{u}}\leq M(1-b_k),\quad\forall k \in \{0,1\}
     \end{split}
 \end{equation*}
and checking whether the optimal value $y^{\diamond}_{\Gamma}\geq 0$.

\begin{algorithm}[H]\label{algo:adaptive test in full matching cases}
\textbf{Step 0:} Re-order the matched strata such that with $I=I_{1}+I_{2}$, we have $\sum_{j=1}^{n_{i}}Z_{ij}=1$ for $i=1,\dots,I_{1}$ and $\sum_{j=1}^{n_{i}}Z_{ij}=n_{i}-1$ for $i=I_{1}+1,\dots, I_{1}+I_{2}$ \;
\textbf{Input:} Sensitivity parameter $\Gamma$; level $\alpha$ of the sensitivity analysis; treatment assignment indicator vector $\mathbf{Z}=(Z_{11},\dots, Z_{In_{I}})^{T}$; the score vector $\mathbf{q}_{1}=(q_{111},\dots, q_{In_{I}1})^{T}$ associated with $T_{1}=\sum_{i=1}^{I}\sum_{j=1}^{n_{i}}Z_{ij}q_{ij1}$; the score vector $\mathbf{q}_{2}=(q_{112},\dots, q_{In_{I}2})^{T}$ associated with $T_{2}=\sum_{i=1}^{I}\sum_{j=1}^{n_{i}}Z_{ij}q_{ij2}$\;
\textbf{Step 1:} Solve $(\diamond)$ to get the worst-case correlation $\mathbf{\rho}^{\diamond}_{\Gamma}$ along with the corresponding worst-case quantile $Q_{\mathbf{\rho}^{\diamond}_{\Gamma},\alpha}$ \;
\textbf{Step 2:} Solve $(\diamond \diamond)$ with $Q_{\mathbf{\rho}^{\diamond}_{\Gamma},\alpha}$ obtained from Step 1, and get the corresponding optimal value $y^{\diamond}_{\Gamma}$ \;
\textbf{Output:} If $y^{\diamond}_{\Gamma}\geq 0$, we reject the null; otherwise, we fail to reject.
\caption{Two-stage programming procedure in full matching case}
\end{algorithm}

\begin{center}
{\large\bf Appendix F: Simulated Size of a Sensitivity Analysis}
\end{center}

We study the simulated size of the Mantel-Haenszel test, the aberrant rank test and the adaptive test implementing Algorithm~\ref{algo:adaptive} with the above two tests as the component tests under the aberrant null for various $\Gamma$. Specifically, we set $\beta=0$ in Models 1 and 2 or $\delta=1$ in Models 3 and 4. We set $\alpha=0.05$, $c=1$ and $m=4$ (matching with three controls), and as in Models 1-4, we consider two cases: either $F$ is a standard normal distribution or a standard Laplace distribution. Both $I=100$ and $I=1000$ matched strata are considered. Each simulated size of the Mantel-Haenszel test and the aberrant rank test is based on 20,000 replications and each simulated size of the adaptive test is based on 2,000 replications. The simulation results are presented in Table~\ref{tab:sizewithadaptive}.

\begin{table}[ht]
\centering \caption{Simulated size of the Mantel-Haenszel test, the aberrant rank test and the adaptive test implementing Algorithm~\ref{algo:adaptive} with the above two tests as the component tests under the aberrant null. We set $\alpha=0.05$, $c=1$ and $m=4$ (matching with three controls).}
\label{tab:sizewithadaptive}
\smallskip
\smallskip
\smallskip
\footnotesize
\begin{tabular}{ccccccc}
\hline
\multirow{2}{*}{\textbf{Normal}} & \multicolumn{3}{c}{$I=100$ Matched Strata} & \multicolumn{3}{c}{$I=1000$ Matched Strata}\\
& M-H test & Aberrant rank & Adaptive test & M-H test & Aberrant rank & Adaptive test \\
\hline
$\Gamma=1.00$ & 0.051 & 0.054 & 0.045 & 0.051 & 0.053 & 0.042  \\
$\Gamma=1.05$ & 0.037 & 0.040 & 0.044 & 0.018 & 0.021  & 0.020  \\
$\Gamma=1.10$ & 0.027 & 0.031 & 0.035 & 0.005 & 0.007 & 0.003   \\
\hline
\multirow{2}{*}{\textbf{Laplace}} & \multicolumn{3}{c}{$I=100$ Matched Strata} & \multicolumn{3}{c}{$I=1000$ Matched Strata}\\
& M-H test & Aberrant rank & Adaptive test & M-H test & Aberrant rank & Adaptive test \\
\hline
$\Gamma=1.00$ & 0.053 & 0.055 & 0.044 & 0.049 & 0.053 & 0.042  \\
$\Gamma=1.05$ & 0.039 & 0.043 & 0.041 & 0.020 & 0.021 & 0.024\\
$\Gamma=1.10$ & 0.029 & 0.032 & 0.028  & 0.007 & 0.009 &  0.009  \\
\hline
\end{tabular}
\end{table}

We here provide some insights into the simulation results presented in Table~\ref{tab:sizewithadaptive}. Following the previous literature (e.g., Imbens and Rosenbaum, 2005; Heng et al., 2020), the simulated size of each test in each scenario is calculated under the situation when, parallel with the favorable situation, there is no treatment effect and no hidden bias. When $\Gamma=1$, we can see that all three tests can approximately preserve a 0.05 type I error rate control with realistic sample sizes. When $\Gamma>1$, each simulated size of a sensitivity analysis with that prespecified sensitivity parameter $\Gamma$ is less than 0.05 for all three tests, and decreases as the prespecified sensitivity parameter $\Gamma$ increases. This pattern agrees with that of the power of a sensitivity analysis as shown in Table~\ref{tab:powerwithadaptive}. This is because it is more and more improbable that a sensitivity analysis conducted at a larger and larger $\Gamma$ will reject, either correctly or falsely, the null hypothesis of no treatment effect if, in fact, the treatment assignment is random. This is a general pattern that is not only shared by the above three tests, but also most of the plausible non-parametric tests used in matched observational studies (e.g., Heng et al, 2020; Section 3.5). Note that for all three tests, the simulated size drops substantially as $\Gamma$ increases when the sample size is relatively large $(I=1000)$. This is also expected since when the sample size is large, the standard error of a test statistic should be relatively small compared to the magnitude of bias and in this case the size of a test is driven more by the bias. Another way to understand this pattern is from the design sensitivity. Recall that when there is an actual treatment effect, the chance of rejecting the null hypothesis of no treatment effect (i.e., the power of a test) in a sensitivity analysis conducted with $\Gamma>\widetilde{\Gamma}$ goes to zero as the sample size $I$ goes to infinity, where the design sensitivity $\widetilde{\Gamma}$ approaches 1 as the actual treatment effect approaches zero. Therefore, if there is in fact no treatment effect, the chance of rejecting the null hypothesis (i.e., the size of a test) in a sensitivity analysis with any $\Gamma>1$ goes to zero as the sample size increases.

\begin{center}
{\large\bf Appendix G: More Details on Sections~\ref{sec:intro} and \ref{sec:real data}}
\end{center}

In Section~\ref{sec:examples}, we have mentioned that: ``Numerous causal problems share a similar structure with that of the causal determinants of malnutrition, where we care about whether a certain treatment would change the pattern of some aberrant response (e.g. stunted growth) rather than the average treatment effect over the whole population." We here provide two more examples of such type of causal problems. 
\begin{itemize}
    \item According to WHO (2008), anemia in adult men can be defined as blood hemoglobin (Hb) concentrations $<$ 130 g~/ l, and related studies typically focus on the prevalence and severity of anemia, instead of the change of average blood Hb concentrations among the whole population; see Adamu et al. (2017).
    \item A commonly used definition for low birth weight is infant's weight at birth being less than or equal to 2500 g; see Kramer (1987). Related studies are typically concerned about the low birth weight rate and the severity of low birth weight among the study population, instead of the change of average birth weight among that study population; see Paneth (1995) and Schieve et al. (2002). 
\end{itemize}

In Section~\ref{example}, we examine the causal problem of the potential effect of teenage pregnancy on stunting with children's level data from the Kenya 2003 Demographic and Health Surveys (DHS), which is available at Integrated Public Use Microdata Series (IPUMS). We here provide some motivating examples from the previous literature. According to Darteh et al. (2014), a causal effect of teenage pregnancy on stunting could arise ``as a result of the fact that young mothers require adequate nutrition to fully grow into adults; thus, they struggle with their children over the little food the mother eats." Van de Poel et al. (2007) argued that ``Children of younger mothers could be more prone to malnutrition because of physiological immaturity and social and psychological stress that come with child bearing at young age." We also summarize the total number of stunting cases among the treated individuals and controls in the matched data in Table~\ref{tab:number of stunting}.

\begin{table}[ht]
\centering \caption{The total number of stunting cases among the treated individuals and controls in the matched data.}
\label{tab:number of stunting}
\begin{tabular}{|c c c c c|} 
 \hline
   & Stunted & Non-stunted & Total & Percentage  \\ 
 \hline
 Treated & 55 & 95 & 150 & 36.7\% \\ 
 \hline
 Control & 125 & 325 & 450 & 27.8\% \\
 \hline
 Total & 180 & 420 & 600 & 30.0\% \\
 \hline
 \end{tabular}
\end{table}

In Section~\ref{sec:real data}, we report the worst-case p-values of the Mantel-Haenszel test, the aberrant rank test and the adaptive test. The worst-case p-values of the Mantel-Haenszel test and the aberrant rank test are obtained from the results of the asymptotic approximation of the worst-case p-value in Section~\ref{sec:M-H test} and Section~\ref{sec: aberrant rank test}. Especially, for the aberrant rank test, we apply the asymptotic separability algorithm proposed in Gastwirth et al. (2000) to find an approximate worst-case p-value under some mild conditions on the response vector and the treatment assignment probabilities. See Proposition 1 in Gastwirth et al. (2000) for more details about the mild conditions under which the asymptotic separability algorithm is applicable. Note that the adaptive testing procedure (\ref{test:adaptive}) is a procedure that directly determines if we should reject the null hypothesis or not and does not directly involve the worst-case p-value in the traditional sense. Instead, the worst-case p-value reported by the adaptive test in Table~\ref{tab:p-values} is the value of the prespecified $\alpha$ such that the adaptive testing procedure (\ref{test:adaptive}) barely rejects the null hypothesis under level $\alpha$. That is, we report the value $\alpha^{*}$ such that the following equality holds:
\begin{equation*}
\min \limits_{\mathbf{u} \in \mathcal{U}}\ \max\limits_{k\in \{1,2\}} \frac{t_{k}-\mu_{k,\mathbf{u}}}{\sigma_{k, \mathbf{u}}} = Q_{\mathbf{\rho}^{*}_{\Gamma},\alpha^{*}}.
\end{equation*}
It is clear that the adaptive test rejects the null hypothesis in a level-0.05 sensitivity analysis conducted with sensitivity parameter $\Gamma$ if and only if $\alpha^{*}\leq 0.05$. Like the worst-case p-value in a sensitivity analysis in the traditional sense, the value of $\alpha^{*}$ implies how strong the evidence against the null hypothesis obtained from the adaptive test is: a smaller $\alpha^{*}$ corresponds to a smaller chance that the null hypothesis holds in a sensitivity analysis.

\section*{References}%
%

\setlength{\hangindent}{12pt}
\noindent
Adamu, A. L., Crampin, A., Kayuni, N., Amberbir, A., Koole, O., Phiri, A., Nyirenda, M. and Fine, P. (2017). Prevalence and risk factors for anemia severity and type in Malawian men and women: urban and rural differences. \textit{Population Health Metrics.} \textbf{15(1)} 12.

\setlength{\hangindent}{12pt}
\noindent
Billingsley, P. (1995). \textit{Probability and Measure (Third Edition).} A Wiley-Interscience Publication, John Wiley \& Sons.

\setlength{\hangindent}{12pt}
\noindent
Byrd, R. H., Lu, P., Nocedal, J. and Zhu, C. (1995). A limited memory algorithm for bound constrained optimization. \textit{SIAM Journal on Scientific Computing.} \textbf{16(5)} 1190-1208.

\setlength{\hangindent}{12pt}
\noindent
Charalambous, C. and Conn, A. R. (1978). An efficient method to solve the minimax problem directly. \textit{SIAM Journal on Numerical Analysis.} \textbf{15(1)} 162-187.

\setlength{\hangindent}{12pt}
\noindent
Charnes, A. and Cooper, W. W. (1962). Programming with linear fractional functionals. \textit{Naval Research Logistics Quarterly.} \textbf{9(3‐4)} 181-186.

\setlength{\hangindent}{12pt}
\noindent
Darteh, E. K. M., Acquah, E. and Kumi-Kyereme, A. (2014). Correlates of stunting among children in Ghana. \textit{BMC Public Health.} \textbf{14(1)} 504.

\setlength{\hangindent}{12pt}
\noindent
Fink, G., Günther, I. and Hill, K. (2011). The effect of water and sanitation on child health: evidence from the demographic and health surveys 1986–2007. \textit{International Journal of Epidemiology.} \textbf{40(5)} 1196-1204.

\setlength{\hangindent}{12pt}
\noindent
Fogarty, C. B. and Small, D. S. (2016). Sensitivity analysis for multiple comparisons in matched observational studies through quadratically constrained linear programming. \textit{Journal of the American Statistical Association.} \textbf{111(516)} 1820-1830.

\setlength{\hangindent}{12pt}
\noindent
Gastwirth, J. L., Krieger, A. M. and Rosenbaum, P. R. (2000). Asymptotic separability in sensitivity analysis. \textit{Journal of the Royal Statistical Society: Series B (Statistical Methodology).} \textbf{62(3)} 545-555.

\setlength{\hangindent}{12pt}
\noindent
Heng, S., Small, D. S. and Rosenbaum, P. R. (2020). Finding the strength in a weak instrument in a study of cognitive outcomes produced by Catholic high schools. \textit{Journal of the Royal Statistical Society Series A.} \textbf{183(3)} 935-958.

\setlength{\hangindent}{12pt}
\noindent
Imbens, G. W. and Rosenbaum, P. R. (2005). Robust, accurate confidence intervals with a weak instrument: quarter of birth and education. \textit{Journal of the Royal Statistical Society: Series A.} \textbf{168(1)} 109-126.

\setlength{\hangindent}{12pt}
\noindent
Kramer, M. S. (1987). Determinants of low birth weight: methodological assessment and meta-analysis. \textit{Bulletin of the World Health Organization.} \textbf{65(5)} 663.

\setlength{\hangindent}{12pt}
\noindent
Paneth, N. S. (1995). The problem of low birth weight. \textit{The Future of Children.} 19-34.

\setlength{\hangindent}{12pt}
\noindent
Rosenbaum, P. R. (2004). Design sensitivity in observational studies. \textit{Biometrika.} \textbf{91(1)} 153-164.

\setlength{\hangindent}{12pt}
\noindent
Rosenbaum, P. R. (2012). Testing one hypothesis twice in observational studies. \textit{Biometrika.} \textbf{99(4)} 763-774.

\setlength{\hangindent}{12pt}
\noindent
Schieve, L. A., Meikle, S. F., Ferre, C., Peterson, H. B., Jeng, G. and Wilcox, L. S. (2002). Low and very low birth weight in infants conceived with use of assisted reproductive technology. \textit{New England Journal of Medicine.} \textbf{346(10)} 731-737.

\setlength{\hangindent}{12pt}
\noindent
Van der Vaart, A. W. (2000). \textit{Asymptotic Statistics (Vol. 3).} Cambridge University Press.

\setlength{\hangindent}{12pt}
\noindent
WHO (2008). Worldwide Prevalence of Anaemia 1993-2005: WHO Global Database on Anaemia.

\setlength{\hangindent}{12pt}
\noindent
Zhu, C., Byrd, R. H., Lu, P. and Nocedal, J. (1997). Algorithm 778: L-BFGS-B: Fortran subroutines for large-scale bound-constrained optimization. \textit{ACM Transactions on Mathematical Software (TOMS).} \textbf{23(4)} 550-560.

\end{document}